\documentclass{article}
\usepackage[margin=1in]{geometry}

\usepackage{graphicx}
\usepackage{color,framed}
\usepackage{amsmath}
\usepackage{amssymb}
\usepackage{amsfonts}
\usepackage{amsopn,amsthm}
\usepackage{bbm}
\usepackage{hyperref}

\newtheorem{theorem}{\bf Theorem}
\newtheorem{lemma}{\bf Lemma}

\newtheorem{conjecture}{Conjecture}
\newtheorem{corollary}{\bf Corollary}
\newtheorem{definition}{\bf Definition}

\newtheorem{remark}{\bf Remark}

\newcommand{\supp}{\text{supp}}

\newcommand{\E}{\mathbb{E}}

\def\Pr{\text{\rm{Pr}}}


\newcommand{\fR}{\mathfrak{R}}
\newcommand{\fS}{\mathfrak{S}}

\newcommand{\Var}{\text{\rm{Var}}}

\newcommand{\wOmega}{\widetilde{\Omega}}
\newcommand{\wPi}{\widetilde{\Pi}}

\newcommand{\PR}{\text{\rm{PR}}}
\newcommand{\CHSH}{\text{\rm{CHSH}}}

\newcommand{\womega}{\tilde{\omega}}
\newcommand{\wpi}{\tilde{\pi}}

\newcommand{\wa}{\tilde{a}}
\newcommand{\wb}{\tilde{b}}
\newcommand{\wx}{\tilde{x}}
\newcommand{\wy}{\tilde{y}}
\newcommand{\wt}{\tilde{t}}

\newcommand{\wA}{\widetilde{A}}
\newcommand{\wB}{\widetilde{B}}
\newcommand{\wX}{\widetilde{X}}
\newcommand{\wY}{\widetilde{Y}}
\newcommand{\wT}{\widetilde{T}}
\newcommand{\wS}{\widetilde{S}}

\def\Pr{\text{\rm{Pr}}}

\begin{document}

\date{}
\title{Monotone Measures for Non-Local Correlations}
\author{Salman Beigi$^1$ and Amin Gohari$^{1,2}$
\\ \emph{\small $^1$School of Mathematics, Institute for Research in Fundamental Sciences (IPM), Tehran, Iran}\\
\emph{\small $^2$Department of Electrical Engineering, Sharif University of Technology, Tehran, Iran}}

\date{January 11, 2017}

\maketitle

\begin{abstract}
Non-locality is the phenomenon of observing strong correlations among the  outcomes of local measurements of a multipartite physical system. No-signaling boxes are the abstract objects for studying non-locality, and wirings are local operations on the space of no-signaling boxes. This means that, no matter how non-local the nature is, the set of physical non-local correlations must be closed under wirings. Then, one approach to identify the non-locality of nature is to characterize closed sets of non-local correlations.
Although non-trivial examples of wirings of no-signaling boxes are known, there is no systematic way to study wirings. In particular, given a set of no-signaling boxes, we do not know a general method to prove that it is closed under wirings. 
In this paper, we propose the first general method to construct such closed sets of non-local correlations. We show that a well-known measure of correlation, called \emph{maximal correlation}, when appropriately defined for non-local correlations, is monotonically decreasing under wirings. 
 This establishes a conjecture about the impossibility of simulating isotropic boxes from each other, implying the existence of a continuum of closed sets of non-local boxes under wirings. To prove our main result, we introduce some mathematical tools that may be of independent interest:  we define a notion of \emph{maximal correlation ribbon} as a generalization of maximal correlation, and provide a connection between it and a known object called \emph{hypercontractivity ribbon}; we show that these two ribbons are monotone under wirings too.
 \end{abstract}

\section{Introduction} 
Non-locality is one of the intriguing features of nature. As predicted by quantum theory and confirmed by experiments, outcomes of measurements on subsystems of a bipartite quantum system can be correlated in a non-local way. However, there are restrictions to this non-locality, which raises the question of  which non-local  correlations are feasible in nature. The Hilbert space formalism of quantum mechanics gives some answers to this question. Nevertheless, non-locality is a more fundamental feature of nature compared to the mathematical postulates of quantum physics. So the question is whether we can characterize the limit of non-locality of nature based on more fundamental principles.

This question was first raised by Popescu and Rohrlich in~\cite{PR94} where \emph{no-signaling}, i.e., the impossibility of instantaneous communication, is proposed as a fundamental physical principle to limit non-locality. They showed that no-signaling is not strong enough to characterize non-local correlations of quantum physics.
Moreover, there are strong evidences against the possibility of realization of such highly non-local correlations in nature~\cite{vanDam}. Subsequently, other principles were proposed to characterize non-locality, see e.g.,~\cite{IC09, FSABCLA, BBLMTU, LPSW, NPA07, ABLP09, MW2009, New-A2, New-A9}.  In this paper, we provide a systematic method for studying ``closed sets of correlations," introduced as a fundamental concept in~\cite{ABLP09} to characterize non-local correlations in physical theories.

Non-local correlations are generated by locally measuring subsystems of a bipartite system (See Fig.~\ref{fig1}). Imagine that subsystems of a bipartite physical system are held by two parties, say Alice and Bob. They can decide to apply a measurement on their subsystems; these choices of measurement settings by Alice and Bob are denoted by $x$ and $y$ respectively. Letting the measurement outcomes be $a$ and $b$, in its full general case, the probability of these outcomes come from some conditional distribution
$p(ab|xy)$. We may think of this setting as a \emph{box} with two parts. Each part has an input and an output. Alice who holds the first part can choose its input $x$, and receive its output $a$. Similarly Bob who holds the second part, can choose its input $y$, and receive its output $b$.
With this notation, the no-signaling principle states that $p(a|xy)=p(a|x)$ and $p(b|xy)=p(b|y)$. Equation $p(a|xy)=p(a|x)$ for instance implies that when Alice's input $x$ to the box is specified, the distribution of the outcome $a$ does not depend on Bob's choice of input $y$.

If subsystems of the bipartite physical system are completely independent, their measurement outcomes are independent of each other. In this case, we must have $p(ab|xy)=p(a|x)p(b|y)$. These correlations as well as their convex combinations, which correspond to classically correlated subsystems via hidden variables, are called~\emph{local}. Bell's theorem and its experimental tests  suggest that correlations that are not local (non-local correlations) also exist in nature.

Important examples of no-signaling boxes include \emph{isotropic boxes}.
It is a bipartite box with binary inputs and outputs (i.e., $a, b, x, y\in \{0,1\}$) defined by:
\begin{align}\label{eq:iso-box}
\PR_{\eta}(a,b|x,y) := \begin{cases}
\frac{1+\eta}{4} \qquad\qquad \text{if } a \oplus b= xy,\\
\frac{1-\eta}{4} \qquad\qquad \text{otherwise}.\\
\end{cases}
\end{align}
The box $\PR_{\eta}$ with $0\leq \eta\leq 1/2$ is local, and with $0\leq \eta\leq 1/\sqrt{2}$ is realizable within quantum mechanics. Nonetheless, $\PR_{\eta}$ for any $0\leq \eta\leq 1$ is no-signaling. Thus a natural question is: what is the largest possible $\eta$ such that $\PR_{\eta}$ is feasible in nature~\cite{vanDam,  IC09, FSABCLA, BBLMTU, LPSW, CSS2010, OCB2012}.

\begin{figure}
\begin{center}
\includegraphics[width=2.3in]{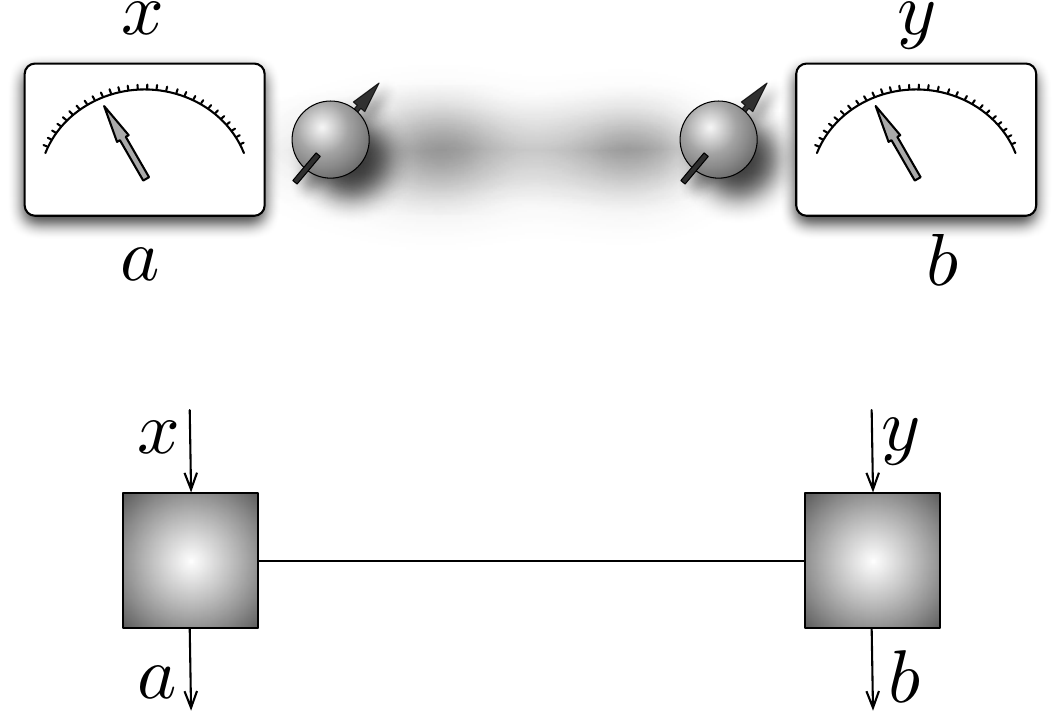}
\caption{\small Imagine that two parties share subsystems of a bipartite physical system that can be correlated. Each party can apply a measurement on her subsystem by tuning her measurement device based on some parameter, and obtain the measurement outcome. We represent the measurement parameters by $x, y$, and the measurement outcomes by $a, b$. Then in its full general case, the  outcomes $a, b$, under measurements $x, y$ are obtained with some conditional probability $p(ab|xy)$. We can think of this setting as a box with two parts, where each part has an input and an output. Given inputs $x,y$ the outputs of the box are $a, b$ with probability $p(ab|xy)$.
}
\label{fig1}
\end{center}
\end{figure}

Allcock et al.\ propose the concept of ``closed sets of correlations" to study the set of realizable non-local boxes \cite{ABLP09}. They observe that no matter how non-local nature is, the set of non-local boxes must be closed under certain local operations, called \emph{wirings}~\cite{Wiring1, Wiring2}.
 To illustrate the idea of wirings, we describe it in a simple case, involving only two boxes.

Having two boxes, each party can choose the input of the second box as a function of the output of the first box. More precisely, denoting the inputs and outputs of the two boxes by subscripts $1, 2$, Alice can first choose $x_1$ arbitrarily and use the first box to generate an output $a_1$. Then she may put $x_2=a_1$, i.e., she may {wire} the output of the first box to the input of the second box. Bob can similarly use the output of the first box to determine the input of the second box.
With these wirings, the parties generate a new box $p(a_2b_2|x_1y_1)$. That is, combining two boxes with wirings they generate a new box under local operations.

Due to the operational definition of wirings, no matter how non-local nature is, the space of physical boxes must be closed under wirings~\cite{ABLP09}. Then to characterize non-locality in nature, we may first look for subsets of no-signaling boxes that are closed under wirings. Three examples of such closed sets are the set of local correlations, the set of quantum correlations and the whole set of no-signaling boxes. 
Other examples of closed sets of no-signaling   boxes are described in~\cite{MW2009, LOP2} and in~\cite{Review14}. However, as noted in~\cite{Review14}, characterizing sets of boxes that are closed under wirings is a difficult problem in general.   A source of difficulty is that there is no limit on the number of boxes that the two parties may choose to use, and the number of possible ways to wire these boxes (defined more precisely later) grows exponentially in the number of boxes. Therefore, having even a few boxes as our resource, it is a difficult problem to discern whether a target box can be simulated via an appropriate wiring or not. It is not even known whether this problem is decidable or not~\cite{Review14}.

In~\cite{Review14} it is asked whether there exists a continuum of sets of non-local boxes that are closed under wirings. In particular, it is conjectured that:

\begin{conjecture}[\cite{Review14}] \label{conj1}
For $1/2< \eta_1<\eta_2<1$, two parties cannot use common randomness and an arbitrary number of copies of $\PR_{\eta_1}$ to generate a single copy of $\PR_{\eta_2}$ with wirings.
\end{conjecture}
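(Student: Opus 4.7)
The plan is to construct a real-valued functional $\rho$ on no-signaling boxes with three properties: (i)~it is invariant under taking copies, $\rho(P^{\otimes n})=\rho(P)$; (ii)~it is non-increasing under shared randomness and arbitrary wirings; and (iii)~the map $\eta \mapsto \rho(\PR_\eta)$ is strictly increasing on $(1/2,1]$. Once these are in hand, the conjecture is immediate: any box produced from common randomness and copies of $\PR_{\eta_1}$ has $\rho$-value at most $\rho(\PR_{\eta_1})<\rho(\PR_{\eta_2})$, and so cannot equal $\PR_{\eta_2}$.

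For the definition I would start from the classical Hirschfeld--Gebelein--R\'enyi maximal correlation $\rho(A;B)=\sup_{f,g} \E[f(A)g(B)]$ over centered, unit-variance test functions, and adapt it to the conditional setting of boxes. A natural first candidate is to place an input distribution $q(x)q(y)$ on the box, form the joint $q(x)q(y)P(ab|xy)$, compute the conditional maximal correlation $\rho(A;B|X,Y)$, and then optimize over $q$. A single scalar is unlikely to behave well under tensorization and adaptive wirings, so I would promote the definition to the \emph{maximal correlation ribbon}, a convex region in the plane paralleling the classical hypercontractivity ribbon. The payoff is that the ribbon tensorizes in the classical setting, and the connection between the two ribbons---explicitly advertised in the abstract---should transfer this tensorization property to the box setting.

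The main obstacle is proving monotonicity of the ribbon under arbitrary wirings. I would decompose a wiring protocol into elementary operations: (a)~drawing shared randomness, (b)~tensoring with an independent copy of an available box, (c)~preprocessing the inputs of a box as a function of classical data that may include outputs of previously queried boxes, and (d)~postprocessing or discarding outputs. Steps~(a), (b), and~(d) should be routine: convexity of the ribbon handles~(a), tensorization handles~(b), and one-sided pre-processing monotonicity of maximal correlation handles~(d). The delicate case is~(c) in its adaptive form, where the input to a later box depends on outputs of earlier ones; here I expect the hypercontractivity formulation---which encodes a strong data-processing inequality---to be indispensable, since purely variational scalar arguments tend to break under adaptivity.

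Finally, I would evaluate $\rho$ on $\PR_\eta$ explicitly. For every input pair $(x,y)$ the conditional joint of $(A,B)$ has uniform marginals on $\{0,1\}$ and correlation $\pm\eta$, so the classical $\rho(A;B\mid X{=}x,Y{=}y)$ equals $\eta$ uniformly in $(x,y)$. This suggests $\rho(\PR_\eta)$ reduces to a transparent function of $\eta$; the strictness in~(iii) then amounts to checking that the ribbon grows strictly with $\eta$, for instance by showing that the extremal $f,g$ in the variational problem cannot collapse to a constant on $(1/2,1]$. This strict monotonicity would yield the claimed continuum of inequivalent isotropic boxes and settle the conjecture.
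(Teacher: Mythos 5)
Your blueprint breaks precisely where the paper itself has to work hardest: the treatment of shared randomness in step~(a). You assert that ``convexity of the ribbon handles~(a),'' but the ribbon being a convex subset of $[0,1]^2$ says nothing about its behaviour under convex combinations of \emph{distributions}, and in fact maximal correlation is not monotone under mixing. The standard example: $\delta_{00}$ and $\delta_{11}$ each have degenerate (zero) maximal correlation, yet their uniform mixture $\tfrac12\delta_{00}+\tfrac12\delta_{11}$ has $\rho=1$. The same obstruction appears at the box level, so a functional that is ``non-increasing under shared randomness and arbitrary wirings'' simply does not come for free from the ribbon machinery. The paper is explicit about this: the monotonicity theorems (Theorems~\ref{thm:main-theorem} and~\ref{thm:main-theorem-MC-ribbon}) are proved \emph{without} common randomness, and the conclusion section states outright that the HC~ribbon cannot be used alone when common randomness is available.

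Because of this, the conjecture is not ``immediate'' from your three properties, and in fact the paper does not prove the full conjecture as stated. What the paper does is: first prove monotonicity of $\rho$ under wirings without common randomness (giving the conjecture on $[0,1]$ for $R=\emptyset$, Theorem~\ref{thm:isotropic-box}); then, for common randomness, exploit the \emph{linearity} of the CHSH value to argue that if $\PR_{\eta_2}=\sum_r p(r)\,q_r$, some component $q_r$ (which is a wiring without randomness) must itself have $\CHSH(q_r)\ge(1+\eta_2)/2$; and finally prove a separate bound (Lemma~\ref{lem:ribbon-rho-CHSH}) that $\CHSH(q)\ge(1+\eta)/2$ forces $\rho(q)\ge\eta$, but only when $\eta\ge 1/\sqrt 2$. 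This is exactly why the proved range of Theorem~\ref{thm:isotropic-box-com-rand} is $1/\sqrt 2\le\eta_1<\eta_2\le 1$ rather than $1/2<\eta_1<\eta_2<1$. Your proposal needs to replace the false convexity claim by some linear witness (CHSH or otherwise) to push through the common-randomness case, and should acknowledge that the paper's own argument leaves $\eta_1,\eta_2\in(1/2,1/\sqrt 2)$ open. The rest of your outline (tensorization via the ribbon, reduction of adaptive wirings to elementary steps, and the computation $\rho(\PR_\eta)=\eta$) matches the paper's strategy.
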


Although some partial results on the above conjecture have been found~\cite{Short09, Forster11, DukaricWolf}, no general method for studying wirings is known. Even with the simple structure of isotropic boxes, we do not know how to characterize the \emph{closure} of an isotropic box $\PR_\eta$ under wirings, i.e., the set of boxes that can be obtained by wirings of some copies of $\PR_\eta$.

In this paper we present a systematic method for constructing sets of non-local boxes that are closed under wirings. We introduce an invariant of non-local boxes that is monotone under wirings. Our parameter is in terms of a well-known measure of correlation called \emph{maximal correlation}~\cite{Hirschfeld, Gebelein, Renyi1, Renyi2, Witsenhausen}. 
We show that maximal correlation, when appropriately defined for non-local boxes, cannot increase under wirings, i.e., maximal correlation is a monotone under wirings. With this result, we can explicitly construct sets of non-local boxes that are closed under wirings.
Moreover, by computing maximal correlation for isotropic boxes, we prove Conjecture~\ref{conj1} for the range of parameters $1/\sqrt 2\leq \eta_1<\eta_2<1$.
Our result, in particular implies that there is a continuum of sets of no-signaling boxes that are closed under wirings.

In the rest of this section we briefly discuss the main definitions and ideas of the paper, and informally state the main results. A reader interested only in the statements of the main results, but not their proofs, may continue reading this section and ignore the rest of the paper.

\subsection{Maximal correlation} 
To get an insight into the types of measures of correlation that are useful for studying wirings, consider the similar but simpler problem of simulating a joint distribution from another. More specifically, suppose that we are given two bipartite probability distributions $p_{AB}$ and $q_{A'B'}$. The question is, given an arbitrary number copies of (samples from) $p_{AB}$ can we generate a single copy of (a sample from) $q_{A'B'}$ by only employing local operations on the $A$ parts and $B$ parts separately? This is a hard problem in general since we assume that the number of available copies of the resource distribution $p_{AB}$ is arbitrarily large. 

One may attack the above problem by showing that $q_{A'B'}$ is \emph{more correlated} that $p_{AB}$, so $p_{AB}$ cannot be transformed to $q_{A'B'}$ under local operations. This strategy depends on the \emph{measure} of correlation that we use. The point is that we are allowed to use an arbitrary number copies of $p_{AB}$. Moreover, for most measures of correlations (including mutual information), if $p_{AB}$ has some positive correlation, the correlation of $p^n_{AB}$, i.e., $n$ i.i.d.\ copies of $p_{AB}$, goes to infinity as $n$ gets larger and larger. Then this strategy fails for usual measures of correlation. However,  there is a  measure of correlation, called maximal correlation that can be used for this problem.\\

\noindent\textbf{Maximal correlation.}
Given  a bipartite probability distribution $p_{AB}$, its maximal correlation denoted by $\rho(A, B)$ is the maximum of Pearson's correlation coefficient over all functions of $A$ and $B$, i.e.,
\begin{align}\label{eq:max-correlatoin-31}
\rho(A, B):=& \max \frac{\E[(f_A-\E[f_A])(g_B-\E[g_B])]}{\Var[f_A]^{1/2}\Var[g_B]^{1/2}}
\end{align}
where $\E[\cdot]$ and $\Var[\cdot]$ are expectation value and variance respectively. Moreover, the maximum is taken over all non-constant functions $f_A, g_B$ of $A$ and $B$ respectively.

We always have $0\leq \rho(A, B)\leq 1$. Moreover, $\rho(A, B)=0$ if and only if $A$ and $B$ are independent, and $\rho(A, B)=1$ if and only if $A$ and $B$ have a \emph{common data}~\cite{Witsenhausen}.
Maximal correlation can be computed efficiently by diagonalizing a certain matrix~\cite{Kumar2, KangUlukus}.

Maximal correlation has the intriguing property that
\begin{align}\rho(AA', BB') = \max\{\rho(A, B), \rho(A', B')\},\label{eqn:amnewtensor}\end{align}
when $AB$ and $A'B'$ are independent, i.e., $p_{AA'BB'}=p_{AB}\cdot p_{A'B'}$.
This property is sometimes called the \emph{tensorization property}.
Moreover, as a measure of correlation, maximal correlation is monotone under local operations. That is, if $q_{A'B'}$ can be generated from $p_{AB}$ under local stochastic maps, then $\rho(A', B')\leq \rho(A, B)$.

Given the above two properties, we conclude that if for two distributions $p_{AB}$ and $q_{A'B'}$ we have $\rho(A, B)< \rho(A',B')$, then $q_{A'B'}$ cannot be generated locally even if an arbitrary large number of copies of $p_{AB}$ is available.  \\

\noindent\textbf{Maximal correlation for non-local boxes.}
Given a no-signaling box determined by conditional distributions $p(ab|xy)$ we define its maximal correlation by
$$\rho(A, B|X, Y):=\max_{x, y} \rho(A, B|X=x, Y=y).$$
That is, any $(X, Y)=(x, y)$ induces a distribution on $A, B$, so we may compute the maximal correlation $\rho(A, B|X=x, Y=y)$ of this conditional distribution. The maximal correlation of the box is the maximum of all these numbers. Since maximal correlation of bipartite distributions can be computed efficiently, the maximal correlation of non-local boxes can be computed efficiently too.
Our main result in this paper is that maximal correlation of non-local boxes is monotone under wirings.

Suppose that the Alice and Bob share a no-signaling box $p(ab|xy)$ and have some a priori correlation. Thus they can choose their inputs of the box according to their a priori correlation, i.e., with respect to  some distribution $q_{XY}$. With thse random choices of inputs, they obtain the joint distribution $q(abxy):=q(xy)p(ab|xy)$ on inputs and outputs of the box. A simple argument shows that (see Appendix \ref{mc-wiring2})
\begin{align}\label{eq:AX-BY-maximal-correlation}
\rho(A, B)\leq \rho(AX, BY)\leq \max\{\rho(X, Y), \rho(A, B|X, Y)\},
\end{align}
where $\rho(X, Y)$ is computed with respect to the distribution $q_{XY}$, and $\rho(A, B|X, Y)$ is the maximal correlation of the box. This means that the maximal correlation between the outputs of the box, is bounded by the maximum of the a priori maximal correlation between Alice and Bob and the maximal correlation of the box shared between them.

Now, assume that Alice and Bob are provided with $n$ boxes $p_i(a_ib_i|x_iy_i)$ for $1\leq i\leq n$. Suppose that they wire these boxes by taking the outputs of the $i$-th box, and feeding them into the input of box $i+1$. That is, after using the $i$-th box, Alice and Bob obtain outputs $a_i$ and $b_i$ respectively, and then use box $i+1$ by setting its inputs $x_{i+1}=a_i$ and $y_{i+1}=b_i$. This is a very special way of wiring of boxes and will be generalized later. With this wiring, Alice and Bob generate a box $q(a_nb_n|x_1y_1)$.
Then we  claim that
\begin{align}\label{eq:rho-an-bn-x1-y1}
\rho(A_n, B_n|X_1, Y_1)\leq \max_i \rho(A_i, B_i|X_i, Y_i).
\end{align}
Let us first prove this inequality for $n=2$. Fix the inputs of the first box $(X_1, Y_1)=(x_1, y_1)$. After using this box, Alice and Bob  put $(x_2, y_2)=(a_1, b_1)$ which are picked with probability $p_1(a_1b_1|x_1y_1)$; in other words, the distribution of the inputs of the second box is $p_1(x_2y_2|x_1y_1)=p_1(a_1b_1|x_1y_1)$. Then using~\eqref{eq:AX-BY-maximal-correlation} we have
\begin{align*}\rho(&A_2, B_2|X_1=x_1, Y_1=y_1)\leq
\\& \max\big\{  \rho(A_1, B_1|X_1=x_1, Y_1=y_1), \rho(A_2, B_2|X_2, Y_2)   \big\}.\end{align*}
Since this inequality holds for all $(x_1, y_1)$, by the definition of maximal correlation for boxes, equation~\eqref{eq:rho-an-bn-x1-y1} holds. This inequality for arbitrary $n$ is proved by the same argument and a simple induction.

Equation~\eqref{eq:rho-an-bn-x1-y1} states that by wiring of no-signaling boxes in the particular way described above, one cannot generate a box with a larger value of maximal correlation comparing to those of available boxes. That is, maximal correlation of boxes is monotone under this special type of wirings of no-signaling boxes.\\

\noindent\textbf{Wirings of no-signaling boxes.}
So far we assumed that in wirings, each party sets the input of a box to be equal to the output of the previous box. However, in general each input can be chosen as a possibly random function of \emph{all} the previous inputs and outputs, i.e., for instance Alice to determine input $x_i$ of the $i$-th box can apply a stochastic map on $x_1, a_1, \dots, x_{i-1}, a_{i-1}$. There are interesting examples of wirings of this type in the literature~\cite{vanDam, BS09}.
The above argument can be modified to prove~\eqref{eq:rho-an-bn-x1-y1} even for these types of wirings. Nevertheless,
 wirings of non-local boxes can be even more complicated.

 By the no-signaling condition, the parties can use their available boxes in different orders. Each party can choose an arbitrary ordering of boxes and wire the output of a box to the input of the next box in that order. This point is justified by the no-signaling condition, and can intuitively be verified by thinking of the local use of boxes as making measurements on subsystems of a bipartite physical system. Such measurements can be done asynchronously. See Fig.~\ref{fig2} for an example of wirings of three boxes in different orders.

\begin{figure}
\begin{center}
\includegraphics[width=3in]{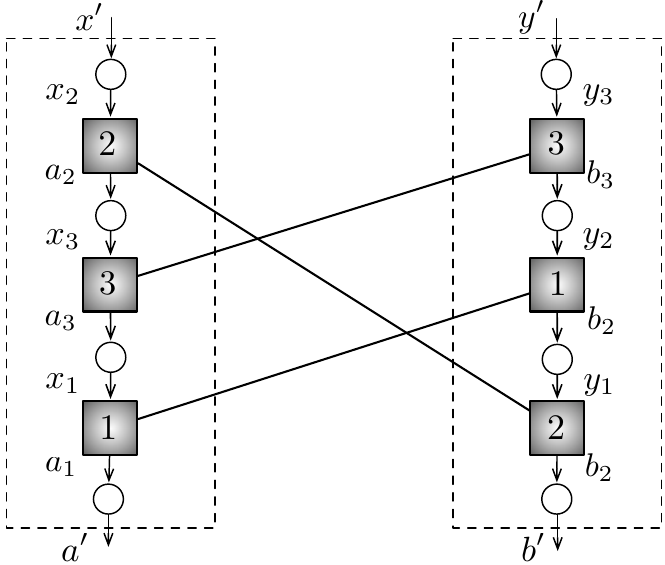}
\caption{\small The parties can wire their available boxes in a non-trivial order. The fact that boxes can be used in different orders by the parties is a consequence of the no-signaling principle.  Here the first party, Alice, uses the boxes in order $2, 3,1$. She has some input $x'$, based on which picks $x_2$, the input of box $2$. Here the circle on the top depicts the stochastic function that maps $x'$ to $x_2$. After using box $2$, Alice obtains output $a_2$. Then she has $x', x_2$ and $a_2$ in hand, and applies another stochastic maps to generate $x_3$, the input of box $3$. She continues until using all the boxes. At the end she applies the final stochastic map  to determine her final output $a'$. Bob uses the boxes in order $3,1,2$ and performs similarly.}
\label{fig2}
\end{center}
\end{figure}

A further degree of freedom in wirings is the very choice of the order of boxes used by a party, as that itself can depend on the outputs of boxes they have already used. For instance, depending on the output of the first box, a party may choose the next box to be used.
Again combining some boxes, the parties can generate a new box under local operations. A formal definition of wirings comes in Section~\ref{sec:wiringn}.

Now we may raise the question of the monotonicity of maximal correlation under wirings for these generalized types of wirings. We argued that if the two parties use boxes in the same order that is fixed, then the maximal correlation of the new box generated under wirings, is at most the maximum of the maximal correlations of the available boxes. The question is whether we can prove the same result when the boxes are used in random orders.

One of the main results of this paper is answering the above question in the affirmative. Nevertheless, our proof of this fact is not as simple as the proof in the special case of equation~\eqref{eq:rho-an-bn-x1-y1}. Our proof in this paper is an indirect one that uses two other measures of correlation with the tensorization property.

\subsection{Hypercontractivity ribbon} 

Correlation can also be measured via hypercontractivity inequalities. For 
a bipartite probability distribution $p_{AB}$, its hypercontractivity ribbon (HC~ribbon) which we denote by $\mathfrak R(p_{AB})$, is a subset of the real plane consisting of pairs $(1/\alpha, 1/\beta)\in [0,1]^2$ such that  $\E[f_A g_B]\leq \|f_A\|_{\alpha}\|g_B\|_{\beta}$ for all functions $f_A, g_B$, where $\|\cdot\|_{\alpha}, \|\cdot\|_{\beta}$ are Schatten norms. More precisely, 
\begin{align*}\fR(p_{AB}) := \big\{(1/\alpha, 1/\beta)\in [0,1]^2|\,  \forall f_A, g_B\,\, \E[f_A g_B]\leq \|f_A\|_{\alpha}\|g_B\|_{\beta}\big\}.\end{align*}

By definition $\fR(A, B)\subseteq [0,1]^2$. Moreover,
when $A$ and $B$ are independent, we have $\fS(A, B)=[0,1]^2$. That is, when $A$ and $B$ are independent, the HC~ribbon is the largest possible. Indeed, the more correlated $A$ and $B$ are, the smaller their HC~ribbon is. 
More precisely, the HC~ribbon is a measure of correlation in the following sense: if $q_{A'B'}$ can be obtained from $p_{AB}$ under local operations, then we have
\begin{align}\label{eqn:data-hc-5}
\fR(p_{AB})\subseteq \fR(q_{A'B'}).
\end{align}
Moreover, we have 
\begin{align}\mathfrak R(p^n_{AB}) = \fR(p_{AB}).\label{eqn:tensorization-HC-00}\end{align}
That is, the HC~ribbon also has the tensorization property.
Putting these together we conclude that if $\fR(p_{AB})$ is not contained in $\fR(q_{A'B'})$, the former cannot be transformed to the latter under local operations even if an arbitrary number of i.i.d.\ copies of $p_{AB}$ is available.

HC~ribbon was originally defined by Ahlswede and G\'acs in~\cite{AhlswedeGacs}, and has found applications in information theory (e.g., see \cite{KamathAnantharam, Anantharametal}). HC~ribbon in the quantum case is defined and studied in~\cite{DelgoshaBeigi}.

A remarkable equivalent characterization of the HC~ribbon was recently found by Nair~\cite{Nair}. He showed that $\fR(p_{AB})$ is indeed the set of pairs $(\lambda_1, \lambda_2)$ of non-negative numbers such that for all auxiliary random variables $U$ (i.e., all $p_{UAB}=p_{AB}\cdot p_{U|AB}$ for an arbitrary conditional distribution $p_{U|AB}$), we have
\begin{align}\label{eq:intro-nair-eq-00}
I(U; AB)\geq \lambda_1 I(U; A) + \lambda_2 I(U; B).
\end{align}
Here $I(\cdot\,; \cdot)$ is the mutual information function defined as $I(A;B)=\sum_{a,b}p(a,b)\log\frac{p(a,b)}{p(a)p(b)}$.\\

\noindent\textbf{HC~ribbon for non-local boxes.} HC~ribbon can be defined for no-signaling boxes as well. Given a box determined by conditional distributions $p(ab|xy)$ we define its HC~ribbon by
\begin{align}\label{eq:box-hc-ribbon-00}
\fR(A, B|X, Y)=\bigcap_{x,y} \fR(A, B|X=x, Y=y).
\end{align}
We show in this paper that the HC~ribbon is monotone under wirings of no-signaling boxes. That is, given $n$ boxes $p_i(a_ib_i|x_iy_i)$, if we can generate another box $q(a'b'|x'y')$ from wirings of these boxes, then we have
$$\bigcap_{i=1}^n \fR(A_i, B_i|X_i, Y_i) \subseteq \fR(A', B'|X', Y').$$
This means that wirings of no-signaling boxes can only expand the HC~ribbon. The proof of this fact is relatively involved and is one of the main technical contributions of this paper. Here we only mention that the main tool that we use in this proof is the chain rule of mutual information.

\subsection{Maximal correlation ribbon}

Let us turn back to the problem of the monotonicity of maximal correlation of boxes under wirings. 
To prove this fact, we found it easier to work with a generalization of maximal correlation, that we define for the first time and call it \emph{maximal correlation ribbon} (MC~ribbon). We remark that even though we define the MC~ribbon for our purposes here, it is of independent interest. 

The MC~ribbon is a subset of the real plane, defined as follows:
\begin{align}
\fS(p_{AB}):=\big\{&(\lambda_1, \lambda_2)\in [0,1]^2|\forall f_{AB}\,\, \Var[f]\geq \lambda_1\Var_A\E_{B|A}[f]+\lambda_2 \Var_B\E_{A|B}[f]\big\}.\label{eq:mc-rib-00}
\end{align}
Here, for instance, $\E_{B|A}[\cdot]$ denotes the conditional expectation value. Thus $\E_{B|A}[f]$ is a function of $A$, and then $\Var_A\E_{B|A}[f]$ makes sense.

The MC~ribbon has similar properties as the HC~ribbon. It is not hard to verify that $\fS(A, B)\subseteq [0,1]^2$, and that equality holds only if $A$ and $B$ are independent. Moreover, the MC~ribbon is a measure of correlation that satisfies the monotonicity and the tensorization properties in the sense of~\eqref{eqn:data-hc-5} and~\eqref{eqn:tensorization-HC-00}.\\

\noindent\textbf{MC~ribbon and HC~ribbon.}
The definition of the MC~ribbon in~\eqref{eq:mc-rib-00} is similar to Nair's characterization of the HC~ribbon given by~\eqref{eq:intro-nair-eq-00}. Indeed the MC~ribbon is defined by replacing mutual informations in~\eqref{eq:intro-nair-eq-00} by variances. 
Another contribution of this paper, which again is of independent interest, is that the HC~ribbon is always contained in the MC~ribbon:
$$\fR(A, B)\subseteq \fS(A, B).$$
The connection between the MC~ribbon and the HC~ribbon will be discussed more precisely in Subsection~\ref{subsec:MC-HC-ribbons}.\\

\noindent\textbf{MC~ribbon for non-local boxes.} The MC~ribbon for non-local boxes can be defined similarly to equation~\eqref{eq:box-hc-ribbon-00}: 
$$\fS(A, B|X, Y)=\bigcap_{x,y} \fS(A, B|X=x, Y=y).$$
We prove that, similar to the HC~ribbon, the MC~ribbon can only expand under wirings. To prove this result we use the connection between the MC~ribbon and the HC~ribbon mentioned above.\\

\noindent\textbf{MC~ribbon vs maximal correlation.}
Using all the above tools we can then prove our first claim, that maximal correlation of non-local boxes is monotone under wirings. To prove this claim we first show that maximal correlation can be characterized in terms of the MC~ribbon. More precisely, for any bipartite distribution $p_{AB}$ we have 
$$ \rho^2(A, B)=\inf\frac{1-\lambda_1}{\lambda_2},$$
where infimum is taken over all $(\lambda_1, \lambda_2)\in \fS(A, B)$ with $\lambda_2\neq 0$. With this characterization of maximal correlation, and the monotonicity of the MC~ribbon under wirings, the monotonicity of maximal correlation under wirings is immediate.

\subsection{Proof of Conjecture~\ref{conj1}}
As an application of the above results, we study Conjecture~\ref{conj1}. The maximal correlation of box $\PR_\eta$ is equal to $\eta$. Then using the fact that maximal correlation cannot be increased under wirings, Conjecture~\ref{conj1} is proved in the special case where the parties are not provided with common randomness. To study the case where common randomness is also available, we employ the notion of the CHSH value of boxes. We then establish Conjecture~\ref{conj1} in the range of parameters $1/\sqrt{2}\leq \eta_1, \eta_2\leq 1$.

\subsection{Structure of the paper} 
This paper is organized as follows. In Section~\ref{sec:HC} we define the hypercontractivity ribbon for bipartite distributions. In Section~\ref{sec:wiringn} wirings of no-signaling boxes is formally defined and some notation is developed for our later use. In Section~\ref{Sec:HC-ribbon} we define a hypercontractivity ribbon for no-signaling boxes, and show that it expands under wirings. In Section~\ref{sec:MC-ribbon} we define our new notion of maximal correlation ribbon and show that it has the tensorization property and is monotone under local operations. The connection between the HC~ribbon and the MC~ribbon is developed in this section.
In Section~\ref{sec:MC-ribbon-wiring} we prove the monotonicity of the MC~ribbon under wirings.
In Section~\ref{sec:isotropic} we study Conjecture~\ref{conj1} for isotropic boxes. Concluding remarks come in Section~\ref{sec:conclusion}. Some of the technical details and proofs are moved to the appendices.


\section{Hypercontractivity ribbon} \label{sec:HC}

Let us first fix some notations. Random variables are represented by uppercase letters (such as $A, B$), and we use lowercase letters (such as $a, b$) to denote their values. The alphabet sets of random variables, which throughout this paper are assumed to be finite, are denoted by the calligraphic letters (such as $\mathcal A, \mathcal B$). Then a probability distribution $p_A$ is determined by numbers $p_A(a)$ for $a\in \mathcal A$ which for simplicity is denoted by $p(a)=p_A(a)$.

For natural numbers $k\leq n$ we let $[k:n]=\{k, k+1,\dots, n\}$. We also denote $[n]=[1:n]$.
Moreover for simplicity of notation we use $A_{[n]}=A_1\dots A_n$, and $a_{[n]}=a_1\dots a_n$. 

Entropy of a random variable is defined as $H(A)=-\sum_{a}p(a)\log p(a)$. The conditional entropy is denoted by $H(A|B)$. We have $H(A|B)=H(AB)-H(B)$. Moreover, the condition mutual information is $I(A;B|C)=H(A|C)-H(A|BC)$. By the chain rule we have 
$$I(A;BD|C)=I(A;B|C)+I(A;D|BC).$$ 
We know that $H(A|B)$ and $I(A;B|C)$ are both non-negative. Then for instance if $I(A;BD|C)$ vanishes, then both $I(A;B|C)$ and $I(A;D|BC)$ vanish too.
We will also use the notation
\begin{align}I(A;B;C|D) &= I(A;B|D)-I(A;B|CD)\label{def:three-partite}
\\&=I(A;C|D)-I(A;C|BD)\nonumber
\\&=I(B;C|D)-I(B;C|AD)\nonumber
\end{align}

Let $A,B$ be two random variables with joint distribution $p_{AB}$ that take values in finite sets. Below we define\footnote{We gave a different definition for the hypercontractivity ribbon in the introduction. Later we will comment on the equivalence of these two definitions.} the hypercontractivity ribbon (hereafter, HC~ribbon) associated to  $p_{AB}$. 

\begin{definition}\label{def1}
The hypercontractivity ribbon of $p_{AB}$ denoted by  
$\mathfrak R(A, B)$
is the set of pairs of non-negative numbers $(\lambda_1, \lambda_2)$ such that for every conditional distribution  $p_{U|AB}$
 we have
\begin{align}\label{eq:defHCR}
 \lambda_1I(U; A) + \lambda_2I(U; B) \leq I(U; AB).
\end{align}
\end{definition}

Letting $U=A$ we observe that if $(\lambda_1, \lambda_2)\in \mathfrak R(A, B)$ then $\lambda_1\leq 1$. We similarly have $\lambda_2\leq 1$.
Therefore 
$$\fR(A, B)\subseteq [0,1]^2.$$ 
Furthermore, by data processing inequality $I(U;A), I(U;B)\leq I(U; AB)$. Then $\fR(A, B)$ includes any $(\lambda_1, \lambda_2)$ satisfying 
$0\leq \lambda_1, \lambda_2\leq 1$ and $\lambda_1+\lambda_2\leq 1$. 

The HC~ribbon is equal to  $[0,1]^2$ if and only if $A, B$ are independent. If $(1,1)\in \fR(A, B)$ then by setting $U=AB$ we find that $H(A)+H(B) \leq H(AB)$. Then by the subadditivity inequality $A, B$ are independent. On the other hand, for independent $A, B$ we have  $H(A)+ H(B) = H(AB)$ and 
 $$H(A|U) + H(B|U)\geq H(AB|U),$$
which give \eqref{eq:defHCR} for $(\lambda_1, \lambda_2)=(1, 1)$.

\begin{theorem}\label{thm:ribbon-tensor-data} The HC~ribbon has the following properties:
\begin{enumerate}
\item[\rm{(i)}]\emph{[Tensorization]} If $p_{A_1A_2B_1B_2}=p_{A_1B_1}\cdot p_{A_2B_2}$, then $$\mathfrak R(A_1A_2, B_1B_2)=\mathfrak R(A_1,B_1)\cap\mathfrak R(A_2, B_2).$$

\item[\rm{(ii)}]\emph{[Data processing]} If $p_{A_1A_2B_1B_2}=p_{A_1B_1}\cdot p_{A_2|A_1}\cdot p_{B_2|B_1}$, then 
$$ \mathfrak R(A_1,B_1)\subseteq \mathfrak R(A_2, B_2).$$
\end{enumerate}
\end{theorem}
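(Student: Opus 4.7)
My plan is to establish tensorization directly from Definition~\ref{def1} by a chain-rule argument, and then to deduce the data-processing inequality as a corollary by combining tensorization with a short ``function-monotonicity'' lemma. The key observation that makes DPI fall out of tensorization is that any stochastic channel $p_{A_2 \mid A_1}$ can be realised as a deterministic function $A_2 = f(A_1, M)$ of $A_1$ and an independent source of randomness $M$; doing this on both sides with $M \perp N$ reduces DPI to running the theorem's first part against the trivial ribbon of an independent pair.

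For tensorization, the inclusion $\fR(A_1A_2, B_1B_2) \subseteq \fR(A_1, B_1) \cap \fR(A_2, B_2)$ is the easy half. Given $(\lambda_1, \lambda_2) \in \fR(A_1A_2, B_1B_2)$ and $U$ with $p_{U \mid A_1 B_1}$, I would extend trivially to the five-tuple, using $(A_2, B_2) \perp (A_1, B_1)$ together with the fact that $U$ depends on $(A_2, B_2)$ only through $(A_1, B_1)$ to collapse $I(U; A_1 A_2) = I(U; A_1)$, $I(U; B_1 B_2) = I(U; B_1)$, and $I(U; A_1 A_2 B_1 B_2) = I(U; A_1 B_1)$. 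The ribbon at $(A_1 A_2, B_1 B_2)$ then yields exactly the ribbon condition at $(A_1, B_1)$, and the $(A_2, B_2)$-case is symmetric. The reverse inclusion is the heart of the argument: for arbitrary $U$ I split
\begin{align*}
I(U; A_1 A_2 B_1 B_2) = I(U; A_1 B_1) + I(U; A_2 B_2 \mid A_1 B_1),
\end{align*}
bound the first summand by the ribbon at $(A_1, B_1)$, and for the second use that the conditional law of $(A_2, B_2)$ given $(A_1, B_1) = (a_1, b_1)$ is still $p_{A_2 B_2}$ by independence. Hence the ribbon at $(A_2, B_2)$ applies for each fixed $(a_1, b_1)$ and averages to $\lambda_1 I(U; A_2 \mid A_1 B_1) + \lambda_2 I(U; B_2 \mid A_1 B_1) \leq I(U; A_2 B_2 \mid A_1 B_1)$. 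Summing the two bounds produces the desired inequality modulo the identification $I(U; A_2 \mid A_1 B_1) \geq I(U; A_2 \mid A_1)$ and its $B$-analogue. By \eqref{def:three-partite}, the difference $I(U; A_2 \mid A_1) - I(U; A_2 \mid A_1 B_1)$ equals $I(A_2; B_1 \mid A_1) - I(A_2; B_1 \mid U, A_1)$; the first term vanishes since $(A_2, B_2) \perp (A_1, B_1)$ and the second is non-negative, closing the argument.

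For data processing, I would apply tensorization with $(X_1, X_2, Y_1, Y_2) = (A_1, M, B_1, N)$: the joint factors as $p_{A_1 B_1} \cdot p_{M N}$, and $M \perp N$ gives $\fR(M, N) = [0, 1]^2$, so $\fR((A_1, M), (B_1, N)) = \fR(A_1, B_1)$. It then suffices to prove a function-monotonicity lemma asserting $\fR(A, B) \subseteq \fR(f(A), g(B))$ for deterministic $f, g$; specialising to $A = (A_1, M)$, $B = (B_1, N)$, $f(A_1, M) = A_2$, $g(B_1, N) = B_2$ finishes the theorem. This lemma is the main obstacle: a direct attempt to feed an auxiliary $U$ on $(A_2, B_2)$ into the ribbon at $(A_1, B_1)$ via its induced marginal is doomed because $I(U; A_1)$ and $I(U; A_2)$ are not comparable in any fixed direction, e.g.\ $U = A_2$ gives $I(U; A_2) > I(U; A_1)$ whenever the channel is noisy. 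The resolution is a Markov-chain trick: lift $U$ with $p_{U \mid f(A), g(B)}$ to $(A, B)$ by setting $p(u \mid a, b) := p(u \mid f(a), g(b))$, so that $U - (f(A), g(B)) - (A, B)$ is Markov by construction. This forces $I(U; AB) = I(U; f(A), g(B))$, while ordinary DPI gives $I(U; A) \geq I(U; f(A))$ and $I(U; B) \geq I(U; g(B))$. Chaining with the ribbon at $(A, B)$ yields $\lambda_1 I(U; f(A)) + \lambda_2 I(U; g(B)) \leq \lambda_1 I(U; A) + \lambda_2 I(U; B) \leq I(U; AB) = I(U; f(A), g(B))$, which is the ribbon condition at $(f(A), g(B))$.
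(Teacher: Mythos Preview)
Your proposal is correct and follows essentially the same route as the paper's proof: both halves of tensorization are proved by the same chain-rule decomposition (your ``condition on $(A_1,B_1)$ and apply the $(A_2,B_2)$-ribbon pointwise'' is equivalent to the paper's use of the composite auxiliary $UA_1B_1$, since $I(UA_1B_1;A_2)=I(U;A_2\mid A_1B_1)$ by independence), and data processing is obtained in both cases via the functional representation lemma, tensorization against an independent randomness pair, and the Markov-chain argument for deterministic functions. The differences are purely presentational.
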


Part (i) in particular implies that letting $A_iB_i$, $i=1, \dots, n$, be $n$ i.i.d.\ copies of $AB$ then 
$$\fR(A_{[n]},B_{[n]}) = \fR(A , B).$$
Part (ii)
means local transformations on individual random variables can only expand the HC~ribbon. Equivalently, (ii) states that more correlated distributions $p_{AB}$ should have smaller HC~ribbons. 
This is in line with the fact that HC~ribbon is the whole $[0,1]^2$ for independent random variables. On the other hand, as discussed above we always have
\begin{align}\label{eq:hc-ribbon-extreme}
\{(\lambda_1, \lambda_2)\in [0,1]^2\big|\, \lambda_1+\lambda_2\leq 1 \}\subseteq \fR(A, B).
\end{align}
Thus we expect that equality holds for highly correlated distributions $p_{AB}$. Indeed we know that the above inclusion is an equality if and only if $A$ and $B$ have a \emph{common data} (see e.g.,~\cite{KamathAnantharam} and references therein). For example if $A, B$ are binary random variables, and $p(00), p(11)>0$ and $p(01)=p(10)=0$, then we have equality in~\eqref{eq:hc-ribbon-extreme}. Similarly, if $p(01), p(10)>0$ and $p(00)=p(11)=0$, then again equality holds in~\eqref{eq:hc-ribbon-extreme}.

\begin{proof}
(i) For an arbitrary $p_{U|A_1B_1}$ we may define a joint distribution $p_{UA_1A_2B_1B_2}$ by 
\begin{align}\label{eq:ua1b1a2b1-independent-10}
p_{UA_1A_2B_1B_2}=p_{U|A_1B_1}\cdot p_{A_1B_1}\cdot p_{A_2B_2}.
\end{align}
We then have $I(U; A_1A_2B_1B_2)=I(U; A_1B_1)$. Now suppose that $(\lambda_1, \lambda_2)\in \mathfrak R(A_1A_2, B_1B_2)$. 
Thus
\begin{align}
\lambda_1 I(U;A_1) + \lambda_2I(U; B_1)&\leq \lambda_1 I(U;A_1A_2) + \lambda_2 I(U; B_1B_2)\nonumber\\
&\leq I(U; A_1A_2B_1B_2)\nonumber\\
&=I(U; A_1B_1).\label{eq:ua1b1a2b2-9}
\end{align}
Therefore, $(\lambda_1, \lambda_2)\in \mathfrak R(A_1,B_1)$. We similarly have 
$(\lambda_1, \lambda_2)\in \mathfrak R(A_2, B_2)$, and then
$\fR(A_1A_2, B_1B_2)\subseteq \mathfrak R(A_1,B_1)\cap\mathfrak R(A_2,B_2).$

To show the other inclusion let $(\lambda_1, \lambda_2)\in \mathfrak R(A_1,B_1)\cap\mathfrak R(A_2, B_2)$. Take some arbitrary $p_{U|A_1A_2B_1B_2}$. Then we have
\begin{align}
\lambda_1 I(U;A_1) + \lambda_2 I(U; B_1)\leq I(U;A_1B_1),
\label{eqnab1}
\end{align} 
and
\begin{align}
\lambda_1 I(UA_1B_1;A_2) + \lambda_2I(UA_1B_1; B_2)\leq I(UA_1B_1;A_2B_2).
\label{eqnab2}
\end{align}
Observe that
\begin{align*}I(UA_1B_1;A_2)&\geq I(UA_1;A_2)
\\&= I(U;A_2|A_1)
\\&=I(U;A_1A_2)-I(U;A_1),\end{align*}
and similarly $I(UA_1B_1;B_2)\geq I(U;B_1B_2)-I(U;B_1).$
We also have
\begin{align*} 
I(UA_1B_1;A_2B_2)&= I(U;A_2B_2|A_1B_1)
\\&= I(U;A_1A_2B_1B_2)-I(U;A_1B_1).
\end{align*}
Hence, from \eqref{eqnab1} and \eqref{eqnab2} we obtain
\begin{align}
\lambda_1 I(U;A_1A_2) + \lambda_2 I(U; B_1B_2)&\leq I(UA_1B_1;A_2B_2)+I(U; A_1B_1)\nonumber
\nonumber\\
&= I(U;A_1A_2B_1B_2).\label{eqnab33}
\end{align}
Therefore, $(\lambda_1, \lambda_2)\in\mathfrak R(A_1A_2,B_1B_2)$, and 
$$\mathfrak R(A_1,B_1)\cap\mathfrak R(A_2,B_2)\subseteq \fR(A_1A_2,B_1B_2).$$\\

\noindent
(ii) 
By repeated use of the functional representation lemma \cite[Appendix B]{elgamal}, there are random variables $F, G$ that are independent of each other and of $(A_1,B_1)$ such that $A_2$ is a function of $(A_1, F)$, and $B_2$ is a function of $(B_1, G)$. Indeed, $F$ and $G$ can be thought of as the randomness of the channels $p_{A_2|A_1}$ and $p_{B_2|B_1}$. Since $F, G$ are independent, as we discussed earlier $\mathfrak R(F,G)$ is the whole $[0,1]^2$. Therefore by part (i) we have
$$\mathfrak R(A_1F,B_1G)=\mathfrak R(A_1,B_1)\cap \mathfrak R(F, G)=\mathfrak R(A_1,B_1).$$ 
Thus without loss of generality we may assume that the randomness $F, G$ are parts of $A_1, B_1$ respectively, and that $A_2, B_2$ are functions of $A_1, B_1$ respectively. 

Suppose that $(\lambda_1, \lambda_2)\in \mathfrak R(A_1,B_1)$. We have a joint distribution 
$$p_{A_1A_2B_1B_2}=p_{A_1B_1}\cdot p_{A_2|A_1}\cdot p_{B_2|B_1}=p_{A_2B_2} \cdot p_{A_1B_1|A_2B_2}.$$ 
Take some $p_{U|A_2B_2}$. Define 
\begin{align*}p_{UA_1A_2B_1B_2}&:=p_{U|A_2B_2}\cdot p_{A_2B_2}\cdot p_{A_1B_1|A_2B_2}\\&=p_{UA_2B_2}\cdot p_{A_1B_1|A_2B_2}.\end{align*}
Note that the marginal distribution of $p_{UA_1A_2B_1B_2}$ on variables $A_1, A_2, B_1, B_2$ coincides with $p_{A_1A_2B_1B_2}$  that we started with, and 
$I(U; A_1B_1|A_2B_2)=0.$
Therefore,
\begin{align*}
I(U; A_2B_2)& =I(U;A_1A_2B_1B_2)\\
&\geq I(U; A_1B_1)\\
&\geq \lambda_1I(U; A_1) +\lambda_2I(U; B_1)\\
&\geq \lambda_1 I(U; A_2) + \lambda_2 I(U; B_2),
\end{align*}
where in the last line we use the fact that $A_2, B_2$ are functions of $A_1, B_1$ respectively. We are done.

\end{proof}

The standard definition of HC~ribbon~\cite{AhlswedeGacs}, as discussed in the introduction, is in terms of Schatten norms of functions of random variables, rather than mutual information. A remarkable recent work by Nair~\cite{Nair} finds a representation of the HC~ribbon for two random variables in terms of mutual information (that then corresponds to our
definition in the introduction). 

\begin{theorem}[\cite{Nair}] $(\lambda_1, \lambda_2)\in \fR(A, B)$ if and only if for every pair of functions $f_A:\mathcal A\rightarrow \mathbb R$ and $g_B:\mathcal B\rightarrow \mathbb R$ we have
\begin{align}\label{eq:fg-norm-2}
\E[f_Ag_B]\leq \|f_A\|_{\frac{1}{\lambda_1}}\|g_B\|_{\frac{1}{\lambda_2}},
\end{align}
where the Schatten norms are defined by $\|f_A\|_{\frac{1}{\lambda_1}}=\E\big[|f_A|^{1/\lambda_1}\big]^{\lambda_1}$ and similarly for $\|g_B\|_{\frac{1}{\lambda_2}}$.
\label{thm:nair}
\end{theorem}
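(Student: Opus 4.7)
The plan is to bridge the MI formulation in Definition \ref{def1} and the Schatten-norm inequality \eqref{eq:fg-norm-2} via the Donsker--Varadhan (Gibbs) variational representation of KL divergence, namely $D(q\|p)=\sup_{\chi}\{\E_q \chi - \log \E_p e^{\chi}\}$.

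The first step is to recast Definition \ref{def1} as the equivalent single-distribution statement: for every $q_{AB}\ll p_{AB}$,
\begin{equation*}
D(q_{AB}\|p_{AB})\;\geq\;\lambda_1 D(q_A\|p_A)+\lambda_2 D(q_B\|p_B). \qquad (\star)
\end{equation*}
The implication $(\star)\Rightarrow$Definition \ref{def1} is immediate after writing $I(U;AB)=\E_U D(p_{AB|U}\|p_{AB})$ (and similarly for $I(U;A)$ and $I(U;B)$) and averaging. For the converse, given $q_{AB}$, instantiate Definition \ref{def1} with $U$ binary, $\Pr(U{=}1)=\epsilon$, $p_{AB|U=1}=q_{AB}$, and $p_{AB|U=0}=(p_{AB}-\epsilon q_{AB})/(1-\epsilon)$, which is a valid distribution for sufficiently small $\epsilon$; a Taylor expansion of each mutual information in $\epsilon$ isolates the two sides of $(\star)$ as the leading $O(\epsilon)$ coefficients, with $O(\epsilon^2)$ remainders controlled by the quadratic behaviour of KL near the reference measure.

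The second step is to put inequality \eqref{eq:fg-norm-2} in exponential form. By $|\E[fg]|\leq \E[|f||g|]$ together with $\|\,|f|\,\|_\alpha=\|f\|_\alpha$ we may restrict to $f,g>0$; substituting $\phi=\log f^{1/\lambda_1}$ and $\psi=\log g^{1/\lambda_2}$ and taking logs recasts \eqref{eq:fg-norm-2} as
\begin{equation*}
\log \E_{p_{AB}}\!\big[e^{\lambda_1\phi(A)+\lambda_2\psi(B)}\big]\;\leq\;\lambda_1\log\E_{p_A}[e^\phi]+\lambda_2\log\E_{p_B}[e^\psi] \qquad (\star\star)
\end{equation*}
for all real-valued $\phi$ on $\mathcal A$ and $\psi$ on $\mathcal B$.

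The third and final step is the equivalence $(\star)\Leftrightarrow(\star\star)$ via Gibbs duality. For $(\star)\Rightarrow(\star\star)$, apply the Gibbs identity to each term on the right of $(\star\star)$, specialized to the marginals $q_A,q_B$ of an arbitrary $q_{AB}$; summing gives a lower bound $\E_q[\lambda_1\phi+\lambda_2\psi]-\lambda_1 D(q_A\|p_A)-\lambda_2 D(q_B\|p_B)$, which by $(\star)$ dominates $\E_q[\lambda_1\phi+\lambda_2\psi]-D(q_{AB}\|p_{AB})$; taking the sup over $q_{AB}$ and applying Gibbs once more returns the left side of $(\star\star)$. For $(\star\star)\Rightarrow(\star)$, dualize each of $\lambda_1 D(q_A\|p_A)$ and $\lambda_2 D(q_B\|p_B)$ via Gibbs, use $(\star\star)$ to merge the two log-mgf's into one, and observe that the resulting sup is over the additively separable test functions $\chi(a,b)=\lambda_1\phi(a)+\lambda_2\psi(b)$ and is therefore dominated by the unrestricted Gibbs sup, which equals $D(q_{AB}\|p_{AB})$. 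The main technical point is the Taylor bookkeeping in Step 1 (verifying that the $O(\epsilon^2)$ remainders vanish in the limit), since the rest of the chain of equivalences is essentially formal Legendre duality.
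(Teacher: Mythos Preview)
The paper does not prove this theorem; it is quoted from Nair's work~\cite{Nair} and used as a black box (no proof appears after the statement). So there is no ``paper's own proof'' to compare against, and your argument is additional content.

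Your proof is correct. The three-step chain --- Definition~\ref{def1} $\Leftrightarrow$ the single-distribution KL inequality $(\star)$, the norm inequality~\eqref{eq:fg-norm-2} $\Leftrightarrow$ the log-mgf inequality $(\star\star)$, and $(\star)\Leftrightarrow(\star\star)$ via Donsker--Varadhan --- is the natural convex-duality route to this equivalence, and each step goes through as you describe. A few small remarks that would tighten the writeup: in Step~1 the $O(\epsilon^2)$ control is immediate on finite alphabets because $p_{AB|U=0}-p_{AB}=\tfrac{\epsilon}{1-\epsilon}(p_{AB}-q_{AB})$ is uniformly $O(\epsilon)$ and KL is locally quadratic; in Step~2 the restriction to strictly positive $f,g$ is justified by approximating $f\mapsto f+\delta$ and passing to the limit, which is harmless here since $\mathcal A,\mathcal B$ are finite; and the degenerate endpoints $\lambda_1=0$ or $\lambda_2=0$ (where the substitution $f=e^{\lambda_1\phi}$ breaks down) should be handled separately --- but they are trivial, since for $\lambda_1=0$ the inequality $\E[f_Ag_B]\le\|f_A\|_\infty\|g_B\|_{1/\lambda_2}$ holds for every $p_{AB}$, and $(0,\lambda_2)\in\fR(A,B)$ always.
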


The following corollary is an immediate consequence of the above theorem and the definition of $\fR(A, B)$ given in Definition~\ref{def1}. This corollary can be directly proved using the Riesz-Thorin theorem (see \cite[Theorem 14]{DelgoshaBeigi}).

\begin{corollary} For every $p_{AB}$ the set of points $(\lambda_1, \lambda_2)\in [0,1]^2$ satisfying~\eqref{eq:fg-norm-2} for every functions $f_A, g_B$, is convex.
\end{corollary}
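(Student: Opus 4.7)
The plan is to deduce convexity directly from Definition~\ref{def1} by way of Theorem~\ref{thm:nair}. By Nair's theorem the set in the corollary coincides with $\fR(A,B)$, so it suffices to show that $\fR(A,B)$ is convex. But Definition~\ref{def1} presents $\fR(A,B)$ as the intersection of $[0,1]^2$ with the family of constraints
$$\lambda_1 I(U;A) + \lambda_2 I(U;B) \leq I(U;AB),$$
one for each conditional distribution $p_{U|AB}$. For each fixed $p_{U|AB}$ this constraint is linear in $(\lambda_1,\lambda_2)$ and therefore defines a closed half-plane in $\mathbb R^2$ (or the whole of $\mathbb R^2$ in the degenerate case $I(U;A)=I(U;B)=0$). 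Since $[0,1]^2$ is convex and an arbitrary intersection of convex sets is convex, $\fR(A,B)$ is convex. This is the shortest route given what has already been developed.

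The alternative direct Riesz-Thorin argument mentioned in the remark proceeds without invoking the mutual-information characterization. Suppose $(\lambda_1^{(0)},\lambda_2^{(0)})$ and $(\lambda_1^{(1)},\lambda_2^{(1)})$ both satisfy~\eqref{eq:fg-norm-2}, fix $\theta\in[0,1]$, and set $\lambda_i^{(\theta)}=(1-\theta)\lambda_i^{(0)}+\theta\lambda_i^{(1)}$. Applying the bilinear form of the Riesz-Thorin interpolation theorem to the bilinear functional $B(f,g)=\E[f_A g_B]$, which is a bounded map $L^{1/\lambda_1^{(j)}}\times L^{1/\lambda_2^{(j)}}\to\mathbb R$ at the two endpoints $j\in\{0,1\}$, one obtains the same bound with exponents $1/\lambda_1^{(\theta)}$ and $1/\lambda_2^{(\theta)}$, which is exactly~\eqref{eq:fg-norm-2} at $(\lambda_1^{(\theta)},\lambda_2^{(\theta)})$. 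This route is soft but requires a genuine complex-interpolation input, whereas the route through Definition~\ref{def1} is elementary.

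There is no real obstacle. The corollary is essentially a one-line observation once one recognizes that the defining conditions of $\fR(A,B)$ in Definition~\ref{def1} are affine in $(\lambda_1,\lambda_2)$; the substantive content of the statement lies entirely in Theorem~\ref{thm:nair}, which identifies the Schatten-norm set with the mutual-information set.
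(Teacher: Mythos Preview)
Your proposal is correct and follows exactly the route the paper indicates: the corollary is stated as an immediate consequence of Theorem~\ref{thm:nair} together with Definition~\ref{def1}, and you correctly spell out that the constraints in Definition~\ref{def1} are affine in $(\lambda_1,\lambda_2)$, so $\fR(A,B)$ is an intersection of half-planes. Your mention of the Riesz--Thorin alternative also matches the paper's remark that a direct proof via interpolation is available.
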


\subsection{A geometric interpretation of the HC~ribbon}
In Appendix~\ref{app:GW} we discuss a new connection between the HC~ribbon and the Gray-Wyner problem~\cite{GrayWyner} which provides an operational interpretation of the HC~ribbon. Here we briefly discuss a geometric interpretation of the HC~ribbon which will be used in the following sections. 

For every distribution $q_{AB}$ on $\mathcal A\times \mathcal B$ define 
\begin{align}\label{eq:upsilon-def}
\Upsilon(q_{AB}) = \lambda_1H(q_A) + \lambda_2 H(q_B) - H(q_{AB}),
\end{align}
where $H(\cdot)$ is the entropy function. Also, let $\widetilde \Upsilon$ be the point-wise largest function that is convex and $\widetilde \Upsilon(q_{AB})\leq \Upsilon(q_{AB})$ for every distribution $q_{AB}$. 
The function $\widetilde \Upsilon$ is sometimes called the \emph{lower convex envelope} of $\Upsilon$. The following lemma is based on known connections between lower convex envelopes and auxiliary random variables (see~\cite{nairup} for more applications).

\begin{lemma}\label{lem:geom-interpret-HC-ribbon}
For every distribution $p_{AB}$, we have $(\lambda_1, \lambda_2)\in \fR(A, B)$ if and only if $\Upsilon(p_{AB})=\widetilde\Upsilon(p_{AB})$.
\end{lemma}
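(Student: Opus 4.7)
The plan is to rewrite the defining inequality of the HC~ribbon as a statement that $\Upsilon(p_{AB})$ lies at or below the value of every convex decomposition of $p_{AB}$, which is exactly the condition for $\Upsilon$ and its lower convex envelope $\widetilde{\Upsilon}$ to agree at $p_{AB}$.

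First I would unfold the mutual informations via $I(U;A)=H(p_A)-\sum_u p(u)H(p_{A\mid U=u})$ and analogously for $I(U;B)$ and $I(U;AB)$. A direct substitution gives the identity
\begin{align*}
\lambda_1 I(U;A)+\lambda_2 I(U;B)-I(U;AB)
=\Upsilon(p_{AB})-\sum_u p(u)\,\Upsilon(p_{AB\mid U=u}).
\end{align*}
Thus $(\lambda_1,\lambda_2)\in\fR(A,B)$ is equivalent to asserting, for every conditional $p_{U\mid AB}$,
\begin{align*}
\Upsilon(p_{AB})\leq \sum_u p(u)\,\Upsilon(p_{AB\mid U=u}).
\end{align*}

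Next I would establish the bijection between auxiliary random variables and convex decompositions of $p_{AB}$. On one side, any $p_{U\mid AB}$ yields $p_{AB}=\sum_u p(u)\,p_{AB\mid U=u}$, a convex combination of distributions on $\mathcal{A}\times\mathcal{B}$. Conversely, any finite convex combination $p_{AB}=\sum_i \alpha_i q_i$ can be realized by taking $U=i$ with probability $\alpha_i$ and defining $p_{AB\mid U=i}=q_i$ (with $p_{U\mid AB}$ recovered by Bayes' rule, which is valid wherever $p_{AB}>0$ and inessential elsewhere). By Carath\'eodory's theorem applied on the probability simplex of $\mathcal{A}\times\mathcal{B}$, the lower convex envelope satisfies
\begin{align*}
\widetilde{\Upsilon}(p_{AB})=\inf_{p_{U\mid AB}}\,\sum_u p(u)\,\Upsilon(p_{AB\mid U=u}).
\end{align*}

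Combining the two paragraphs, the HC~ribbon condition becomes $\Upsilon(p_{AB})\leq \widetilde{\Upsilon}(p_{AB})$. Since by definition $\widetilde{\Upsilon}\leq \Upsilon$ pointwise, this inequality is equivalent to the equality $\Upsilon(p_{AB})=\widetilde{\Upsilon}(p_{AB})$, proving the lemma in both directions at once.

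The main obstacle is really only bookkeeping: making the correspondence between $p_{U\mid AB}$ and convex decompositions precise (including handling zero-probability atoms of $p_{AB}$, and invoking Carath\'eodory to restrict to finite auxiliary alphabets so that the infimum in the convex-envelope representation is attained). The information-theoretic identity and the definition of the convex envelope do all the real work, and the finite-alphabet assumption stated earlier in the paper keeps continuity and attainment issues under control.
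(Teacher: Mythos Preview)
Your proof is correct and follows essentially the same approach as the paper: both hinge on the identity $\lambda_1 I(U;A)+\lambda_2 I(U;B)-I(U;AB)=\Upsilon(p_{AB})-\E_U[\Upsilon(p_{AB\mid U})]$ and the characterization of $\widetilde\Upsilon$ via convex decompositions. Your version is in fact more explicit in the converse direction, where the paper simply says ``one can verify''; you spell out the bijection between auxiliary variables and convex decompositions and invoke Carath\'eodory, which is exactly what that verification amounts to.
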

\begin{proof}
For a given $p_{U|AB}$, by the convexity of $\tilde \Upsilon$ we have
\begin{align*}
\E_U[\Upsilon(p_{AB|U})] &\geq \E_U[\widetilde\Upsilon(p_{AB|U})] \\&\geq \widetilde \Upsilon(\E_U[p_{AB|U}]) \\&= \widetilde \Upsilon(p_{AB}).
\end{align*} 
Then $\widetilde \Upsilon(p_{AB})= \Upsilon(p_{AB})$ implies  
\begin{align}\label{eq:upsilon-pabu-0}
\E_U[\Upsilon(p_{AB|U})] \geq \Upsilon(p_{AB}),
\end{align}
which is equivalent to 
$$I(U;AB)\geq \lambda_1I(U; A) + \lambda_2I(U; B).$$
Therefore, $(\lambda_1, \lambda_2)\in \fR(A,B)$. 

Conversely, $(\lambda_1, \lambda_2)\in \fR(A,B)$ implies that~\eqref{eq:upsilon-pabu-0} holds for every $p_{U|AB}$, and one can verify that this gives $\Upsilon(p_{AB})=\widetilde \Upsilon(p_{AB})$.
\end{proof}

\section{Wirings of no-signaling boxes} \label{sec:wiringn}
As discussed in the introduction, a non-local box (correlation) is a collection of conditional distributions $p(ab|xy)$. Here $x, y$ are the inputs of the box and $a, b$ are its outputs and $p(ab|xy)$ is the probability of obtaining these outputs.
This quantity $p(ab|xy)$ can be thought of as the probability of obtaining outcomes $a,b$ when we measure subsystems of a bipartite physical system with measurement settings $x, y$ respectively. 
A box has the no-signaling condition if we have
$$p(a|xy) = p(a|x), \qquad \qquad p(b|xy)= p(b|y).$$
That is, the marginal distribution $p(a|xy)$ is independent of $y$, and the marginal distribution $p(b|xy)$ is independent of $x$. Hereafter all the boxes in this paper are assumed to have the no-signaling condition.

Suppose that two parties, say Alice and Bob, are provided with $n$ no-signaling boxes. We denote the inputs of the $i$-th box by $X_i, Y_i$ and its outputs by $A_i, B_i$. Then the $i$-th box is determined by a no-signaling correlation $p_i(a_ib_i|x_iy_i)$. As before, we may think of these $n$ boxes as $n$ \emph{independent} bipartite physical systems whose first subsystem is given to Alice and whose second subsystem is given to Bob. Then each party has $n$ subsystems in hand, and may measure these subsystems in some arbitrary order (independent of the order of the other party).  Each party may choose the input of a box as a (probably random) function of the inputs and outputs of the previous boxes. In fact, the box that is going to be used in each step could itself be chosen as a function of previous inputs and outputs.
With this process the parties end up with a new no-signaling box. Such a process is called a wiring. An example of wirings is shown in Fig.~\ref{fig2}.

Let us describe wirings in a more formal way.  Here we assume that the two parties do not have access to common randomness.  Suppose that two parties want to use the above $n$ boxes as a resource to simulate another box $p(a'b'|x'y')$. Thus Alice is given $x'$ and is asked to output $a'$, and Bob is given $y'$ and is asked to output $b'$ whose  joint distribution is $p(a'b'|x' y')$. 
Alice is going to use the boxes in some order which as explained above can be random itself. Let us denote the corresponding random variables by $\Pi_1, \dots, \Pi_n$. That is, $(\Pi_1, \dots, \Pi_n)$ is a random permutation of $[n]$, and Alice uses box $i$ in her $\Pi_i$-th action. Let us denote the inverse permutation of $(\Pi_1, \dots, \Pi_n)$ by $(\widetilde \Pi_1, \dots, \widetilde \Pi_n)$, i.e.,
$$\wPi_{\Pi_i}=i,  \qquad \Pi_{\wPi_i}=i. $$
Then Alice first uses box $\wPi_1$, and then uses box $\wPi_2$ and so on.

Now let us describe Alice's $j$-th action. Before the $j$-th action Alice has used boxes $\wPi_1=\wpi_1, \dots, \wPi_{j-1}=\wpi_{j-1}$ with inputs $X_{\wPi_1}=x_{\wpi_1}, \dots, X_{\wPi_{j-1}}=x_{\wpi_{j-1}}$ and has observed outputs $A_{\wPi_1}=a_{\wpi_1}, \dots, A_{\wPi_{j-1}}=a_{\wpi_{j-1}}$. To simplify our notation let us define
$$\widetilde X_i:= X_{\wPi_i}, \qquad \qquad \widetilde A_i:=A_{\wPi_i}.$$
That is $\widetilde X_i, \widetilde A_i$ are the input and output of the box that Alice uses in her $i$-th action. By this notation before her $j$-th action Alice has used boxes $\wPi_{[j-1]}=\wpi_{[j-1]}$ with inputs $\widetilde X_{[j-1]}=\tilde x_{[j-1]}$ and has observed outputs $\widetilde A_{[j-1]} = \tilde a_{[j-1]}$. She also has $x'$ from the beginning. Then she chooses the next box and its input according to some stochastic map 
\begin{align}\label{eq:encoding-alice}
q\big(\wpi_{j}\tilde x_j\,\big|\,\wpi_{[j-1]}\tilde a_{[j-1]}\tilde x_{[j-1]} x'\big).
\end{align}
She puts $\tilde x_{j}= x_{\wpi_j}$ as the input of box $\wpi_j$ and observes $\tilde a_j=a_{\wpi_j}$ as the output. She continues until using all the $n$ boxes. A summary of the definition of the random variables defined above is given in Table \ref{table:summary}.

The actions of Bob are described similarly. We denote the random order under which Bob uses the boxes by $\wOmega_1, \dots, \wOmega_n$ and its inverse permutation by $ \Omega_1, \dots, \Omega_n$, i.e., 
$$\Omega_{\widetilde \Omega_i}=i, \qquad\qquad \wOmega_{ \Omega_i}=i,$$
and Bob uses box $i$ in his $\Omega_i$-th action.
We use
$$\widetilde Y_i:= Y_{\wOmega_i}, \qquad \qquad \widetilde B_i:=B_{\wOmega_i}.$$
Then before his $j$-th action, Bob has used boxes $\wOmega_{[j-1]}=\womega_{[j-1]}$, with inputs $\widetilde Y_{[j-1]}=\wy_{[j-1]}$ and has observed outputs $\widetilde B_{[j-1]}=\wb_{[j-1]}$. He also has $y'$ from the beginning. Then he uses some stochastic map
\begin{align}\label{eq:encoding-bob}
q\big(\womega_{j}\wy_j\,\big|\,\womega_{[j-1]}\wb_{[j-1]}\wy_{[j-1]} y'\big),
\end{align}
to choose $\wOmega_j=\womega_j$ and $\wy_j=y_{\womega_j}$. He puts $y_{\womega_j}$ as the input of box $\womega_j$ and receives output $\wb_j=b_{\wpi_j}$. He continues until using all the boxes. 

In~\eqref{eq:encoding-alice} and \eqref{eq:encoding-bob} we use $q(\cdot| \cdot)$ for stochastic maps of both Alice and Bob. This however should not cause any confusion since whether $q(\cdot | \cdot)$ corresponds to Alice or Bob's action should be clear from its arguments. 

We need to simplify our notation even further. Let us denote $T_i$ be the \emph{transcript} of Alice (whatever she has) before using the $i$-th box (before her $\wPi_i$-th action), i.e.,
\begin{align*}T_i&:=\wPi_1\dots \wPi_{\Pi_i-1} X_{\wPi_1}\dots X_{\wPi_{\Pi_i-1}} A_{\wPi_1}\dots A_{\wPi_{\Pi_i-1}}\\&=\wPi_{[\Pi_i-1]}\wX_{[\Pi_i-1]} \wA_{[\Pi_i-1]}.\end{align*}
We also use $\wT_i$ for the transcript of Alice before her $i$-th action, i.e.,
\begin{align*}\wT_i&:=T_{\wPi_i} = \wPi_1\dots \wPi_{i-1} X_{\wPi_1}\dots X_{\wPi_{i-1}} A_{\wPi_1}\dots A_{\wPi_{i-1}}\\&=\wPi_{[i-1]}\wX_{[i-1]}\wA_{[i-1]}.\end{align*}
We define $S_i$ and $\wS_i$ similarly for Bob, i.e.,
$$S_i:= \wOmega_{[\Omega_i-1]}\wY_{[\Omega_i-1]} \wB_{[\Omega_i-1]},  \qquad \wS_i:=\wOmega_{[i-1]}\wY_{[i-1]}\wB_{[i-1]}.$$
With these notations Alice before using the $i$-th box has $T_i=t_i$ and $x'$ in hand and with probability $q(i x_i|t_i x')$ chooses the $i$-th box for her next action and puts $x_i$ in this box. Similarly before using the $i$-th box, Bob has $s_i$ and $y'$ and with probability $q(i y_i|s_i y')$ chooses box $i$ for his next action and puts $y_i$ as its input. As a result, the joint probability of inputs and outputs of the boxes and the orderings of Alice and Bob is 
\begin{align}\label{eq:p-a-x-m-n-y}
&p\big(a_{[n]}b_{[n]}x_{[n]}y_{[n]} \pi_{[n]}\omega_{[n]}\,\big|\,x'y'\big)=\prod_{i=1}^n\bigg[ p_i\big(a_ib_i\,\big|\,x_iy_i\big) q\big(ix_i\,\big|\,t_i x'\big)q\big(iy_i\,\big|\,s_i y'\big)\bigg].
\end{align}
An extended discussion of how to arrive at this form in the above equation can be found  in the beginning of Appendix~\ref{app:lemm-proof}.

At the end of wirings Alice applies the stochastic map $q(a'|a_{[n]}x_{[n]}\pi_{[n]} x')$ to determine her final output and Bob applies $q(b'|b_{[n]}y_{[n]} \omega_{[n]}y')$ to determine his final output.

\begin{table*}
\begin{center}
\begin{small}
\begin{tabular}{|c|c|c|}
\hline
Notation & Description   & { Corresponding variable of Bob}\\
\hline
$\Pi_i$ & Alice uses the $i$-th box in her $\Pi_i$-th action& $\Omega_i$\\ \hline   
 & Index of the box Alice uses in her $i$-th action: & \\
$\wPi_i$&$~\Pi_{\wPi_i}=i, \qquad \wPi_{\Pi_i}=i$&$\wOmega_i$\\
\hline
$X_i$ & Alice's input of the $i$-th box & $Y_i$\\\hline
$A_i$ & Alice's output of the $i$-th box & $B_i$\\\hline
& Alice's input in her $i$-th action:  & \\
$\wX_i$ &$~\wX_i=X_{\wPi_i}$&$\wY_i$\\
\hline
 & Alice's output in her $i$-th action: & \\
$\wA_i$&$~\wA_i=A_{\wPi_i}$&$\wB_i$\\
\hline
$T_i$ & Alice's transcript before using the $i$-th box & $S_i$\\\hline
 & Alice's transcript before her $i$-th action: & \\
$\wT_i$&$~\wT_i=T_{\wPi_i}$& $\wS_i$\\
\hline
$T_i^e$ & $T_iX_i\Pi_i$ & $S_i^e$
\\\hline
\end{tabular}
\end{small}
\end{center}
\caption{Alice and Bob use the $n$ no-signaling boxes in different (probably random) orders.  This table is a summary of notations used to describe the random variables associated with these orders as well as the inputs and the outputs of the boxes. Here by Alice's transcript we mean whatever Alice has observed up to a certain point. 
}
\label{table:summary}
\end{table*}%

\begin{lemma}\label{lem:dist-wiring} For any given $x', y'$ the followings hold.
\begin{enumerate}
\item[\rm{(i)}] $I\big(A_iB_i; T_iS_i\Pi_i\Omega_i\big|X_iY_i, x'y'\big)=0$.

\item[\rm{(ii)}] $I\big(A_i; S_i Y_i \Omega_i\big|\, T_i X_i \Pi_i, x'y'\big)=0$ and $I\big(B_i; T_i X_i \Pi_i\big|\, S_i Y_i \Omega_i, x'y'\big)=0$.

\item[\rm{(iii)}] $I\big(A_i;  B_{[n]}Y_{[n]}\Omega_{[n]}  \big|\, T_iX_i \Pi_i B_i Y_i\Omega_i,x'y'\big)=0$ and $I\big(B_i;  A_{[n]}X_{[n]}\Pi_{[n]}  \big|\, T_iA_iX_i\Pi_i S_i  Y_i\Omega_i,x'y'\big)=0.$

\item[\rm{(iv)}]  $I\big(\wX_i\wPi_i; B_{[n]}Y_{[n]}\Omega_{[n]}\big|\, \wT_i,x'y'\big)=0$ and  $I\big(\wY_i\wOmega_i; A_{[n]}X_{[n]}\Pi_{[n]}\big|\, \wS_i,x'y'\big)=0$.
\end{enumerate}
\end{lemma}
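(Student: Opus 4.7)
Each of the four statements will follow from the product factorization~\eqref{eq:p-a-x-m-n-y} together with the no-signaling property of each box, which gives $\sum_{a_i} p_i(a_ib_i|x_iy_i) = p_i(b_i|y_i)$ and $\sum_{b_i} p_i(a_ib_i|x_iy_i) = p_i(a_i|x_i)$. My plan is to first establish a \emph{telescoping marginalization lemma}: if one integrates out one party's ``future'' variables (that party's actions and outputs at positions strictly after her or his use of box $i$) from the joint distribution, the resulting marginal has a clean product form in which each summed-out box $j$ contributes exactly one factor --- $p_j(b_j|y_j)$ for Alice's future, $p_j(a_j|x_j)$ for Bob's future --- via no-signaling, while every summed-out stochastic map collapses to $1$.

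The telescoping lemma will be proved by reverse induction on Alice's action positions $k = n, n-1, \dots$ (and symmetrically for Bob). At position $k$, I will first sum over $\wA_k$: by no-signaling, $\sum_{\wA_k} p_{\wPi_k}(\wA_k b_{\wPi_k} | \wX_k y_{\wPi_k}) = p_{\wPi_k}(b_{\wPi_k}|y_{\wPi_k})$, a quantity independent of $\wX_k$. Then $\sum_{\wX_k} q(\wPi_k \wX_k | \wT_k x') = q(\wPi_k | \wT_k x')$, and finally summing $\wPi_k$ over the boxes not yet used by Alice gives $1$, because the product of no-signaling factors $\prod_{j} p_j(b_j|y_j)$ over the remaining boxes depends only on the \emph{set} of remaining boxes and not on the order in which Alice would have used them. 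This order-independence is the critical point that lets the dependence on $\wT_k$ vanish at each inductive step.

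Given the marginalization lemma, each part follows with little extra work. For~(iv), applying the lemma on Alice's side shows that the only factor of the resulting marginal depending on $(\wPi_i,\wX_i)$ is $q(\wPi_i \wX_i | \wT_i x')$, so normalization yields the statement. For~(i), applying the lemma on both sides will isolate $p_i(a_ib_i|x_iy_i)$ as the unique factor involving $(a_i,b_i)$; normalizing gives $p(a_ib_i|t_is_i\pi_i\omega_ix_iy_i, x'y') = p_i(a_ib_i|x_iy_i)$, independent of $t_is_i\pi_i\omega_i$. Statement~(ii) will follow by summing $b_i$ in~(i) and invoking no-signaling, yielding $p(a_i|t_is_i\pi_i\omega_ix_iy_i, x'y') = p_i(a_i|x_i)$; an analogous marginalization gives $p(a_i|t_ix_i\pi_i, x'y') = p_i(a_i|x_i)$, so the two conditionals agree. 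For~(iii), a parallel marginalization --- now also summing over $S_i$ and Bob's variables outside box $i$ --- will yield $p(a_ib_i|t_ix_i\pi_iy_i\omega_i, x'y') = p_i(a_ib_i|x_iy_i)$, whence Bayes gives $p(a_i|t_ix_i\pi_ib_iy_i\omega_i, x'y') = p_i(a_i|b_ix_iy_i)$; this matches the left-hand conditional distribution (obtained by applying a strengthened form of~(i) in which all of Bob's variables are added to the conditioning, which the same telescoping argument still supports).

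The main obstacle will be the bookkeeping inside the telescoping argument: one must track carefully which factors survive at each step and confirm that no residual dependence on the summed-out transcripts $\wT_k$ leaks through the encoder factors of later positions. The order-independence of the product $\prod_j p_j(b_j|y_j)$ over the remaining boxes is precisely what makes $\sum_{\wPi_k} q(\wPi_k|\wT_k x') = 1$ usable at each step, so once the lemma is in place, the four claims reduce to routine computations.
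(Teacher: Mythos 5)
Your proposal is correct and takes essentially the same route as the paper's Appendix~C proof: both rest on the product factorization~\eqref{eq:p-a-x-m-n-y} and the no-signaling property of each box, and both reduce the four claims to explicit marginal formulas obtained by summing out one party's ``future.'' What you package as a reverse-induction telescoping lemma is precisely the computation the paper performs when it writes down the marginals in equations~(26)--(28) (the distributions obtained by imagining Alice and/or Bob stop after using box~$i$ or after $i$ actions, with the remaining boxes contributing only their no-signaling single-party marginals). You have also correctly identified the one non-obvious point that makes the sum over the ordering variable $\wPi_k$ collapse, namely that the leftover product $\prod_j p_j(b_j|y_j)$ depends only on the \emph{set} of remaining boxes and not on the order Alice would have used them; this is exactly what lets $\sum_{\wpi_k} q(\wpi_k|\wt_k x') = 1$ close the induction. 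Your phrasing around part~(iii) (``a strengthened form of~(i)'') is slightly loose --- what is actually needed, and what the telescoping argument delivers, is the marginal in which Alice stops after box~$i$ while Bob uses all $n$ boxes, from which one reads off $p(a_i\,|\,t_ix_i\pi_i b_{[n]}y_{[n]}\omega_{[n]}) = p_i(a_i\,|\,b_ix_iy_i)$ directly --- but the underlying mechanism is identical to the paper's and the plan is sound.
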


Here we give an informal intuitive proof of this lemma. For a full detailed proof see Appendix~\ref{app:lemm-proof}.

\begin{proof}[Informal proof] (i) holds simply because given the inputs of the $i$-th box, its outputs are independent of the transcripts  of Alice and Bob when they reach this box. 
(ii) is a consequence of (i) and the no-signaling condition.
(iii) holds because when (say) Alice uses the $i$-th box, her output, if not conditioned on her future observations, depends only on the inputs of the $i$-th box and Bob's output of this box.
(iv) is a simple consequence of the fact that $\wX_i\wPi_i$ and $\wY_i\wOmega_i$ are generated locally without using the boxes.

\end{proof}

We will frequently use the following lemma.
\begin{lemma}\label{lemma:amnm} For auxiliary random variables $U$ and $V$ we have
\begin{align}
&I\big(U;A_{[n]}X_{[n]}\Pi_{[n]}|V, x'y'\big)=\sum_{i=1}^n\bigg[I\big(U;\wX_{i}\wPi_i|\wT_iV, x'y'\big)+I\big(U;A_{i}|T^e_iV, x'y'\big)\bigg], \label{eq:tyab5}
\\&H\big(A_{[n]}X_{[n]}\Pi_{[n]}|V, x'y'\big)=\sum_{i=1}^n\bigg[H\big(\wX_{i}\wPi_i|\wT_iV, x'y'\big)+H\big(A_{i}|T^e_iV, x'y'\big)\bigg],\label{eq:tyab5new}
\end{align}
where $T_i^{e}:=T_i X_{i} \Pi_{i}.$ Similar equations hold for Bob's random variables too.
\end{lemma}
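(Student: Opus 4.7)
The plan is to exploit the fact that the tuples $(A_{[n]}, X_{[n]}, \Pi_{[n]})$ and $(\wA_{[n]}, \wX_{[n]}, \wPi_{[n]})$ encode the same information: the permutation $\Pi$ and its inverse $\wPi$ determine one another, and once either permutation is fixed the reordering $\wX_i = X_{\wPi_i}$, $\wA_i = A_{\wPi_i}$ is a bijective relabeling. Hence conditioned on any $V, x', y'$,
$$H(A_{[n]}X_{[n]}\Pi_{[n]}|V, x'y') = H(\wA_{[n]}\wX_{[n]}\wPi_{[n]}|V, x'y').$$
This swap is the key move; all subsequent manipulations will be done on the tilde-variables, since they are naturally indexed by the chronological order in which Alice acts.

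Next I would apply the chain rule for entropy in time order. Since $\wT_i = \wPi_{[i-1]}\wX_{[i-1]}\wA_{[i-1]}$, we obtain
\begin{align*}
H(\wA_{[n]}\wX_{[n]}\wPi_{[n]}|V, x'y')
&= \sum_{i=1}^{n} H(\wA_i\wX_i\wPi_i \mid \wT_i, V, x'y')\\
&= \sum_{i=1}^{n}\Bigl[ H(\wX_i\wPi_i \mid \wT_i, V, x'y') + H(\wA_i \mid \wT_i\wX_i\wPi_i, V, x'y')\Bigr].
\end{align*}
The first summand already matches the first half of \eqref{eq:tyab5new}. For the second summand, condition on the event $\wPi_i = j$ and use $\wT_i = T_{\wPi_i}$, $\wX_i = X_{\wPi_i}$, $\wA_i = A_{\wPi_i}$, together with the equivalence $\wPi_i = j \iff \Pi_j = i$, to get
$$H(\wA_i \mid \wT_i\wX_i\wPi_i, V, x'y') = \sum_{j} \Pr[\Pi_j=i]\, H(A_j \mid T_jX_j, V, x'y', \Pi_j=i).$$
Summing over $i$ and swapping the order of summation collapses to
$$\sum_{i=1}^n H(\wA_i \mid \wT_i\wX_i\wPi_i, V, x'y') = \sum_{j=1}^n H(A_j \mid T_jX_j\Pi_j, V, x'y') = \sum_{j=1}^n H(A_j \mid T_j^e, V, x'y'),$$
which is exactly the second half of \eqref{eq:tyab5new}.

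Finally, the mutual information identity \eqref{eq:tyab5} follows by applying the same argument to $H(A_{[n]}X_{[n]}\Pi_{[n]} \mid UV, x'y')$ and subtracting from \eqref{eq:tyab5new}, using the definition $I(U;Z|W) = H(Z|W) - H(Z|UW)$ term by term. The only real subtlety in this proof is the reindexing $i \leftrightarrow j$ via the inverse permutation in the second summand; everything else is the chain rule plus the tautology that $\Pi$ and $\wPi$ are mutually determined. Once that bookkeeping is done carefully, both identities drop out simultaneously, and the analogous statements for Bob's variables are obtained by repeating the argument with $(S_i, \wS_i, Y_i, \wY_i, B_i, \wB_i, \Omega_i, \wOmega_i)$ in place of the corresponding Alice variables.
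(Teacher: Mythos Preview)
Your argument is correct and follows essentially the same route as the paper: swap to the tilde variables via the bijection $\Pi\leftrightarrow\wPi$, apply the chain rule in chronological order, and reindex the second summand via $\wPi_i=j\iff\Pi_j=i$. The only cosmetic difference is that the paper proves \eqref{eq:tyab5} first and then obtains \eqref{eq:tyab5new} by setting $U=A_{[n]}X_{[n]}\Pi_{[n]}$, whereas you prove \eqref{eq:tyab5new} first and recover \eqref{eq:tyab5} by subtracting the entropy identity with $V$ replaced by $UV$; both orders are immediate from the other.
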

This lemma follows from repeated use of the chain rule for conditional mutual information and its proof is given in Appendix~\ref{app:lemma-amnnproof}.

\section{HC~ribbon for no-signaling boxes}\label{Sec:HC-ribbon}

In this section we define the HC~ribbon for no-signaling boxes, and show that it is well-behaved under wirings.

\begin{definition}\label{definition:HC-no-signaling}
Given a no-signaling box $p(ab|xy)$, we define its HC~ribbon to be the intersection of the HC~ribbons of its outputs conditioned on all possible inputs, i.e.,
$$\mathfrak R(A, B|X, Y):=\bigcap_{x,y}\mathfrak R(A, B|X=x, Y=y).$$
\end{definition}

Let us as an example, compute the HC~ribbon of the perfect PR box (which we denoted by $\PR_1$). For any $x,y\in \{0,1\}$, $\Pr_1(a, b|x,y)=1/2$ iff $a\oplus b=xy$. Then by the discussion before the proof of Theorem~\ref{thm:ribbon-tensor-data}, we have that for any $x,y
\in\{0,1\}$,
$$\fR(\Pr_1(a, b|x, y)) = \big\{(\lambda_1, \lambda_2)\in [0,1]^2\big|\, \lambda_1+\lambda_2\leq 1\big\}.$$
As a result $\fR(\Pr_1)$ which is the intersection of the above four HC~ribbons, is equal to 
\begin{align}\label{eq:ribbon-perfect-pr}
\fR(\Pr_1)=\big\{(\lambda_1, \lambda_2)\in [0,1]^2\big|\, \lambda_1+\lambda_2\leq 1\big\}.
\end{align}

We can now state the main theorem of this section.

\begin{theorem}\label{thm:main-theorem}
Suppose that a no-signaling box $p(a'b'| x'y')$ can be generated from $n$ no-signaling boxes $p_i(a_ib_i| x_i y_i)$ where $i\in [n]$, under wirings. Then we have
\begin{align}\label{eq:mon-wiring-ribbon}
\bigcap_{i=1}^n \fR(A_i,B_i|X_i, Y_i) \subseteq \fR(A',B'|X',Y').
\end{align}
\end{theorem}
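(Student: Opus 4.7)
Fix $(x', y')$ and suppose $(\lambda_1, \lambda_2) \in \bigcap_i \fR(A_i, B_i \mid X_i, Y_i)$. Set $L := A_{[n]}X_{[n]}\Pi_{[n]}$ and $R := B_{[n]}Y_{[n]}\Omega_{[n]}$. Since $A'$ is a local stochastic function of $(L, x')$ (and is conditionally independent of $R$ given $L$) and similarly $B'$ of $(R, y')$, the data-processing property of the HC~ribbon (Theorem~\ref{thm:ribbon-tensor-data}(ii)) reduces the theorem to proving $(\lambda_1, \lambda_2) \in \fR(L, R \mid X' = x', Y' = y')$; that is, for every auxiliary $U$,
\[
I(U; LR \mid x'y') \;\geq\; \lambda_1 I(U; L \mid x'y') + \lambda_2 I(U; R \mid x'y').
\]
In what follows the conditioning on $x'y'$ is suppressed. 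Using $I(U; LR) = I(U;L) + I(U;R) - I(U;L;R)$, this is equivalent to the three-way form $I(U; L; R) \leq (1-\lambda_1) I(U; L) + (1-\lambda_2) I(U; R)$.

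Applying Lemma~\ref{lemma:amnm} first with $V$ empty and then with $V=R$, and subtracting, decomposes the left-hand side as
\[
I(U; L; R) \;=\; \sum_{i=1}^n I(U;\wX_i\wPi_i;R \mid \wT_i) + \sum_{i=1}^n I(U; A_i; R \mid T_i^e).
\]
Lemma~\ref{lem:dist-wiring}(iv) gives $I(\wX_i\wPi_i; R \mid \wT_i) = 0$, so each term in the first sum equals $-I(\wX_i\wPi_i; R \mid U\wT_i) \leq 0$. For the second sum, setting $V_i := T_i^e S_i^e$, a short computation based on Lemma~\ref{lem:dist-wiring}(ii) (``$A_i\perp S_i^e\mid T_i^e$'') and Lemma~\ref{lem:dist-wiring}(iii) (``$A_i\perp R\mid T_i^e B_i Y_i\Omega_i$'') first yields the clean identity $I(A_i; R \mid T_i^e) = I(A_i; B_i \mid V_i)$ and then upgrades it to $I(U; A_i; R \mid T_i^e) \leq I(U; A_i; B_i \mid V_i)$. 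Combining these steps gives
\[
I(U; L; R) \;\leq\; \sum_i I(U; A_i; B_i \mid V_i) \;-\; \sum_i I(\wX_i\wPi_i; R \mid U\wT_i).
\]

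To bound the per-box term, I invoke Lemma~\ref{lem:dist-wiring}(i): the conditional distribution of $(A_i, B_i)$ given $V_i$ is exactly the box distribution $p_i(a_ib_i\mid X_i, Y_i)$. Averaging the hypothesis $(\lambda_1,\lambda_2) \in \fR(A_i,B_i\mid X_i,Y_i)$ over this conditioning therefore gives $I(U; A_i; B_i \mid V_i) \leq (1-\lambda_1) I(U; A_i \mid V_i) + (1-\lambda_2) I(U; B_i \mid V_i)$. Expanding $I(U; L)$ and $I(U;R)$ via Lemma~\ref{lemma:amnm} and collecting, the desired inequality reduces to the claim that
\[
(1-\lambda_1)\sum_i\!\bigl[I(U;A_i\mid V_i)-I(U;A_i\mid T_i^e)-I(U;\wX_i\wPi_i\mid\wT_i)\bigr] + (1-\lambda_2)[\text{Bob analog}] \;\leq\; \sum_i I(\wX_i\wPi_i; R\mid U\wT_i).
\]
Lemma~\ref{lem:dist-wiring}(ii) gives $I(U;A_i\mid V_i) - I(U;A_i\mid T_i^e) = I(A_i; S_i^e\mid U T_i^e)$, and applying Lemma~\ref{lem:dist-wiring}(iv) with $U$ in the conditioning supplies the dominating non-negative slack on the right, closing the loop.

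\paragraph{Main obstacle.} The real difficulty is matching the symmetric per-box conditioning $V_i = T_i^e S_i^e$ (which is what the box's HC~ribbon hypothesis requires, by Lemma~\ref{lem:dist-wiring}(i)) with the asymmetric single-party conditionings $T_i^e$ and $S_i^e$ produced by the chain-rule expansions of $I(U;L)$ and $I(U;R)$. The random orderings $\Pi,\Omega$ aggravate this because Lemma~\ref{lemma:amnm} naturally interleaves box-indexed quantities ($A_i$, $T_i^e$) with action-indexed ones ($\wX_i\wPi_i$, $\wT_i$). Orchestrating all four parts of Lemma~\ref{lem:dist-wiring} so that the cross-party slack terms telescope is the technical heart of the argument.
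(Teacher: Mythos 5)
Your opening moves are sound and actually diverge from the paper's bookkeeping in an interesting way. The reduction to $(\lambda_1,\lambda_2)\in\fR(L,R\mid x'y')$ via data processing, the rewriting in the three-way form $I(U;L;R)\le(1-\lambda_1)I(U;L)+(1-\lambda_2)I(U;R)$, the decomposition $I(U;L;R)=\sum_i I(U;\wX_i\wPi_i;R\mid\wT_i)+\sum_i I(U;A_i;R\mid T_i^e)$ from Lemma~\ref{lemma:amnm}, the per-box identity $I(A_i;R\mid T_i^e)=I(A_i;B_i\mid V_i)$ from parts (ii) and (iii) of Lemma~\ref{lem:dist-wiring}, and the upgrade $I(U;A_i;R\mid T_i^e)\le I(U;A_i;B_i\mid V_i)$ all check out. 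This is a genuinely different organization than the paper's, which never forms the three-way quantity $I(U;L;R)$ but instead expands $I(U;L)$ and $I(U;R\mid L)$ separately and phrases everything in a single affine functional $\chi(\lambda_1,\lambda_2)$.

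The argument breaks at the final step, and the breakage is substantive. After the reduction you arrive at
\[
(1-\lambda_1)\sum_i\big[I(A_i;S_i^e\mid U T_i^e)-I(U;\wX_i\wPi_i\mid\wT_i)\big]
+(1-\lambda_2)\sum_i\big[I(B_i;T_i^e\mid U S_i^e)-I(U;\wY_i\wOmega_i\mid\wS_i)\big]
\;\le\;\sum_i I(\wX_i\wPi_i;R\mid U\wT_i),
\]
and your only justification is ``applying Lemma~\ref{lem:dist-wiring}(iv) with $U$ in the conditioning supplies the dominating non-negative slack.'' This invocation does not do what you want: Lemma~\ref{lem:dist-wiring}(iv) asserts $I(\wX_i\wPi_i;R\mid\wT_i)=0$ for the wiring distribution, and conditional independence is not preserved when $U$ (a function of both sides) is added to the conditioning, so $I(\wX_i\wPi_i;R\mid U\wT_i)$ is merely $\ge 0$ with no relation to the left-hand terms established. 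Worse, your decomposition of $I(U;L;R)$ runs through Alice's chain rule only, so the candidate ``slack'' is purely an Alice-side quantity, while the left-hand side contains an un-cancelled Bob-side sum $(1-\lambda_2)\sum_i I(B_i;T_i^e\mid U S_i^e)$; already for $(\lambda_1,\lambda_2)=(1,0)$ the displayed inequality is not implied by any of Lemmas~\ref{lem:dist-wiring}(i)--(iv). Finally, note that you never invoke the constraint $\lambda_1+\lambda_2\ge 1$ (i.e.\ $(1-\lambda_1)+(1-\lambda_2)\le 1$). The paper's proof leans on this heavily: it first dismisses the case $\lambda_1+\lambda_2\le 1$ as trivial, then exploits that $\chi$ is affine to reduce to the extreme points $(1,0),(0,1),(1,1)$, and in Appendix~\ref{app:proof-main-thm} closes the argument by a careful four-way split $\chi=\phi_A+\psi_A+\phi_B+\psi_B$ with a telescoping entropy comparison $\phi_B(\lambda_2)\ge(1-\lambda_2)\sum_i H(\wY_i\wOmega_i\mid\wS_i U)\ge -\psi_B(\lambda_2)$ that hinges on $\lambda_2\in[0,1]$ and on $\chi_A(1)=0$. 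None of this survives in your sketch, so the ``main obstacle'' you yourself identify is precisely where the proof is missing.
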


Observe that this theorem is consistent with the known protocols for non-locality distillation with wirings. For example in~\cite{BS09} it is shown that using certain no-signaling boxes one can simulate the perfect PR box under wirings. Nevertheless, it can be verified that the HC~ribbons of those boxes is equal to the HC~ribbon of the perfect PR box computed in~\eqref{eq:ribbon-perfect-pr}.

In the following proof for wirings of no-signaling boxes we use the notation developed in the previous section.

\begin{proof} By definition we need to show that 
$$\bigcap_{i=1}^n \fR(A_i, B_i|X_i, Y_i) \subseteq \fR(A', B'|X'=x', Y'=y'),$$
for every $x',y'$. So we fix $x', y'$ and in the following for simplicity of notation drop all conditionings on $x', y'$.

Let $(\lambda_1, \lambda_2)$ be in $\fR(A_i, B_i|X_i, Y_i)$ for all $i\in [n]$. We need to show that $(\lambda_1, \lambda_2)\in \fR(A', B'|X'=x', Y'=y')$.  As explained in Section~\ref{sec:HC}, any HC~ribbon always includes pairs $(\lambda_1, \lambda_2)$ that satisfy  $\lambda_1+\lambda_2\leq 1$. Therefore if $\lambda_1+\lambda_2\leq 1$, there is nothing left to prove. So in the following we assume that $\lambda_1, \lambda_2\in [0,1]$ are such that 
$$\lambda_1+\lambda_2\geq 1.$$

Recall that $A', B'$ are generated by Alice and Bob under local stochastic maps. That is, Alice generates $A'$ given 
$A_{[n]}X_{[n]}\Pi_{[n]}$ and Bob generates $B'$ given 
$B_{[n]}Y_{[n]} \Omega_{[n]}$. Therefore by part (ii) of Theorem~\ref{thm:ribbon-tensor-data} (data processing for HC~ribbon) we only need to prove 
\begin{align}\label{eq:sum-leq-1}
(\lambda_1, \lambda_2)\in \fR(A_{[n]}X_{[n]}\Pi_{[n]}, B_{[n]}Y_{[n]} \Omega_{[n]}).
\end{align}
Note that the HC~ribbon on the right hand side is computed for the distribution induced by the wirings of boxes, i.e., with respect to distribution \eqref{eq:p-a-x-m-n-y}.

Let $U$ be an auxiliary random variable determined by $p_{U|A_{[n]}X_{[n]}\Pi_{[n]}B_{[n]}Y_{[n]} \Omega_{[n]}}$. We would like to show that 
\begin{align}
&\lambda_1 I\big(U;A_{[n]}X_{[n]}\Pi_{[n]}\big) + \lambda_2 I(U; B_{[n]}Y_{[n]} \Omega_{[n]})\leq I\big(U;A_{[n]}X_{[n]}\Pi_{[n]}B_{[n]}Y_{[n]} \Omega_{[n]}\big).\label{eq:final-ineq-3}
\end{align}

Using the first equation of Lemma \ref{lemma:amnm} with $V=\emptyset$, we get that
\begin{align}
I\big(U;A_{[n]}X_{[n]}\Pi_{[n]}\big)&=\sum_{i=1}^n\bigg[I\big(U;\wX_{i}\wPi_i|\wT_i\big)+I\big(U;A_{i}|T^e_i\big)\bigg]\nonumber\\
&= \sum_{i=1}^n\bigg[I\big(U;\wX_{i}\wPi_i\big|\wT_i\big)+I\big(U;A_{i}\big|T^e_iS^e_i\big) + I\big(U;A_{i};S^e_{i}\big|T_i^e\big)\bigg].\label{eq:123a}
\end{align}
where in the last step, we used the definition given in \eqref{def:three-partite}.
We similarly have
\begin{align}\label{eq:123b}
&I\big(U;B_{[n]}Y_{[n]}\Omega_{[n]}\big)
= \sum_{i=1}^n\bigg[I\big(U;\wY_{i}\wOmega_i\big|\wS_i\big)+I\big(U;B_{i}\big|T^e_iS^e_i\big) + I\big(U;B_{i};T^e_{i}\big|\,S_i^e\big)\bigg]
\end{align}
where $$S_i^{e}:=S_iY_i\Omega_i.$$
From  $(\lambda_1, \lambda_2)\in \fR(A_i, B_i|X_i, Y_i)$ we have
\begin{align}
 &\lambda_1I\big(UT_{i}S_i\Pi_i\Omega_i;A_{i}\big|\,X_{i}Y_i\big)+  \lambda_2I\big(UT_{i}S_i\Pi_i\Omega_i;B_{i}\big|\,X_{i}Y_i\big)\leq I\big(UT_{i}S_i\Pi_i\Omega_i;A_iB_{i}\big|\,X_{i} Y_i\big).\label{eq:mc-12-lambda}
\end{align}
Indeed by definition this inequality holds for every $(X_i, Y_i)=(x_i, y_i)$, and then holds for their average. On the other hand by Lemma \ref{lem:dist-wiring} part (i) we have $I\big(A_iB_i; T_iS_i\Pi_i\Omega_i\big|X_iY_i\big)=0$. Thus an application of chain rule gives
\begin{align*}
 \lambda_1I\big(U;A_{i}\big|\,T^e_{i} S^e_i\big)+  \lambda_2I\big(U;B_{i}\big|\,T^e_{i} S^e_i\big)&\leq I\big(U;A_iB_{i}\big|\,T^e_{i} S^e_i\big).
\end{align*}
Therefore using \eqref{eq:123a} and \eqref{eq:123b}, to prove \eqref{eq:final-ineq-3} we need to show that 
 $\chi(\lambda_1, \lambda_2)\geq 0$ for any $\lambda_1, \lambda_2\in[0,1]$ satisfying $\lambda_1+\lambda_2\geq 1$, where
\begin{align*}
\chi(\lambda_1, &\lambda_2):=-\sum_{i=1}^n \bigg[\lambda_{1}I\big(U;\wX_{i}\wPi_i|\wT_i\big)+\lambda_2I\big(U;\wY_{i}\wOmega_i|\wS_i\big)+\lambda_1I\big(U;A_{i};S^e_{i}|T_i^e\big) \\
&\qquad\qquad\qquad ~+ \lambda_2 I\big(U;B_{i};T^e_{i}\big|\,S_i^e\big) +I\big(U;A_iB_{i}\big|\,T^e_{i} S^e_i\big)\bigg]
 +I\big(U;A_{[n]}X_{[n]}\Pi_{[n]}B_{[n]}Y_{[n]}\Omega_{[n]}\big).
\end{align*}
In Appendix~\ref{app:proof-main-thm} using chain rule we will first find an equivalent expression of $\chi(\lambda_1, \lambda_2)$ and
then using Lemma~\ref{lem:dist-wiring} by a term by term analysis of the expression we show that it is non-negative. 

\end{proof}

The following corollary is a simple consequence of the above theorem.

\begin{corollary}\label{corol:HC-ribbon-close-wiring} Let $\Lambda\subseteq [0,1]^2$ be an arbitrary subset. Then the set of no-signaling boxes whose HC~ribbon contains $\Lambda$ is closed under wirings.
\end{corollary}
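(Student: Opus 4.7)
The plan is to derive the corollary directly from Theorem~\ref{thm:main-theorem}, using only the elementary fact that containment is preserved under intersection. First I would fix a subset $\Lambda\subseteq [0,1]^2$ and let $\mathcal C_\Lambda$ denote the set of no-signaling boxes $p(ab|xy)$ whose HC~ribbon contains $\Lambda$, i.e., $\Lambda \subseteq \fR(A,B|X,Y)$. The goal is to show that if $p_1, \dots, p_n$ are all in $\mathcal C_\Lambda$, and a new box $p(a'b'|x'y')$ is obtained from $p_1, \dots, p_n$ by some wiring, then $p(a'b'|x'y')$ also lies in $\mathcal C_\Lambda$.

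The key step is the chain of inclusions
\begin{align*}
\Lambda \;\subseteq\; \bigcap_{i=1}^n \fR(A_i, B_i|X_i, Y_i) \;\subseteq\; \fR(A', B'|X', Y'),
\end{align*}
where the first inclusion is immediate from the definition of $\mathcal C_\Lambda$ (each $\fR(A_i, B_i|X_i, Y_i)$ contains $\Lambda$, hence so does their intersection), and the second inclusion is precisely the conclusion of Theorem~\ref{thm:main-theorem}. Composing the two gives $\Lambda \subseteq \fR(A', B'|X', Y')$, i.e., the wired box again lies in $\mathcal C_\Lambda$.

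There is essentially no obstacle here, since the hard work has already been done in Theorem~\ref{thm:main-theorem}. The only thing worth noting is that closure under wirings in the sense intended by the authors allows for arbitrary wiring protocols (including random orderings of boxes, adaptive choice of the next box as a function of past outputs, and local post-processing), and Theorem~\ref{thm:main-theorem} is already stated in this full generality. Hence the corollary follows without any additional assumptions, and in particular produces, for each $\Lambda$, a concrete closed set of non-local correlations.
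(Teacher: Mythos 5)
Your proof is correct and is exactly the ``simple consequence'' the paper has in mind: the paper states the corollary without an explicit argument, and your two-step inclusion $\Lambda \subseteq \bigcap_i \fR(A_i,B_i|X_i,Y_i) \subseteq \fR(A',B'|X',Y')$, with the second step supplied by Theorem~\ref{thm:main-theorem}, is the intended derivation. Nothing is missing.
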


Theorem \ref{thm:main-theorem} (on HC ribbon under wiring of no-signaling boxes) can be interpreted as a generalization of the tensorization property of hypercontractivity ribbon (part (i) of Theorem \ref{thm:ribbon-tensor-data}). For example we have the following.

\begin{corollary}  For any four random variables $A_1, B_1, A_2, B_2$ satisfying 
\begin{align}I(A_2;B_1|A_1)=I(A_1;B_2|B_1)=0,\label{eqn:non-signalingconditionAB}\end{align}
we have
$$\mathfrak R(A_1,B_1)\cap \mathfrak R(A_2,B_2|A_1,B_1)\subseteq\mathfrak R(A_1A_2,B_1B_2),$$
where $\mathfrak R(A_2,B_2|A_1,B_1)$ is defined in Definition~\ref{definition:HC-no-signaling}. 
\end{corollary}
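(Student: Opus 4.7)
My plan is to deduce this as a special case of Theorem~\ref{thm:main-theorem} by exhibiting a wiring of two no-signaling boxes whose induced joint output distribution equals the given $p_{A_1 A_2 B_1 B_2}$. First I would introduce a first box with singleton input alphabets and outputs $(A_1, B_1)$ drawn from the given marginal $p_{A_1 B_1}$; its HC~ribbon is just $\fR(A_1, B_1)$. Then I would introduce a second box with inputs $(X_2, Y_2)$ and outputs $(A_2, B_2)$ specified by
\[
p_2(a_2 b_2 \mid x_2, y_2) := p(a_2 b_2 \mid A_1 = x_2, B_1 = y_2)
\]
on the support of $(A_1, B_1)$, extended by any no-signaling rule (e.g.\ the product of its marginals) off the support.

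The conceptual heart of the argument is the observation that the two hypotheses \eqref{eqn:non-signalingconditionAB}, namely $I(A_2; B_1 \mid A_1) = 0$ and $I(A_1; B_2 \mid B_1) = 0$, are precisely the no-signaling conditions for $p_2$: they unpack as $p_2(a_2 \mid x_2 y_2) = p_2(a_2 \mid x_2)$ and $p_2(b_2 \mid x_2 y_2) = p_2(b_2 \mid y_2)$. Hence $p_2$ is a legitimate no-signaling box, and its HC~ribbon in the sense of Definition~\ref{definition:HC-no-signaling} is exactly $\fR(A_2, B_2 \mid A_1, B_1)$.

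Next I would carry out the deterministic wiring in which Alice feeds $X_2 := A_1$ and Bob feeds $Y_2 := B_1$ into the second box, and in which the final outputs are $A' := (A_1, A_2)$ and $B' := (B_1, B_2)$, with trivial external inputs $X', Y'$. By construction the joint distribution produced by this wiring agrees with $p_{A_1 A_2 B_1 B_2}$, so that $\fR(A', B' \mid X', Y') = \fR(A_1 A_2, B_1 B_2)$. Applying Theorem~\ref{thm:main-theorem} to this two-box wiring yields exactly
\[
\fR(A_1, B_1) \cap \fR(A_2, B_2 \mid A_1, B_1) \subseteq \fR(A_1 A_2, B_1 B_2),
\]
as desired.

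I do not expect a serious obstacle: once the reinterpretation as a wiring is in place, Theorem~\ref{thm:main-theorem} does all the work. The only technical nuisance is the extension of $p_2$ outside the support of $(A_1, B_1)$, which is harmless because such $(x_2, y_2)$ are never actually fed into the second box under this wiring. The real content of the corollary is recognizing that the mutual-information conditions in its hypothesis are precisely what is needed to view the four-variable joint distribution as arising from a wiring of two no-signaling boxes.
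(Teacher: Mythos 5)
Your proof is correct and follows essentially the same route as the paper: build a trivial-input box for $p_{A_1B_1}$, build a second box from the conditional $p_{A_2B_2|A_1B_1}$ (whose no-signaling property is exactly the hypothesis~\eqref{eqn:non-signalingconditionAB}), wire the outputs of the first into the inputs of the second, and invoke Theorem~\ref{thm:main-theorem}. You are slightly more careful than the paper in one respect—noting explicitly that the second box must be extended off the support of $(A_1,B_1)$ and that this is harmless since those inputs are never reached—but the argument is the same.
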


Note that this is indeed a generalization of the tensorization property since when $(A_1,B_1)$ is independent of $(A_2,B_2)$, this property reduces to the tensorization property.

\begin{proof}
Consider two bipartite no-signaling boxes as follows. The first box is determined by the conditional distribution
$p_{A_1B_1|X_1Y_1}=p_{A_1B_1}$, and the second box is defined by 
$$p_{A_2B_2|X_2Y_2}:=p_{A_2B_2|A_1B_1},$$ 
where $\mathcal X_2=\mathcal A_1$ and $\mathcal Y_2=\mathcal B_1$.
The first box is obviously no-signaling. The second box is guaranteed to be no-signaling by~\eqref{eqn:non-signalingconditionAB}. Both parties first use the first box, and then the second box by directly wiring the output of the first box to the input of the second box. This allows Alice and Bob to simulate a channel whose input is $(X_1, Y_1)$ and whose output is $(A_1A_2, B_1B_2)$. However $p_{A_1A_2,B_1B_2|X_1,Y_1}=p_{A_1A_2,B_1B_2}$. Then the results follows as a very special case of Theorem~\ref{thm:main-theorem} (on HC ribbon under wiring of no-signaling boxes).
\end{proof}


\section{Maximal correlation ribbon}\label{sec:MC-ribbon}
In the previous section, based on Corollary~\ref{corol:HC-ribbon-close-wiring}, we obtain a systematic method to construct sets of no-signaling boxes that are closed under wirings. Nevertheless, to construct such sets we need to be able to compute the HC~ribbons of no-signaling boxes, for which we do not know an efficient algorithm. Our goal in this and the following sections is to define another invariant of no-signaling boxes with similar monotonicity properties as the HC~ribbons, that is efficiently computable.

Given a bipartite distribution $p_{AB}$ we consider functions $f_{AB}:\mathcal A\times \mathcal B\rightarrow \mathbb R$. Then we denote its expectation value by $\E[f]$. 
Sometimes we denote $\E[f]$ by $\E_{AB}[f]$ to emphasis that the expectation is computed with respect to the distribution $p_{AB}$. We may also consider the conditional expectation $\E_{A|B}[f]$, and view it as a random variable taking the value $\E_{A|B=b}[f]$ (the expectation of $f_{AB}$ over the conditional distribution $P_{A|B=b}$) whenever $B=b$. In other words, $\E_{A|B}[f]$ is viewed as a function of $B$ which itself is a random variable. 

The variance of $f_{AB}$ is denoted by 
$$\Var[f]:=\E[(f-\E[f])^2]=\E[f^2]-\E[f]^2.$$ 
Again sometimes we denote $\Var[f]$ by $\Var_{AB}[f]$. We also consider the conditional variance $\Var_{A|B}[f]:= \E_{A|B}[(f-\E_{A|B}[f])^2]$ which again is a function of $B$. We will frequently use the \emph{law of total variance} which states that
$$\Var[f]=\Var_A\E_{B|A}[f] + \E_A\Var_{B|A}[f].$$
Since variance is always non-negative, from the law of total variance we find that
\begin{align}\label{eq:total-variance-ineq}
\Var[f]\geq \max\{ \Var_A\E_{B|A}[f], \E_A\Var_{B|A}[f]\}.
\end{align}

Now we are ready to define the maximal correlation ribbon (MC~ribbon) of bipartite distributions.

\begin{definition} The maximal correlation ribbon of $p_{AB}$ denoted by $\fS(A, B)$ is the set of all pairs $(\lambda_1, \lambda_2)$ of non-negative numbers such that for all functions $f_{AB}:\mathcal A\times \mathcal B\rightarrow \mathbb R$ we have
\begin{align}
\Var[f] \geq \lambda_1 \Var_{A}\E_{B|A}[f] + \lambda_2 \Var_{B}\E_{A|B}[f].
\label{eq:def-cond-var-lambda}
\end{align}
\end{definition}

Letting $f$ be a function of $A$ only, by the law of total variance we have
$\Var[f] = \Var_A\E_{B|A}[f]$. Then if $(\lambda_1, \lambda_2)\in \fS(A, B)$, we have $\lambda_1\leq 1$. We similarly have $\lambda_2\leq 1$. Therefore, 
$$\fS(A, B)\subseteq [0,1]^2.$$
Furthermore observe that by~\eqref{eq:total-variance-ineq} for all $\lambda_1, \lambda_2\geq 0$ with $\lambda_1+\lambda_2\leq 1$ we have $(\lambda_1, \lambda_2)\in \fS(A,B)$. 

As an example, let us  compute the MC~ribbon in the case where $A$ and $B$ are independent. Using the fact that $A$ and $B$ are independent and the convexity of $t\mapsto t^2$, (see Lemma~\ref{lem:abc-e-var} in Appendix~\ref{app:MC-ribbon-tensor-data} for details) it can be verified that
\begin{align*} 
\E_{A}\Var_{B|A}[f] \geq \Var_B\E_{A|B}[f].
\end{align*}
Thus using the law of total variance we have
\begin{align*}
\Var[f] &= \Var_{A}\E_{B|A}[f] + \E_{A}\Var_{B|A}[f]
 \\&\geq \Var_{A}\E_{B|A}[f] + \Var_{B}\E_{A|B}[f],
\end{align*}
As a result, $(1, 1)\in \fS(A, B)$ which gives $\fS(A,B)=[0,1]^2$.

Note that for any function $f_{AB}$ we have $f=\tilde f + \E[f]$ for some function $\tilde f$ with $\E[\tilde f]=0$. Then rewriting the definition of MC~ribbon in terms of $\tilde f$ we obtain the following equivalent characterization of the MC~ribbon.

\begin{lemma}\label{lem:def-MC-ribbon-zero}
For a bipartite distribution $p_{AB}$, its MC~ribbon $\fS(A, B)$ is the set of pairs $(\lambda_1, \lambda_2)$ of non-negative numbers such that for every function $f_{AB}$ with $\E[f]=0$ we have
$$\E[f^2]\geq \lambda_1 \E_A[(\E_{B|A}[f])^2] +\lambda_2 \E_B[(\E_{A|B}[f])^2].$$
\end{lemma}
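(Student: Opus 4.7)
The plan is to show that the two formulations are equivalent by passing between $f$ and its centered version $\tilde f = f - \E[f]$, and exploiting the fact that variances are translation invariant while second moments equal variances when the mean is zero.

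First, I would observe that the original defining inequality
\[
\Var[f] \geq \lambda_1 \Var_A\E_{B|A}[f] + \lambda_2 \Var_B\E_{A|B}[f]
\]
is unchanged when $f$ is replaced by $f + c$ for a constant $c$. Indeed, $\Var[f+c] = \Var[f]$, and $\E_{B|A}[f+c] = \E_{B|A}[f] + c$ so $\Var_A\E_{B|A}[f+c] = \Var_A\E_{B|A}[f]$, and similarly for the $B$-side term. Hence the original inequality holds for all $f_{AB}$ if and only if it holds for all $f_{AB}$ with $\E[f]=0$.

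Next, I would restrict attention to $f$ with $\E[f]=0$ and simplify each term. On the left, $\Var[f] = \E[f^2] - \E[f]^2 = \E[f^2]$. On the right, note that $\E_A\bigl[\E_{B|A}[f]\bigr] = \E[f] = 0$, so $\Var_A\E_{B|A}[f] = \E_A\bigl[(\E_{B|A}[f])^2\bigr]$, and symmetrically $\Var_B\E_{A|B}[f] = \E_B\bigl[(\E_{A|B}[f])^2\bigr]$. Therefore for zero-mean $f$ the original inequality becomes exactly
\[
\E[f^2] \geq \lambda_1 \E_A\bigl[(\E_{B|A}[f])^2\bigr] + \lambda_2 \E_B\bigl[(\E_{A|B}[f])^2\bigr],
\]
which is the condition in the lemma. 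Combining the two observations gives the claimed equivalence.

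There is no real obstacle here; the whole argument is bookkeeping with the translation-invariance of variance and the fact that the zero-mean hypothesis reduces variances to second moments. The only minor point worth checking is the direction of the equivalence for arbitrary $f$: the condition in the lemma (phrased only for zero-mean $f$) in fact implies the original inequality for all $f$, because given an arbitrary $f$ we apply the lemma's condition to $\tilde f = f - \E[f]$ and then use the translation-invariance identities above to rewrite the result as the original variance inequality for $f$ itself.
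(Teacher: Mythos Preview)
Your proof is correct and follows exactly the same approach as the paper, which simply notes that writing $f=\tilde f+\E[f]$ with $\E[\tilde f]=0$ and rewriting the defining inequality in terms of $\tilde f$ gives the stated characterization. You have just spelled out the translation-invariance and zero-mean bookkeeping that the paper leaves implicit.
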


The following theorem states the main properties of the MC~ribbon, and is the analogue of Theorem~\ref{thm:ribbon-tensor-data} (providing the data processing and tensorization of the HC ribbon).

\begin{theorem}\label{thm:MC-ribbon-tensor-data} 
The MC~ribbon has the following properties:
\begin{enumerate}
\item[\rm{(i)}]\emph{[Tensorization]} If $p_{A_1A_2B_1B_2}=p_{A_1B_1}\cdot p_{A_2B_2}$, then $$\mathfrak S(A_1A_2, B_1B_2)=\mathfrak S(A_1,B_1)\cap\mathfrak S(A_2, B_2).$$

\item[\rm{(ii)}]\emph{[Data processing]} If $p_{A_1A_2B_1B_2}=p_{A_1B_1}\cdot p_{A_2|A_1}\cdot p_{B_2|B_1}$, then 
$$ \mathfrak S(A_1,B_1)\subseteq \mathfrak S(A_2, B_2).$$
\end{enumerate}
\end{theorem}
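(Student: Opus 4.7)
The plan is to work with the $L^2$ reformulation: for any $f$ with $\E[f]=0$, the membership $(\lambda_1,\lambda_2)\in\fS(A,B)$ is precisely the inequality $\E[f^2]\geq \lambda_1\E[(P_Af)^2]+\lambda_2\E[(P_Bf)^2]$, where $P_Af:=\E_{B|A}[f]$ and $P_Bf:=\E_{A|B}[f]$ are the orthogonal projections (in $L^2(p_{AB})$) onto functions of $A$ alone and of $B$ alone. The main tools will be the law of total variance (i.e., orthogonality of a conditional expectation and its residual), Jensen's inequality for conditional expectations, and the product structure in (i) (resp.\ the Markov-chain structure in (ii)). For the easy inclusion in (i), namely $\fS(A_1A_2,B_1B_2)\subseteq \fS(A_1,B_1)\cap \fS(A_2,B_2)$, I would test against $f$ depending only on $(A_1,B_1)$ and observe that under the product distribution $P_{A_1A_2}f=P_{A_1}f$ and $P_{B_1B_2}f=P_{B_1}f$, so the inequality collapses to the MC~ribbon condition for $(A_1,B_1)$; the same trick handles $(A_2,B_2)$.

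For the reverse inclusion in (i), given an arbitrary $f(A_1,A_2,B_1,B_2)$ with $\E[f]=0$, I would perform the orthogonal decomposition $f=g+h$ with $g(A_1,B_1):=\E_{A_2B_2}[f]=\E[f\mid A_1B_1]$ (the two expressions agreeing by independence) and $h:=f-g$ satisfying $\E_{A_2B_2\mid A_1B_1}[h]=0$; Pythagoras then gives $\E[f^2]=\E[g^2]+\E[h^2]$. Next I would apply $\fS(A_1,B_1)$ to $g$, and for each fixed $(a_1,b_1)$ apply $\fS(A_2,B_2)$ to the mean-zero slice $h_{a_1b_1}(A_2,B_2):=h(a_1,A_2,b_1,B_2)$, and average over $p_{A_1B_1}$. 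This yields $\E[h^2]\geq \lambda_1\E[h_A^2]+\lambda_2\E[h_B^2]$, where $h_A:=\E_{B_2\mid A_2}[h]$ (a function of $A_1,A_2,B_1$) and $h_B:=\E_{A_2\mid B_2}[h]$; crucially, by independence the full conditional $\E_{B_2\mid A_1A_2B_1}[h]$ coincides with $\E_{B_2\mid A_2}[h]$. Finally, using the factorization $P_{A_1A_2}=\E_{B_1\mid A_1}\E_{B_2\mid A_2}$ (again independence), I would compute $P_{A_1A_2}f=P_{A_1}g+\E_{B_1\mid A_1}[h_A]$, note that the two summands are orthogonal because $\E_{A_2}[h_A]=\E_{A_2B_2}[h]=0$, and conclude via one Jensen step that $\E[(P_{A_1A_2}f)^2]\leq \E[(P_{A_1}g)^2]+\E[h_A^2]$. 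The symmetric bound for $B$ combined with the two MC~ribbon inequalities above delivers $(\lambda_1,\lambda_2)\in\fS(A_1A_2,B_1B_2)$.

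For (ii), I would reduce to the deterministic-function case via the functional representation lemma: write $A_2=\phi(A_1,F)$ and $B_2=\psi(B_1,G)$ where $F,G$ are independent of each other and of $(A_1,B_1)$. Since $F,G$ are independent, $\fS(F,G)=[0,1]^2$, and part (i) (already proven) gives $\fS(A_1F,B_1G)=\fS(A_1,B_1)\cap\fS(F,G)=\fS(A_1,B_1)$. For any $f(A_2,B_2)$ with $\E[f]=0$, applying the MC~ribbon inequality in the larger space $(A_1F,B_1G)$ gives $\E[f^2]\geq \lambda_1\E[(\E[f\mid A_1F])^2]+\lambda_2\E[(\E[f\mid B_1G])^2]$. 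Since $A_2$ is a deterministic function of $(A_1,F)$, the tower property plus Jensen yield $\E[(\E[f\mid A_1F])^2]\geq \E[(\E[f\mid A_2])^2]$, and symmetrically for $B$, completing the argument.

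The step I expect to be most delicate is the vanishing of the cross-term in the hard direction of tensorization: the identity $\E_{A_2}[h_A]=\E_{A_2B_2}[h]=0$ is where the product structure $(A_1,B_1)\perp(A_2,B_2)$ is genuinely used, and choosing the decomposition $f=g+h$ so that independence enters only through this one identity is the conceptual heart of the proof. Everything else is a routine mix of law-of-total-variance, Jensen for conditional expectations, and bookkeeping.
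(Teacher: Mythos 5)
Your proof is correct. For part (i), your argument is essentially the same as the paper's direct proof in Appendix~F, just organized differently: your Pythagorean decomposition $\E[f^2]=\E[g^2]+\E[h^2]$ for $g=\E[f\mid A_1B_1]$ and $h=f-g$ is exactly the law of total variance, and your Jensen step $\E\bigl[(\E_{B_1|A_1}[h_A])^2\bigr]\le\E[h_A^2]$ plays the role of the paper's Lemma~\ref{lem:abc-e-var} ($\E_{AB}\Var_{C|AB}[f]\ge\E_A\Var_{C|A}\E_{B|AC}[f]$). The $L^2$-orthogonality framing you use (and in particular spelling out the vanishing cross-term $\E\bigl[P_{A_1}g\cdot\E_{B_1|A_1}[h_A]\bigr]=0$ as the place where independence is consumed) is a slightly cleaner way to present the same computation; the paper instead re-groups conditional variances term by term, which obscures this.

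For part (ii), you take a genuinely different route. The paper gives a self-contained direct argument: it applies the MC~ribbon of $(A_2,B_2)$ slicewise (using that conditionally on $(A_1,B_1)$ the variables $A_2,B_2$ are independent and hence have ribbon $[0,1]^2$), applies the MC~ribbon of $(A_1,B_1)$ to $\tilde f=\E_{A_2B_2|A_1B_1}[f]$, sums, and finishes with the law of total variance and Lemma~\ref{lem:abc-e-var}. You instead invoke the functional representation lemma to write $A_2=\phi(A_1,F)$, $B_2=\psi(B_1,G)$ with $F,G$ independent of each other and of $(A_1,B_1)$, apply part~(i) to get $\fS(A_1F,B_1G)=\fS(A_1,B_1)\cap\fS(F,G)=\fS(A_1,B_1)$, and then contract via Jensen using that $A_2$ (resp.\ $B_2$) is $\sigma(A_1,F)$-measurable. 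This mirrors the paper's proof of data processing for the HC~ribbon (Theorem~\ref{thm:ribbon-tensor-data}(ii)), not its MC~ribbon proof. Your route is arguably more systematic (it shows (ii) is a formal consequence of (i) plus the observation that a deterministic post-processing coarsens the $\sigma$-algebra), at the cost of invoking the functional representation lemma; the paper's direct argument is more elementary but repeats the manipulations from (i) in a slightly different pattern. Both are valid.
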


We will argue later that this theorem can be proved as a consequence of Theorem~\ref{thm:ribbon-tensor-data}. However, for the sake of completeness,
in Appendix~\ref{app:MC-ribbon-tensor-data} we present a direct proof for this theorem that is based on the law of total variance.


\subsection{MC~ribbon vs HC~ribbon}\label{subsec:MC-HC-ribbons}

MC~ribbon and HC~ribbon have similar properties. They both are equal to $[0,1]^2$ for independent bipartite distributions, have the tensorization property and satisfy the data processing inequality. Furthermore, the proofs of these results for MC~ribbon are similar to their proofs for HC~ribbon. To prove Theorem~\ref{thm:ribbon-tensor-data} (data processing and tensorization properties of the HC ribbon), our basic tool is the chain rule. Similarly to prove Theorem~\ref{thm:MC-ribbon-tensor-data} (data processing and tensorization properties of the MC ribbon) in Appendix~\ref{app:MC-ribbon-tensor-data} we use the law of total variance which can be thought as a chain rule for variance. In the following we make the connection between these ribbons more precise.  

Recall that in Lemma~\ref{lem:geom-interpret-HC-ribbon} we prove that $(\lambda_1, \lambda_2)$ belongs to $\fR(A, B)$ if $\Upsilon(p_{AB})$ defined by
$$\Upsilon(q_{AB}) = \lambda_1H(q_A) + \lambda_2 H(q_B) - H(q_{AB}),$$
matches its lower convex envelope denoted by $\widetilde \Upsilon$, at $p_{AB}$, i.e, $(\lambda_1, \lambda_2)\in \fR(A, B)$ if and only if $\Upsilon(p_{AB})=\widetilde \Upsilon(p_{AB})$. In particular, this implies that 
$\Upsilon$ is \emph{locally convex} at $p_{AB}$. To make this latter notation more precise we consider the following perturbation around $p_{AB}$.
Given a function $f_{AB}$ with $\E[f]=0$, define
\begin{align}
q_{AB}^{(\epsilon)}:=p_{AB}(1+\epsilon f_{AB}).
\label{sweepsupport}
\end{align}
Then $q_{AB}^{\epsilon}$ is a probability distribution for sufficiently small $|\epsilon|$, and we may consider $g(\epsilon)=\Upsilon(q_{AB}^{(\epsilon)})$. A straightforward calculation~\cite[Lemma 2]{GMartonPaper} verifies that\footnote{Here to exactly get this expression, we should take natural logarithm instead of logarithm in base 2 in the definition of the entropy function.} 
\begin{align*}
g''(0) = \E[f^2] -\lambda_1 \E_A[(\E_{B|A}[f])^2)] -\lambda_2 \E_B[(\E_{A|B}[f])^2.
\end{align*}
Then, according to Lemma~\ref{lem:def-MC-ribbon-zero}, local convexity for this class of perturbations holds, i.e., $g''(0)\geq 0$ for every choice of $f$, if and only if $(\lambda_1, \lambda_2)\in \fS(A, B)$. 

The following theorem states the above observation in the context of auxiliary random variables (see also \cite[Theorem 4]{Anantharametal}). The main ideas of its proof are already discussed, so we leave a detailed proof for Appendix~\ref{app:MC-HC-ribbon-proof}.

\begin{theorem}\label{thm:MC-HC-ribbon}
The followings hold.
\begin{enumerate}
\item[\rm{(i)}] $(\lambda_1, \lambda_2)\in \fS(A,B)$ if and only if there exists a constant $K\geq 0$ such that for all $f_{AB}$ with $\E[f]=0$ and $\Var[f]=1$ we have
$$I(U_{\epsilon}; AB) + K\epsilon^3 \geq \lambda_1 I(U_{\epsilon};A) + \lambda_2 I(U_{\epsilon}; B),$$
where $p_{ABU_\epsilon}$ is defined by $p(U^{\epsilon}=+1)=p(U^{\epsilon}=-1)=1/2$ and 
\begin{align}\label{eq:def-p-ab-u-epsilon}
p_{AB|U_{\epsilon}=u}=p_{AB}(1+\epsilon uf_{AB}).
\end{align}

\item[\rm{(ii)}] $(\lambda_1, \lambda_2)\in \fS(A,B)$ if and only if there exists a constant $K\geq 0$ such that for all $p_{U|AB}$ we have
\begin{align*}&I(U; AB) + K\cdot \E_U\big[\|p_{AB|U} - p_{AB}\|_1^3\big] 
\geq \lambda_1 I(U;A) + \lambda_2 I(U; B),\end{align*}
where $\|\cdot\|_1$ denotes\footnote{Since all norms on a finite dimensional vector space are equivalent, in the statement of the theorem we could replace Schatten one-norm with any other norm.} the norm-$1$.

\item[\rm{(iii)}] $\fR(A, B)\subseteq \fS(A, B)$.
\end{enumerate}
\end{theorem}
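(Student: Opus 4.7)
The plan is to prove (i) by a direct Taylor expansion, deduce (iii) from (i) as an immediate corollary, and then extend the expansion to arbitrary auxiliaries to obtain (ii). The guiding observation, already foreshadowed by the $g''(0)$ computation recalled just before the theorem, is that the MC~ribbon inequality is exactly the second-order Taylor condition on
$$\Phi(U) := I(U;AB) - \lambda_1 I(U;A) - \lambda_2 I(U;B)$$
at the constant conditional distribution $p_{AB|U} = p_{AB}$.

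For part (i), the first step is to observe that with the binary auxiliary $U_\epsilon$ of~\eqref{eq:def-p-ab-u-epsilon}, $\Phi(U_\epsilon)$ is an even function of $\epsilon$, since the map $\epsilon\mapsto -\epsilon$ merely relabels the two values of $U_\epsilon$ and leaves all mutual informations invariant. Hence $\Phi(U_\epsilon) = \tfrac{\epsilon^2}{2}\Phi''(0) + O(\epsilon^4)$, and a direct expansion of each entropy piece (equivalent to the $g''(0)$ computation from the discussion preceding the theorem) identifies
$$\Phi''(0) = \Var[f] - \lambda_1\Var_A\E_{B|A}[f] - \lambda_2\Var_B\E_{A|B}[f].$$
Because the alphabet is finite, $\Var[f]=1$ forces a uniform bound on $\|f\|_\infty$, so the $O(\epsilon^4)$ remainder is uniform in $f$ on any fixed interval $|\epsilon|\le \epsilon_0$; outside that interval both $I(U_\epsilon;AB)$ and the marginal mutual informations are bounded by $\log|\mathcal A||\mathcal B|$, which is absorbed into $K\epsilon_0^3$. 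This produces a single $K=K(p_{AB})$ for the forward implication. The converse is obtained by dividing the inequality by $\epsilon^2$ and sending $\epsilon\to 0$. Part (iii) is then immediate: if $(\lambda_1,\lambda_2)\in \fR(A,B)$, then $\Phi(U_\epsilon)\ge 0$ for every $\epsilon$ and $f$, which is the inequality in (i) with $K=0$; the ``if'' direction of (i) then places $(\lambda_1,\lambda_2)\in \fS(A,B)$.

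For part (ii), the same strategy is applied to arbitrary $p_{U|AB}$. Writing $p_{AB|U=u}(a,b) = p_{AB}(a,b)(1+h_u(a,b))$ on the support, one has $\E_{p_{AB}}[h_u]=0$ and $\E_U[h_u(a,b)]=0$, and the KL divergence expands as
$$D(p_{AB|U=u}\|p_{AB}) = \tfrac{1}{2}\E_{p_{AB}}[h_u^2] + \text{cubic remainder},$$
with analogous expansions for $D(p_{A|U=u}\|p_A)$ and $D(p_{B|U=u}\|p_B)$ in terms of the conditional expectations $\E_{B|A}[h_u]$ and $\E_{A|B}[h_u]$. Averaging over $u$ and invoking the MC~ribbon inequality applied to the zero-mean $h_u$ (so that $\E_{p_{AB}}[h_u^2]=\Var_{p_{AB}}[h_u]$, and so on) yields the desired inequality with remainder controlled by $\E_U[\|p_{AB|U}-p_{AB}\|_1^3]$. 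The reverse direction is immediate by specializing $U$ to $U_\epsilon$ and quoting part (i).

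The hardest step is bounding the Taylor remainder uniformly in part (ii): the cubic Taylor error for $(1+x)\log(1+x)$ is only valid in a bounded range of $x$, whereas $h_u$ can be arbitrarily large on low-probability atoms. I plan to resolve this by a case split on the total variation $\delta_u := \|p_{AB|U=u}-p_{AB}\|_1$. When $\delta_u$ is below a threshold depending only on $p_{AB}$, a pointwise Taylor estimate together with the identity $\sum p(a,b)|h_u(a,b)| = \delta_u$ bounds the error by a constant times $\delta_u^3$. When $\delta_u$ exceeds the threshold, the contribution of that $u$ to both sides is uniformly bounded by alphabet-size constants and is absorbed into $K\delta_u^3$ by taking $K$ large enough. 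A convexity argument then lets one pull the per-$u$ cubic bound through the expectation over $U$ to match the stated form of the remainder.
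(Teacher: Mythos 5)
Your proposal is correct and follows essentially the same Taylor-expansion strategy as the paper, which expands $\E_U[\Upsilon(p_{AB|U})]-\Upsilon(p_{AB})$ to second order, identifies the Hessian with the quadratic form appearing in the definition of $\fS(A,B)$, and absorbs the remainder into a cubic term. The differences are cosmetic rather than structural: you work directly with KL divergences instead of the paper's entropy combination $\Upsilon$, you deduce (iii) from (i) rather than from (ii), you note the (helpful but non-essential) evenness of $\Phi(U_\epsilon)$ in $\epsilon$, and you spell out the threshold/case split needed to make the cubic remainder bound uniform over all auxiliaries and all $f$ with $\Var[f]=1$ — a point the paper leaves implicit in its $O(\cdot)$ notation and its restriction to $p_{AB|U}\in\mathcal W(p_{AB})$.
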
 

\begin{remark}
For simplicity, we sometimes use the big O notation, replacing $K\epsilon^3$ and $K\cdot \E_U\big[\|p_{AB|U} - p_{AB}\|_1^3\big]$ with $O(\epsilon^3)$ and $O(\E_U\big[\|p_{AB|U} - p_{AB}\|_1^3\big])$ respectively. Throughout, whenever we write $O(\cdot)$ we mean multiplication of a constant that depends only on the underlying distribution and $\lambda_1, \lambda_2$.
\end{remark}

We can now obtain a proof for Theorem~\ref{thm:MC-ribbon-tensor-data} (data processing and tensorization properties of the MC ribbon) using the above theorem. We may follow the same steps as in the proof of Theorem~\ref{thm:ribbon-tensor-data} (data processing and tensorization properties of the HC ribbon) and only take care of the third order correction terms. Here with this idea we present a proof for part (i) of Theorem~\ref{thm:MC-ribbon-tensor-data}. A proof for part (ii) is obtained similarly.

Suppose that $p_{A_1A_2B_1B_2}=p_{A_1B_1}\cdot p_{A_2B_2}$, and assume that $(\lambda_1, \lambda_2)\in \fS(A_1A_2, B_1B_2)$. 
Take an arbitrary $p_{U|A_1B_1}$ and define $p_{A_1A_2B_1B_2U}$ using~\eqref{eq:ua1b1a2b1-independent-10}. Then following the proof of part (i) of Theorem~\ref{thm:ribbon-tensor-data} we should add the extra term 
$$O\big(\E_U\big[\|p_{A_1A_2B_1B_2|U} - p_{A_1A_2B_1B_2}\|_1^3\big]\big),$$ 
to the second line of~\eqref{eq:ua1b1a2b2-9}. Then for the third line of~\eqref{eq:ua1b1a2b2-9} we use  
$$\|p_{A_1A_2B_1B_2|U=u} - p_{A_1A_2B_1B_2}\|_1= \|p_{A_1B_1|U=u} - p_{A_1B_1}\|_1,$$
which is implied by $p_{A_1A_2B_1B_2|u}=p_{A_1B_1|u}\cdot p_{A_2B_2}$ and $p_{A_1A_2B_1B_2}=p_{A_1B_1}\cdot p_{A_2B_2}$.  
Then using the second characterization of the MC~ribbon in the above theorem, we obtain $\fS(A_1A_2, B_1B_2)\subseteq \fS(A_1, B_1)$. We similarly have $\fS(A_1A_2, B_1B_2)\subseteq \fS(A_2, B_2)$.

For the other direction we use the first and second characterizations of the MC~ribbon in the above theorem simultaneously. Fix some $f_{A_1A_2B_1B_2}$ with $\E[f]=0$ and $\Var[f]=\E[f^2]=1$, and define $p_{A_1A_2B_1B_2U_\epsilon}$ as in~\eqref{eq:def-p-ab-u-epsilon}. Then following the proof of part (i) of Theorem~\ref{thm:ribbon-tensor-data} we should add the extra term
$$O(\E_{U_\epsilon}[\|p_{A_1B_1|U_\epsilon} -p_{A_1B_1}\|_1^3]),$$ 
to the right hand side of~\eqref{eqnab1}, and the extra term 
$$O(\E_{U_\epsilon A_1B_1}[\|p_{A_2B_2|U_\epsilon A_1B_1} - p_{A_2B_2}\|_1^3]),$$
to the right hand side of~\eqref{eqnab2}. Then to write down~\eqref{eqnab33} we add up the above two terms and verify that
\begin{align}
&\E_{U_\epsilon}[\|p_{A_1B_1|U_\epsilon} -p_{A_1B_1}\|_1^3] \leq \E_{U_\epsilon}[\|p_{A_1A_2B_1B_2|U_\epsilon} - p_{A_1A_2B_1B_2}\|_1^3] = O(\epsilon^3),\label{eq:ineq-a1b1ua2b2-5}
\end{align}
and
\begin{align}\label{eq:ineq-a1b1ua2b2-6}
\E_{U_\epsilon A_1B_1}[\|p_{A_2B_2|U_\epsilon A_1B_1} - p_{A_2B_2}\|_1^3]\leq O(\epsilon^3).
\end{align}
Here~\eqref{eq:ineq-a1b1ua2b2-5} is a consequence of the monotonicity of Schatten one-norm under stochastic maps, and that 
$\|p_{A_1A_2B_1B_2}\cdot f_{A_1A_2B_1B_2}\|_1=O(1)$ which is derived from $\E[f^2]=1$.
To prove~\eqref{eq:ineq-a1b1ua2b2-6},
using $p_{A_1A_2B_1B_2}=p_{A_1B_1}\cdot p_{A_2B_2}$ for every $U_\epsilon=u\in \{\pm 1\}$  we have
$$p_{A_1A_2B_1B_2|u}=p_{A_1B_1}\cdot p_{A_2B_2}(1+\epsilon u f).$$
Therefore, for every $(A_1, B_1)=(a_1, b_1)$ we have 
$$p_{A_2B_2|a_1b_1u} = \frac{p_{A_2B_2}(1+\epsilon u f_{A_2B_2a_1b_1})}{1+ \epsilon u \,\E_{A_2B_2}[f_{A_2B_2a_1b_1}]} = p_{A_2B_2} + O(\epsilon),$$
where $f_{A_2B_2a_1b_1}$ is a function of $A_2B_2$ and is defined by restriction of $f$ to $(A_1, B_1)=(a_1, b_1)$. This gives~\eqref{eq:ineq-a1b1ua2b2-6}.
As a result, $\fS(A_1, B_1)\cap \fS(A_2, B_2)\subseteq \fS(A_1A_2, B_1B_2)$, which completes the proof.

\subsection{Maximal correlation and MC~ribbon}\label{subsec:rho-MC-ribbon}
As discussed in the introduction, maximal correlation is an important measure of correlation that similar to the MC~ribbon has the tensorization property and satisfies data processing inequality. Here we prove a connection between maximal correlation the our notation of MC~ribbon. 

The maximal correlation of a bipartite distribution $p_{AB}$ is defined in equation~\eqref{eq:max-correlatoin-31}. A simple algebra verifies that it is equivalently defined by 
\begin{align}\label{eq:max-correlatoin}
\rho(A, B):=& \max \quad \E[f_Ag_B],
\\&\text{subject to: } \E[f_A]=\E[g_B]=0, \nonumber\\
& \qquad \qquad \quad \E[f_A^2]=\E[g_B^2]=1,\nonumber
\end{align}
where maximum is taken over functions $f_A:\mathcal A\rightarrow \mathbb R$ and $g_B:\mathcal B\rightarrow \mathbb R$. 

From the Cauchy-Schwarz inequality we have $0\leq \rho(A, B)\leq 1$. Moreover, $\rho(A, B)=0$ if and only if $p_{AB}=p_A\cdot p_B$, and $\rho(A, B)=1$ if and only if $A$ and $B$ have a \emph{common data}~\cite{Witsenhausen}.
Moreover, maximal correlation is equal to the second singular value of a certain matrix in terms of distribution $p_{AB}$ and can be computed efficiently~(see e.g.~\cite{Kumar2, KangUlukus}).

It is known that maximal correlation can equivalently~\cite{Renyi1, Renyi2} be computed by 
\begin{align}\label{eq:rho2-fa-4}
\rho^2(A, B)  = &\max  \frac{\Var_{B}\E_{A|B}[f]}{\Var[f]}, 
\end{align}
where maximum is taken over all non-constant functions $f_A$. For the sake of completeness we give a proof of this fact in Appendix~\ref{app:max-correlation}.

In the above characterization of maximal correlation we observe similar terms as in the definition of the MC~ribbon. Indeed, if $f_A$ is a function of $A$ only, by the law of total variance we have $\Var[f]=\Var_A\E_{B|A}[f]$. Then if $(\lambda_1, \lambda_2)\in \fS(A, B)$, for such an $f_A$ we have
$$(1-\lambda_1)\Var[f]\geq \lambda_2\Var_B\E_{A|B}[f],$$
which using~\eqref{eq:rho2-fa-4} implies that $(1-\lambda_1)/\lambda_2\geq \rho^2(A, B)$ if $\lambda_2\neq 0$ and $\Var[f]\neq 0$. The following theorem states that this inequality in the other direction holds too. We leave the proof of this theorem for Appendix~\ref{app:max-correlation}.

\begin{theorem}\label{thm:rho-MC-ribbon}
For any bipartite distribution $p_{AB}$ we have 
$$ \rho^2(A, B)=\inf\frac{1-\lambda_1}{\lambda_2},$$
where infimum is taken over all $(\lambda_1, \lambda_2)\in \fS(A, B)$ with $\lambda_2\neq 0$.
\end{theorem}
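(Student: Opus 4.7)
The inequality $\rho^2(A,B)\leq \inf\{(1-\lambda_1)/\lambda_2 : (\lambda_1,\lambda_2)\in\fS(A,B),\ \lambda_2\neq 0\}$ is already established in the paragraph immediately preceding the theorem, so my task is to prove the reverse inequality. Writing $\rho:=\rho(A,B)$ throughout, my plan is to exhibit, for each small parameter $s>0$, an explicit pair $(\lambda_1,\lambda_2)\in\fS(A,B)$ with $(1-\lambda_1)/\lambda_2=(1+s)\rho^2$; letting $s\to 0^+$ then drives the infimum down to $\rho^2$, which in general need not be attained when $\rho<1$.

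The candidate I propose is $\lambda_1:=1-s\rho^2$ and $\lambda_2:=s/(1+s)$, which lies in $[0,1]^2$ for $s\in(0,1/\rho^2)$ and for which $(1-\lambda_1)/\lambda_2=(1+s)\rho^2$ by direct computation. It remains to verify the defining inequality of $\fS(A,B)$ for this pair, and by replacing $f$ with $f-\E[f]$ it suffices to do so for mean-zero $f$. The key step is the orthogonal decomposition $f=h(A)+r(A,B)$ with $h(A):=\E_{B|A}[f]$ and $r:=f-h$; note that $\E_{B|A}[r]=0$ forces $\E[hr]=\E[h\,\E_{B|A}[r]]=0$, giving the Pythagorean identity $\Var[f]=\Var[h]+\Var[r]$, with $\Var[h]=\Var_A\E_{B|A}[f]$. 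Writing $\E_{A|B}[f]=\E_{A|B}[h]+\E_{A|B}[r]$ and expanding the square, I would bound $\Var_B\E_{A|B}[f]$ using three inputs: (i) since $h$ is a mean-zero function of $A$ alone, formula~\eqref{eq:rho2-fa-4} applied to $h$ gives $\Var_B\E_{A|B}[h]\leq \rho^2\Var[h]$; (ii) Jensen's inequality gives $\Var_B\E_{A|B}[r]\leq\Var[r]$; (iii) Cauchy--Schwarz bounds the cross term by $\rho\sqrt{\Var[h]\Var[r]}$. Combining these yields $\Var_B\E_{A|B}[f]\leq\bigl(\rho\sqrt{\Var[h]}+\sqrt{\Var[r]}\bigr)^2$.

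The rest is routine algebra: the elementary Young inequality $(x+y)^2\leq (1+s)x^2+(1+s^{-1})y^2$ separates the squared sum into contributions proportional to $\Var[h]$ and $\Var[r]$; substituting $\Var[r]=\Var[f]-\Var_A\E_{B|A}[f]$ and multiplying through by $\lambda_2=s/(1+s)$, the calibration $\lambda_1=1-s\rho^2$ is precisely what makes the coefficient of $\Var_A\E_{B|A}[f]$ cancel on the two sides and the coefficient of $\Var[f]$ collapse to $1$. This establishes $(\lambda_1,\lambda_2)\in\fS(A,B)$, and letting $s\to 0^+$ completes the proof. I do not foresee any substantial obstacle here; the only conceptual move is recognising that after the decomposition $f=h+r$ the single-variable maximal-correlation bound applies cleanly to the function-of-$A$ piece $h$, and that the free parameter $s$ in Young's inequality is exactly what permits the ratio to be pushed arbitrarily close to $\rho^2$.
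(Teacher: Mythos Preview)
Your proof is correct, and it is genuinely different from the paper's argument in Appendix~\ref{app:max-correlation}. Both routes start from the same orthogonal decomposition $f=h+r$ with $h=\E_{B|A}[f]$, but the paper proceeds by contradiction: it fixes $\lambda_1^{(n)}=1-(\rho^2+\epsilon)/n$, $\lambda_2^{(n)}=1/n$, assumes a violating $f$ exists, observes that the violation forces $\E_A\Var_{B|A}[f]=\Var[r]<1/n$, and then controls $\Var_B\E_{A|B}[f]$ by treating $r=f-\tilde f$ as a small $L^2$ perturbation of the function $\tilde f=h$ of $A$ alone, eventually reaching a contradiction with~\eqref{eq:def-max-corr-6} for large $n$. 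Your argument replaces this limiting contradiction by a single clean inequality: the bound $\Var_B\E_{A|B}[f]\le(\rho\sqrt{\Var[h]}+\sqrt{\Var[r]})^2$ followed by Young with parameter $s$ is exactly calibrated so that the choice $\lambda_1=1-s\rho^2$, $\lambda_2=s/(1+s)$ makes the coefficients match up identically, with no error terms or asymptotics. Your version is shorter and more transparent; the paper's sequence $(\lambda_1^{(n)},\lambda_2^{(n)})$ has the minor advantage that it is reused verbatim in the proof of Theorem~\ref{thm:inf-rho-mc-ribbon-NS}, though your explicit pair would serve equally well there.
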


The above theorem shows that maximal correlation can be characterized in terms of the MC~ribbon. 
Since by Theorem~\ref{thm:MC-ribbon-tensor-data} the MC~ribbon has the tensorization property and satisfies the data processing inequality, so does maximal correlation.

\begin{corollary}[\cite{Witsenhausen}] 
Maximal correlation has the following properties:
\begin{enumerate}
\item[\rm{(i)}]\emph{[Tensorization]} If $p_{A_1A_2B_1B_2}=p_{A_1B_1}\cdot p_{A_2B_2}$, then 
$$\rho(A_1A_2, B_1B_2)=\max\{\rho(A_1, B_1), \rho(A_2, B_2)\}.$$

\item[\rm{(ii)}]\emph{[Data processing]} If $p_{A_1A_2B_1B_2}=p_{A_1B_1}\cdot p_{A_2|A_1}\cdot p_{B_2|B_1}$, then 
$$ \rho(A_1, B_1)\geq\rho(A_2, B_2).$$
\end{enumerate}
\label{corol:rho-data-tensor-3}
\end{corollary}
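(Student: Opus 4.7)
The plan is to deduce both parts from Theorem~\ref{thm:rho-MC-ribbon}, which identifies $\rho^2(A,B)$ with the infimum of $(1-\lambda_1)/\lambda_2$ over the slice $\fS(A,B)\cap\{\lambda_2>0\}$, together with the corresponding properties of the MC~ribbon proved in Theorem~\ref{thm:MC-ribbon-tensor-data}. The guiding idea is that structural statements about $\fS$ translate into quantitative statements about $\rho$ via this infimum characterization.

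For part~(ii) the argument is immediate: Theorem~\ref{thm:MC-ribbon-tensor-data}(ii) gives $\fS(A_1,B_1)\subseteq\fS(A_2,B_2)$, so the infimum defining $\rho^2(A_2,B_2)$ is taken over a larger set and therefore satisfies $\rho^2(A_2,B_2)\le\rho^2(A_1,B_1)$.

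For part~(i), the lower bound $\rho(A_1A_2,B_1B_2)\ge\max\{\rho(A_1,B_1),\rho(A_2,B_2)\}$ follows by applying part~(ii) to the two ``marginalization'' maps $(A_1A_2,B_1B_2)\mapsto(A_i,B_i)$, $i=1,2$, which are local tensor products of stochastic kernels. The upper bound is the genuine content. I would first record three properties of $\fS$ that fall out of~\eqref{eq:def-cond-var-lambda} by inspection: (a)~$\fS(A,B)$ is convex, since it is cut out by a family of linear inequalities in $(\lambda_1,\lambda_2)$; (b)~$\fS(A,B)$ is downward closed in each coordinate, because the right side of~\eqref{eq:def-cond-var-lambda} is monotone in each $\lambda_i$ while the left side does not depend on them; and (c)~$(1,0)\in\fS(A,B)$, by the law of total variance.

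Given these, fix $r>\max\{\rho^2(A_1,B_1),\rho^2(A_2,B_2)\}$. By Theorem~\ref{thm:rho-MC-ribbon}, for each $i=1,2$ there exists $(\lambda_1^{(i)},\lambda_2^{(i)})\in\fS(A_i,B_i)$ with $\lambda_2^{(i)}>0$ and $\lambda_1^{(i)}>1-r\lambda_2^{(i)}$. By~(a) and~(c) the segment joining $(1,0)$ to $(\lambda_1^{(i)},\lambda_2^{(i)})$ lies in $\fS(A_i,B_i)$, and every point $(\lambda_1,\lambda_2)$ on this segment still satisfies $\lambda_1>1-r\lambda_2$. Setting $\lambda_2^{*}=\min(\lambda_2^{(1)},\lambda_2^{(2)})$ and restricting to this common second coordinate yields points $(\mu_i,\lambda_2^{*})\in\fS(A_i,B_i)$ with $\mu_i>1-r\lambda_2^{*}$; by~(b), the point $(\mu,\lambda_2^{*})$ with $\mu=\min(\mu_1,\mu_2)$ then lies in both $\fS(A_i,B_i)$, hence in $\fS(A_1A_2,B_1B_2)$ by Theorem~\ref{thm:MC-ribbon-tensor-data}(i). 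Since $(1-\mu)/\lambda_2^{*}<r$, a second application of Theorem~\ref{thm:rho-MC-ribbon} gives $\rho^2(A_1A_2,B_1B_2)<r$, and letting $r\downarrow\max\{\rho^2(A_1,B_1),\rho^2(A_2,B_2)\}$ completes the proof. The only genuinely subtle step is this upper bound: one must align a near-optimal witness in each $\fS(A_i,B_i)$ at a \emph{common} point of the plane, which is exactly what the geometric properties~(a)--(c) allow one to do and what a purely formal manipulation of infima over $\fS(A_1,B_1)\cap\fS(A_2,B_2)$ would not.
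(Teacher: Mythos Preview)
Your proof is correct and follows the paper's intended route, namely deducing both parts from Theorem~\ref{thm:rho-MC-ribbon} and Theorem~\ref{thm:MC-ribbon-tensor-data}; the paper itself only states this in one sentence without spelling out the details. You rightly identify that the upper bound in~(i) requires care, since an infimum over $\fS(A_1,B_1)\cap\fS(A_2,B_2)$ is not a priori the maximum of the two individual infima, and you resolve this cleanly via the geometric properties of $\fS$.

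The only minor difference worth noting is how this alignment step is carried out. Where you argue abstractly from convexity, downward closure, and $(1,0)\in\fS$, the paper (in the analogous proof of Theorem~\ref{thm:inf-rho-mc-ribbon-NS} in Appendix~\ref{app:inf-rho-mc-ribbon-NS}) instead invokes the explicit near-optimal witnesses $(\lambda_1^{(n)},\lambda_2^{(n)})=(1-(\rho^2+\epsilon)/n,\,1/n)$ constructed in the proof of Theorem~\ref{thm:rho-MC-ribbon}: replacing $\rho^2$ by the larger of the two values only decreases $\lambda_1^{(n)}$, so by downward closure the resulting point lies in both ribbons for large $n$. Your geometric argument is essentially the same idea in more intrinsic language, and has the advantage of not depending on the particular form of the witnesses.
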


Here is another consequence of the above theorem and Theorem~\ref{thm:MC-HC-ribbon} (relating the HC and MC ribbons). 

\begin{corollary}\label{corol:MC-subset-HC}
For any bipartite distribution $p_{AB}$ we have
$$\fR(A, B)\subseteq \fS(A, B),$$
and 
$$s^*(A,B):=\inf_{(\lambda_1,\lambda_2)\in \fR(A,B)} \frac{1-\lambda_1}{\lambda_2}\geq \,\rho^2(A,B).$$
\end{corollary}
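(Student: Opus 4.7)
The plan is to observe that this corollary is essentially a direct combination of two results that have already been established in the excerpt, namely part (iii) of Theorem~\ref{thm:MC-HC-ribbon} and Theorem~\ref{thm:rho-MC-ribbon}.

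For the first assertion $\fR(A,B) \subseteq \fS(A,B)$, I would simply invoke part (iii) of Theorem~\ref{thm:MC-HC-ribbon}, which states exactly this inclusion. No additional work is needed.

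For the second assertion, I would use the inclusion just established as follows. By Theorem~\ref{thm:rho-MC-ribbon} we have
\begin{equation*}
\rho^2(A,B) \;=\; \inf_{\substack{(\lambda_1,\lambda_2)\in \fS(A,B)\\ \lambda_2\neq 0}} \frac{1-\lambda_1}{\lambda_2}.
\end{equation*}
Since $\fR(A,B) \subseteq \fS(A,B)$, the set of admissible pairs $(\lambda_1,\lambda_2)$ with $\lambda_2\neq 0$ on the right-hand side of the definition of $s^*(A,B)$ is a subset of the set of admissible pairs in the above infimum. Taking infimum over a smaller set can only increase its value, so
\begin{equation*}
s^*(A,B) \;=\; \inf_{\substack{(\lambda_1,\lambda_2)\in \fR(A,B)\\ \lambda_2\neq 0}} \frac{1-\lambda_1}{\lambda_2} \;\geq\; \inf_{\substack{(\lambda_1,\lambda_2)\in \fS(A,B)\\ \lambda_2\neq 0}} \frac{1-\lambda_1}{\lambda_2} \;=\; \rho^2(A,B).
\end{equation*}

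There is no genuine obstacle here; the entire content of the corollary has been packaged into the two theorems cited. The only minor bookkeeping point is to notice that the infimum defining $s^*(A,B)$ implicitly ranges over pairs with $\lambda_2\neq 0$ (otherwise the ratio is undefined), so that the comparison of infima is between compatible sets.
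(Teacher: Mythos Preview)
Your proposal is correct and matches the paper's approach exactly: the paper presents this corollary without explicit proof, simply as a consequence of Theorem~\ref{thm:MC-HC-ribbon} (for the inclusion) and Theorem~\ref{thm:rho-MC-ribbon} (for the bound on $s^*$), which is precisely the combination you invoke. Your added remark about the implicit restriction $\lambda_2\neq 0$ in the definition of $s^*$ is a sensible clarification that the paper leaves tacit.
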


We finish this section by pointing out that the MC~ribbon and HC~ribbon are not equal in general. Indeed there are examples~\cite[Section II A]{Anantharametal} of bipartite distributions $p_{AB}$ for which $s^*(A, B)$ is strictly greater than $\rho^2(A, B)$, which by Theorem~\ref{thm:rho-MC-ribbon} (that expresses maximal correlation in terms of the MC ribbon) gives $\fR(A, B)\neq \fS(A,B)$.

\section{MC~ribbon for no-signaling boxes}\label{sec:MC-ribbon-wiring}

In the same way we defined the HC~ribbon for no-signaling boxes, we may define the MC~ribbon for them too.

\begin{definition}
Given a no-signaling box $p(ab|xy)$, we define its MC~ribbon to be the intersection of the MC~ribbons of its outputs conditioned on all possible inputs, i.e.,
$$\mathfrak S(A, B|X, Y):=\bigcap_{x,y}\mathfrak S(A, B|X=x, Y=y).$$
We also define the maximal correlation of $p(ab|xy)$ to be the maximum of the maximal correlation of its outputs conditioned on all possible inputs, i.e.,
$$\rho(A,B|X, Y):= \max_{x, y} \rho(A, B|X=x, Y=y).$$
\end{definition}

Let us first state a variant of Theorem~\ref{thm:rho-MC-ribbon} (that expresses maximal correlation in terms of the MC ribbon)  for no-signaling boxes. Its proof is essentially the same as the proof of Theorem~\ref{thm:rho-MC-ribbon} and is presented in Appendix~\ref{app:inf-rho-mc-ribbon-NS}.

\begin{theorem}\label{thm:inf-rho-mc-ribbon-NS}
For any no-signaling box $p(ab|xy)$ we have
\begin{align*}
\inf \frac{1-\lambda_1}{\lambda_2} = \rho^2(A, B|XY),
\end{align*} 
where the infimum is taken over $(\lambda_1, \lambda_2)\in \fS(A, B|X, Y)$ with $\lambda_2\neq 0$.
\end{theorem}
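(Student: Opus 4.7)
The plan is to adapt Theorem~\ref{thm:rho-MC-ribbon} to the conditional setting: one inequality is immediate from the definition of $\fS(A, B|X,Y)$ as an intersection, and the other follows by combining pointwise applications of Theorem~\ref{thm:rho-MC-ribbon} using a few elementary convex-geometric properties of the MC~ribbon.

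For the lower bound, since $\fS(A,B|X,Y) \subseteq \fS(A,B|X=x,Y=y)$ for every $(x,y)$, the infimum over the intersection is at least the infimum over each factor. By Theorem~\ref{thm:rho-MC-ribbon} that factor infimum equals $\rho^2(A,B|X=x,Y=y)$, and taking the maximum over $(x,y)$ yields $\rho^2(A,B|X,Y)$.

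For the upper bound, set $\rho_* := \rho(A,B|X,Y)$ and fix any $r > \rho_*$. I would use three elementary features of the MC~ribbon of any bipartite distribution, all immediate from its definition: (a) $(1,0) \in \fS$, by the law of total variance; (b) $\fS$ is convex (intersection of halfspaces indexed by $f$); (c) $\fS$ is closed under $(\lambda_1,\lambda_2) \mapsto t(\lambda_1,\lambda_2)$ for $t \in [0,1]$, since scaling the inequality~\eqref{eq:def-cond-var-lambda} by $t\leq 1$ only weakens it. By Theorem~\ref{thm:rho-MC-ribbon}, for each $(x,y)$ there exists $(\lambda_1^{(xy)},\lambda_2^{(xy)}) \in \fS(A,B|X=x,Y=y)$ with $\lambda_2^{(xy)} > 0$ and $(1-\lambda_1^{(xy)})/\lambda_2^{(xy)} < r^2$, i.e., $\lambda_1^{(xy)} + r^2\lambda_2^{(xy)} > 1$. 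Rescaling by $t_{xy} := 1/(\lambda_1^{(xy)} + r^2\lambda_2^{(xy)}) \in (0,1)$ produces, via (c), a point that still lies in $\fS(A,B|X=x,Y=y)$ and that sits on the line $L_r := \{\lambda_1 + r^2\lambda_2 = 1\}$ with strictly positive second coordinate.

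Finally I would intersect across $(x,y)$. For each $(x,y)$, the set $L_r \cap \fS(A,B|X=x,Y=y)$ is convex (by (b)), contains $(1,0)$ (by (a)), and contains a point with $\lambda_2>0$ by the previous step; parameterizing $L_r$ by $\lambda_2 \in [0, 1/r^2]$, it is therefore an interval of the form $[0, a_{xy}]$ with $a_{xy} > 0$. Intersecting these intervals over the finite set of input pairs $(x,y)$ leaves an interval $[0, \min_{x,y} a_{xy}]$ whose right endpoint is strictly positive; any point with $\lambda_2 > 0$ in this intersection belongs to $\fS(A,B|X,Y)$ and satisfies $(1-\lambda_1)/\lambda_2 = r^2$. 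Hence the infimum is at most $r^2$, and sending $r \downarrow \rho_*$ finishes the proof. The place that needs the most care is precisely this finiteness step: the fact that $\min_{x,y}a_{xy}>0$ uses that $\mathcal{X}\times\mathcal{Y}$ is finite, which was assumed from the outset of the paper; without it one would only get $\inf \leq r^2$ for every $r>\rho_*$ via a separate point, not a uniform one.
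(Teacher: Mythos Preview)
Your proof is correct. The lower bound matches the paper's argument exactly. For the upper bound, you and the paper take related but distinct routes.

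The paper does \emph{not} treat Theorem~\ref{thm:rho-MC-ribbon} as a black box: it reaches back into that proof and recalls the explicit sequence $\big(\lambda_1^{(n)},\lambda_2^{(n)}\big)=\big(1-(\rho^2(A,B|X{=}x,Y{=}y)+\epsilon)/n,\;1/n\big)$ shown there to lie in $\fS(A,B|X{=}x,Y{=}y)$ for large $n$. It then replaces $\rho^2(A,B|X{=}x,Y{=}y)$ by the larger number $\rho^2(A,B|X,Y)$, which only decreases $\lambda_1^{(n)}$ and hence (by coordinate-wise downward monotonicity of the MC~ribbon) keeps the point in each conditional ribbon; finiteness of $\mathcal X\times\mathcal Y$ lets one choose a single $n$ working for all $(x,y)$, placing the point in the intersection. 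Your argument instead uses only the \emph{statement} of Theorem~\ref{thm:rho-MC-ribbon}, and compensates by invoking three structural facts about $\fS$ (containing $(1,0)$, convexity, star-shapedness at the origin) to project the per-$(x,y)$ witnesses onto the common line $L_r$ and then intersect intervals. What this buys you is modularity: your proof would survive any change to the proof of Theorem~\ref{thm:rho-MC-ribbon}, whereas the paper's version is shorter precisely because it reuses that internal construction. Both arguments rely on the finiteness of $\mathcal X\times\mathcal Y$ at the same essential step, as you correctly flag.
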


Now we can prove a similar statement to Theorem~\ref{thm:main-theorem} (HC ribbon under wiring of no-signaling boxes) for the MC~ribbon of no-signaling boxes.

\begin{theorem}\label{thm:main-theorem-MC-ribbon}
Suppose that a no-signaling box $p(a'b'| x'y')$ can be generated from $n$ no-signaling boxes $p_i(a_ib_i| x_i y_i)$ where $i\in [n]$, under wirings. Then we have
\begin{align}\label{eq:mon-wiring-ribbon-MC}
\bigcap_{i=1}^n \fS(A_i,B_i|X_i, Y_i) \subseteq \fS(A',B'|X',Y').
\end{align}
\end{theorem}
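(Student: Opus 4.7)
The plan is to mimic the proof of Theorem~\ref{thm:main-theorem} (the HC~ribbon wiring monotonicity) while carrying along the third-order correction terms that distinguish the MC~ribbon from the HC~ribbon, exactly in the style of the passage following Theorem~\ref{thm:MC-HC-ribbon} that derives Theorem~\ref{thm:MC-ribbon-tensor-data} from its HC~ribbon counterpart. First I would fix the inputs $x',y'$ and, using data processing (part (ii) of Theorem~\ref{thm:MC-ribbon-tensor-data}), reduce the claim to showing that $(\lambda_1,\lambda_2)\in\fS(A_{[n]}X_{[n]}\Pi_{[n]},\,B_{[n]}Y_{[n]}\Omega_{[n]})$ with respect to the wiring-induced joint distribution~\eqref{eq:p-a-x-m-n-y}. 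Then I would invoke part (i) of Theorem~\ref{thm:MC-HC-ribbon} to replace this by the following perturbative statement: for every function $f$ on $A_{[n]}X_{[n]}\Pi_{[n]}B_{[n]}Y_{[n]}\Omega_{[n]}$ with $\E[f]=0$ and $\Var[f]=1$, and for the auxiliary variable $U_\epsilon$ built as in~\eqref{eq:def-p-ab-u-epsilon}, one must show
\begin{align*}
I\bigl(U_\epsilon;A_{[n]}X_{[n]}\Pi_{[n]}B_{[n]}Y_{[n]}\Omega_{[n]}\bigr) + O(\epsilon^3)
\geq \lambda_1 I\bigl(U_\epsilon;A_{[n]}X_{[n]}\Pi_{[n]}\bigr) + \lambda_2 I\bigl(U_\epsilon;B_{[n]}Y_{[n]}\Omega_{[n]}\bigr).
\end{align*}

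Next I would run the computation of Theorem~\ref{thm:main-theorem} with $U=U_\epsilon$: expand both mutual informations on the right using Lemma~\ref{lemma:amnm}, and use Lemma~\ref{lem:dist-wiring}(i) together with the chain rule to expose the per-box contributions. The only place where that proof used the HC~ribbon hypothesis on an individual box was inequality~\eqref{eq:mc-12-lambda}. In the MC~ribbon setting, part (ii) of Theorem~\ref{thm:MC-HC-ribbon} supplies the analogous inequality for the conditional distribution $p_{A_iB_i|X_i=x_i,Y_i=y_i}$ with auxiliary variable $U_\epsilon T_i S_i \Pi_i \Omega_i$, at the price of an additive error
\begin{align*}
\mathsf{err}_i := O\Bigl(\E\big[\,\bigl\|p_{A_iB_i|U_\epsilon T_i S_i \Pi_i \Omega_i X_i Y_i}-p_{A_iB_i|X_iY_i}\bigr\|_1^3\,\big]\Bigr).
\end{align*}
Averaging over $(X_i,Y_i)$ and running the remaining term-by-term analysis of $\chi(\lambda_1,\lambda_2)$ from Appendix~\ref{app:proof-main-thm} without change yields the target inequality up to a total additive error $\sum_{i=1}^n \mathsf{err}_i$.

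The remaining—and principal—task is to verify $\sum_i \mathsf{err}_i = O(\epsilon^3)$. The key observation is that in the unperturbed wiring distribution, Lemma~\ref{lem:dist-wiring}(i) gives $p_{A_iB_i|T_i S_i \Pi_i \Omega_i X_i Y_i}=p_i(a_ib_i|x_iy_i)=p_{A_iB_i|X_iY_i}$, so the history drops out from the reference marginal. Under the perturbation $p_{\text{all}|U_\epsilon=u}=p_{\text{all}}(1+\epsilon u f)$, a short computation identical in spirit to~\eqref{eq:ineq-a1b1ua2b2-5}--\eqref{eq:ineq-a1b1ua2b2-6} shows that every conditional marginal of the joint is a $1+O(\epsilon)$ multiplicative perturbation of its unperturbed counterpart, uniformly in the conditioned values; in particular $\|p_{A_iB_i|U_\epsilon T_i S_i \Pi_i \Omega_i X_i Y_i}-p_{A_iB_i|X_iY_i}\|_1 = O(\epsilon)$ uniformly. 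Cubing and averaging then gives $\mathsf{err}_i=O(\epsilon^3)$, and since $n$ is fixed the total error is $O(\epsilon^3)$ as required. The hard part is precisely this uniform $\ell^1$ control of each box's conditional perturbation along the wiring history; it is what forces the use of the perturbative family $U_\epsilon$ from part (i) of Theorem~\ref{thm:MC-HC-ribbon} rather than the more general but weaker remainder in part (ii), and it is the step where the no-signaling chain structure of Lemma~\ref{lem:dist-wiring} is really doing the work.
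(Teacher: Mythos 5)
Your proposal is correct and follows essentially the same route as the paper's proof: the paper's detailed argument likewise fixes $x',y'$, reduces by data processing to $(\lambda_1,\lambda_2)\in\fS(A_{[n]}X_{[n]}\Pi_{[n]},B_{[n]}Y_{[n]}\Omega_{[n]})$, constructs $U_\epsilon$ as in Theorem~\ref{thm:MC-HC-ribbon}(i), re-runs the chain-rule analysis of Theorem~\ref{thm:main-theorem} with the per-box error term from Theorem~\ref{thm:MC-HC-ribbon}(ii), and closes by showing each error is $O(\epsilon^3)$ via the multiplicative perturbation structure together with Lemma~\ref{lem:dist-wiring}(i) and the fact that $n$ is fixed. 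Your identification of the uniform $\ell^1$ control of each $p_{A_iB_i|U_\epsilon T_iS_i\Pi_i\Omega_iX_iY_i}$ as the crux, and of why one needs the $U_\epsilon$ family rather than arbitrary $U$, matches the paper's verification of inequality~\eqref{eq:ineq-u-epsilon-10}.
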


\begin{proof} 
The proof of this theorem is similar to the proof of Theorem~\ref{thm:main-theorem}; we only need to replace mutual information with variance. Our main tool in the proof of Theorem~\ref{thm:main-theorem} is the chain rule, which here should be replaced by the law of total variance.
 
Here we provide an alternative proof by using Theorem~\ref{thm:MC-HC-ribbon} (that relates the MC ribbon to the HC ribbon), and adapting the proof of Theorem~\ref{thm:main-theorem}. Following the same steps as in the proof of Theorem~\ref{thm:main-theorem}, we need to show that 
$$(\lambda_1, \lambda_2)\in \fS(A_{[n]}X_{[n]}\Pi_{[n]}, B_{[n]}Y_{[n]} \Omega_{[n]}).$$

Let $f$ be a function of $A_{[n]}X_{[n]}\Pi_{[n]}B_{[n]}Y_{[n]} \Omega_{[n]}$ with $\E[f]=0$ and $\Var[f]=\E[f^2]=1$. We then define $p_{U_\epsilon A_{[n]}X_{[n]}\Pi_{[n]}, B_{[n]}Y_{[n]} \Omega_{[n]}}$ as in part (i) of the statement of Theorem~\ref{thm:MC-HC-ribbon}. Following  the proof of Theorem~\ref{thm:main-theorem},
we need to add the extra term
\begin{align}&
O\big(\E_{U_\epsilon S_iT_i\Pi_i\Omega_iX_iY_i}\big[\|p_{A_iB_i|U_\epsilon S_iT_i\Pi_i\Omega_iX_iY_i} - p_{A_iB_i|X_iY_i}\|_1^3\big]\big)\leq O(\epsilon^3),\label{eq:ineq-u-epsilon-10}
\end{align}
to the right hand side of~\eqref{eq:mc-12-lambda}; The inequality in~\eqref{eq:ineq-u-epsilon-10} is proved below. Then adding up the above inequalities  for $i=1,\dots, n$, we obtain
\begin{align*}&
\sum_{i=1}^n O\big(\E_{U_\epsilon S_iT_i\Pi_i\Omega_iX_iY_i}\big[\|p_{A_iB_i|U_\epsilon S_iT_i\Pi_i\Omega_iX_iY_i} 
- p_{A_iB_i|X_iY_i}\|_1^3\big]\big)
\leq O(n\epsilon^3)=O(\epsilon^3).
\end{align*}
Here we use the fact that $n$ although arbitrarily large, is a constant. 
The rest of the proof is identical to the proof of Theorem~\ref{thm:main-theorem}. 

It only remains to verify ~\eqref{eq:ineq-u-epsilon-10}.  Let us define
\begin{align*}&g_{A_iB_iX_iY_iS_iT_i\Pi_i\Omega_i}=\E_{A_{[n]}B_{[n]}X_{[n]}Y_{[n]}\Pi_{n]}\Omega_{[n]} |A_iB_iX_iY_iS_iT_i\Pi_i\Omega_i}[f].\end{align*}
Note that for every $U_\epsilon=u\in \{\pm 1\}$ we have
$$p_{A_{[n]}X_{[n]}\Pi_{[n]}, B_{[n]}Y_{[n]} \Omega_{[n]}|u} = p_{A_{[n]}X_{[n]}\Pi_{[n]}, B_{[n]}Y_{[n]} \Omega_{[n]}}(1+\epsilon u f).$$
Thus we can compute
\begin{align*}
p_{A_iB_iX_iY_iS_iT_i\Pi_i\Omega_i|u}&=p_{A_iB_iX_iY_iS_iT_i\Pi_i\Omega_i}(1+\epsilon u g)\\
&= p_{X_iY_iS_iT_i\Pi_i\Omega_i}\cdot p_{A_iB_i|X_iY_i}(1+\epsilon u g),
\end{align*}
where in the second line we use part (i) of Lemma~\ref{lem:dist-wiring}. As a result, for every $(X_i, Y_i, S_i, T_i, \Pi_i, \Omega_i)=(x_i, y_i, s_i, t_i, \pi_i, \omega_i)$ we have
\begin{align*}p_{A_iB_i|x_i y_i s_i t_i \pi_i \omega_i u}&=\frac{p_{A_iB_i|x_iy_i}(1+\epsilon u g_{A_iB_i x_i y_i s_i t_i \pi_i \omega_i})}{1+\epsilon u \E_{A_iB_i|x_i y_i}[g_{A_iB_i x_i y_i s_i t_i \pi_i \omega_i}]}
=p_{A_iB_i|x_iy_i}+ O(\epsilon).\end{align*}
Here for the second equality we use $g(a_ib_ix_i y_i s_i t_i \pi_i \omega_i)=O(1)$ which is implied by $\E[f^2]=1$. Equation~\eqref{eq:ineq-u-epsilon-10} is an immediate consequence of the above equation.

\end{proof}

The following corollary is a consequence of the above theorem and Theorem~\ref{thm:inf-rho-mc-ribbon-NS} (that expresses maximal correlation of a no-signaling box in terms of its MC ribbon).

\begin{corollary}\label{corol:rho-wiring-monotone}
Suppose that a no-signaling box $p(a'b'| x'y')$ can be generated from $n$ no-signaling boxes $p_i(a_ib_i| x_i y_i)$ where $i\in [n]$, under wirings. Then we have
\begin{align}
\rho(A', B'|X', Y')\leq \max_i \rho(A_i, B_i| X_i, Y_i).
\end{align}

\end{corollary}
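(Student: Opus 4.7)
The plan is to combine Theorem~\ref{thm:main-theorem-MC-ribbon} with Theorem~\ref{thm:inf-rho-mc-ribbon-NS}. By the former, $\bigcap_{i=1}^{n}\fS(A_i, B_i|X_i, Y_i)\subseteq \fS(A', B'|X', Y')$, and since passing to a subset can only increase the infimum, applying Theorem~\ref{thm:inf-rho-mc-ribbon-NS} to the output box yields
\begin{equation*}
\rho^2(A', B'|X', Y') \;=\; \inf_{\substack{(\lambda_1, \lambda_2) \in \fS(A', B'|X', Y') \\ \lambda_2 > 0}} \frac{1-\lambda_1}{\lambda_2} \;\leq\; \inf_{\substack{(\lambda_1, \lambda_2) \in \bigcap_i \fS(A_i, B_i|X_i, Y_i) \\ \lambda_2 > 0}} \frac{1-\lambda_1}{\lambda_2}.
\end{equation*}
It therefore suffices to show that the right-hand infimum is at most $\rho_{\max}^2 := \max_i \rho^2(A_i, B_i|X_i, Y_i)$.

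The key step is to exhibit, for each $\epsilon > 0$ and every sufficiently small $\delta > 0$, the point $(1 - (\rho_{\max}^2 + \epsilon)\delta,\,\delta)$ inside every individual $\fS(A_i, B_i|X_i, Y_i)$; this point has ratio exactly $\rho_{\max}^2 + \epsilon$ by construction. For each $i$, Theorem~\ref{thm:inf-rho-mc-ribbon-NS} applied to box $i$ provides a witness $(\mu_1^{(i)}, \mu_2^{(i)}) \in \fS(A_i, B_i|X_i, Y_i)$ with $\mu_2^{(i)} > 0$ and $(1-\mu_1^{(i)})/\mu_2^{(i)} < \rho_{\max}^2 + \epsilon$ (using $\rho^2(A_i, B_i|X_i, Y_i)\le \rho_{\max}^2$). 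A direct inspection of the defining inequality shows that each ribbon $\fS(A_i, B_i|X_i, Y_i)$ is convex and coordinatewise down-closed in $[0,1]^2$, and that the corner $(1, 0)$ lies in every such ribbon (by the law of total variance). Convex-combining $(\mu_1^{(i)}, \mu_2^{(i)})$ with $(1, 0)$ in proportion $\delta/\mu_2^{(i)}$ places $(1 - \delta(1-\mu_1^{(i)})/\mu_2^{(i)},\,\delta)$ inside $\fS(A_i, B_i|X_i, Y_i)$ whenever $\delta \le \mu_2^{(i)}$; since $(1-\mu_1^{(i)})/\mu_2^{(i)} < \rho_{\max}^2 + \epsilon$, shrinking the first coordinate via down-closedness then places the target point $(1 - (\rho_{\max}^2 + \epsilon)\delta,\,\delta)$ in $\fS(A_i, B_i|X_i, Y_i)$ as well. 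Taking $\delta \le \min_i \mu_2^{(i)}$, which is positive since there are only finitely many boxes, makes this point common to all $n$ ribbons.

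Sending $\epsilon \to 0$ then yields $\rho^2(A', B'|X', Y') \le \rho_{\max}^2$, and taking square roots finishes the proof. The main obstacle is purely geometric: the witness points $(\mu_1^{(i)}, \mu_2^{(i)})$ furnished by Theorem~\ref{thm:inf-rho-mc-ribbon-NS} for the different boxes may live at very different locations in $[0,1]^2$, so one cannot hope to ``intersect'' them directly. The convex-combination-with-$(1,0)$ construction, together with down-closedness, is exactly what lets us collapse these differently placed witnesses onto a single common ray through $(1,0)$ with the prescribed slope $\rho_{\max}^2 + \epsilon$.
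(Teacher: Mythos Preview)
Your proof is correct and follows the same approach the paper sketches: combine Theorem~\ref{thm:main-theorem-MC-ribbon} with Theorem~\ref{thm:inf-rho-mc-ribbon-NS}. The paper records the corollary as an immediate consequence of these two results without further argument; your geometric filling-in (convexity, down-closedness, and pushing all witnesses onto a common ray through $(1,0)$) is a valid way to supply the missing step, and it is essentially a repackaging of the explicit construction $(\tilde\lambda_1^{(n)},\tilde\lambda_2^{(n)})=(1-(\rho_{\max}^2+\epsilon)/n,\,1/n)$ that the paper already uses in the proof of Theorem~\ref{thm:inf-rho-mc-ribbon-NS} in Appendix~\ref{app:inf-rho-mc-ribbon-NS}.
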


We can now state the following corollary that is similar to Corollary~\ref{corol:HC-ribbon-close-wiring} but for maximal correlation.

\begin{corollary}\label{corol:rho-close-set}
Let $r\in [0,1]$ be arbitrary. Then the set of no-signaling boxes $p(a, b|x, y)$ with 
$\rho(A, B|X, Y)\geq r$ is closed under wirings.
\end{corollary}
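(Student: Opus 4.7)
The plan is to invoke Corollary~\ref{corol:rho-wiring-monotone} directly; it is the substantive content, and the corollary in question is just a repackaging of that monotonicity fact as a closure statement. The earlier corollary establishes that if a no-signaling box $p'$ is produced by wiring boxes $p_1,\ldots,p_n$, then
\[
\rho(A',B'|X',Y')\leq\max_{i}\rho(A_i,B_i|X_i,Y_i).
\]
From this single inequality, closure of a level set of $\rho$ follows by pure bookkeeping: picking any family of boxes in the set and any wiring of them, one applies the monotonicity bound to transfer the defining inequality from the $\rho(A_i,B_i|X_i,Y_i)$ to $\rho(A',B'|X',Y')$. No induction on the number or structure of wirings is needed, because the definition of wiring from Section~\ref{sec:wiringn} was already folded into Corollary~\ref{corol:rho-wiring-monotone}.

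\textbf{Concrete steps.} Fix $r\in[0,1]$. Take any no-signaling boxes $p_1,\ldots,p_n$ in the stated set and any wiring producing $p'$. Apply Corollary~\ref{corol:rho-wiring-monotone} to upper bound $\rho(A',B'|X',Y')$ by $\max_i\rho(A_i,B_i|X_i,Y_i)$. Then use the hypothesis on the inputs to conclude the corresponding inequality for $\rho(A',B'|X',Y')$, which places $p'$ in the set. This is essentially a one-line proof.

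\textbf{The genuine obstacle: a sign mismatch.} The monotonicity bound $\rho(A',B'|X',Y')\leq\max_i\rho(A_i,B_i|X_i,Y_i)$ yields closure of the set $\{p:\rho(A,B|X,Y)\leq r\}$, not of the set $\{p:\rho(A,B|X,Y)\geq r\}$ as printed. In fact the set $\{p:\rho\geq r\}$ cannot be closed under wirings for any $r>0$: the trivial wiring that discards all supplied boxes and outputs independent uniform bits produces a box with $\rho=0$, so starting from boxes with arbitrarily high $\rho$ one can exit the set. I therefore read the statement as containing a typographical sign inversion; with the direction corrected to $\rho\leq r$, the proof is the one-line application of Corollary~\ref{corol:rho-wiring-monotone} above. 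With that correction, there is no technical difficulty to overcome, and the claim is immediate from the machinery already developed.
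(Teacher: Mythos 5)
Your reading of the corollary is correct on all counts, including the sign error. The paper states no separate proof for this corollary (it is positioned as an immediate consequence of Corollary~\ref{corol:rho-wiring-monotone}), and the intended argument is exactly the one-line application you give: if all resource boxes satisfy $\rho(A_i,B_i|X_i,Y_i)\leq r$, the monotonicity bound $\rho(A',B'|X',Y')\leq\max_i\rho(A_i,B_i|X_i,Y_i)\leq r$ places the wired box in the same set. Your observation that $\{p:\rho(A,B|X,Y)\geq r\}$ cannot be closed for $r>0$ is also sound: the final stochastic maps $q(a'|a_{[n]}x_{[n]}\pi_{[n]}x')$, $q(b'|b_{[n]}y_{[n]}\omega_{[n]}y')$ in the wiring definition are arbitrary, so Alice and Bob may ignore everything and output independent uniform bits, yielding a box with $\rho=0$. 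The printed $\geq$ is indeed a typographical slip, likely caused by a loose analogy with Corollary~\ref{corol:HC-ribbon-close-wiring} (``ribbon contains $\Lambda$''), where the fixed direction is a \emph{largeness} condition on the ribbon; but a large ribbon corresponds to \emph{small} $\rho$ via Theorem~\ref{thm:inf-rho-mc-ribbon-NS}, so the correct translation is $\rho\leq r$. Note also that the intended reading is confirmed by the use of Corollary~\ref{corol:rho-wiring-monotone} in Section~\ref{sec:isotropic}, where the argument is precisely that $\PR_{\eta_1}$ cannot be wired into $\PR_{\eta_2}$ because $\eta_1=\rho(\PR_{\eta_1})<\rho(\PR_{\eta_2})=\eta_2$, i.e.\ the closed set containing $\PR_{\eta_1}$ is a sublevel set of $\rho$.
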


The advantage of this corollary comparing to Corollary~\ref{corol:HC-ribbon-close-wiring} is that computing the maximal correlation of no-signaling boxes is a much easier problem than computing their HC~ribbon.

\section{Example: isotropic boxes} \label{sec:isotropic}

Isotropic boxes are defined by
\begin{align}\label{eq:iso-box-2}
\PR_{\eta}(a,b|x,y) := \begin{cases}
\frac{1+\eta}{4} \qquad\qquad \text{if } a \oplus b= xy,\\
\frac{1-\eta}{4} \qquad\qquad \text{otherwise}.\\
\end{cases}
\end{align}
Here $a, b, x,y\in \{0,1\}$ and $0\leq \eta\leq 1$ is arbitrary.  Note that $\PR_\eta(a|x,y)$ and $\PR_{\eta}(b|x,y)$ are both the uniform
distribution independent of $x, y$, so $\PR_\eta$ is a no-signaling box. 
Here as an application of Corollary~\ref{corol:rho-wiring-monotone} we prove Conjecture~\ref{conj1} in the range $\eta_1, \eta_2\in [1/\sqrt 2, 1]$. We start with the case where the parties are not provided with common randomness. 

\begin{theorem}\label{thm:isotropic-box}
For $0\leq \eta_1<\eta_2\leq 1$, using an arbitrary number of copies of $\PR_{\eta_1}$, a single copy of $\PR_{\eta_2}$ cannot be generated under wirings. 
\end{theorem}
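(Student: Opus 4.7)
The plan is to reduce the theorem to a direct application of Corollary~\ref{corol:rho-wiring-monotone} by computing the maximal correlation of the isotropic box $\PR_\eta$ and showing that it equals $\eta$. Once this computation is in place, the monotonicity of $\rho(\cdot, \cdot | \cdot, \cdot)$ under wirings immediately rules out simulating $\PR_{\eta_2}$ from copies of $\PR_{\eta_1}$ whenever $\eta_1 < \eta_2$.

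First, I would fix inputs $(X, Y) = (x, y)$ and examine the conditional distribution on $(A, B)$ induced by $\PR_\eta$. Because $\PR_\eta$ is no-signaling with uniform marginals, for every $(x, y)$ the pair $(A, B)$ has uniform marginals on $\{0, 1\}$, and either $\Pr(A = B) = (1+\eta)/2$ (when $xy = 0$) or $\Pr(A \neq B) = (1+\eta)/2$ (when $xy = 1$). In either case $(A, B)$ is a pair of uniform binary random variables whose Pearson correlation coefficient has absolute value $\eta$. Since each of $A$ and $B$ takes only two values, every non-constant real function of $A$ (respectively $B$) is an affine function of $A$ (respectively $B$), so the supremum in the definition~\eqref{eq:max-correlatoin-31} of $\rho(A, B | X=x, Y=y)$ is attained by $f_A = \pm (2A - 1)$ and $g_B = \pm (2B - 1)$ and equals the absolute Pearson correlation. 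Hence $\rho(A, B | X=x, Y=y) = \eta$ for every $(x, y)$, and therefore
\begin{equation*}
\rho(A, B | X, Y)\big|_{\PR_\eta} \;=\; \max_{x, y}\rho(A, B | X=x, Y=y)\big|_{\PR_\eta} \;=\; \eta.
\end{equation*}

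Second, I would argue by contradiction: suppose that $\PR_{\eta_2}$ could be generated from some finite number $n$ of copies of $\PR_{\eta_1}$ via wirings (note that if an arbitrary number of copies sufficed, finitely many do by the definition of wiring in Section~\ref{sec:wiringn}). Applying Corollary~\ref{corol:rho-wiring-monotone} to this simulation yields
\begin{equation*}
\eta_2 \;=\; \rho(A', B' | X', Y')\big|_{\PR_{\eta_2}} \;\leq\; \max_{i \in [n]} \rho(A_i, B_i | X_i, Y_i)\big|_{\PR_{\eta_1}} \;=\; \eta_1,
\end{equation*}
contradicting the hypothesis $\eta_1 < \eta_2$.

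There is no real obstacle here; the computation of $\rho(\PR_\eta) = \eta$ is elementary because the outputs are binary, and all the heavy lifting has already been done in Corollary~\ref{corol:rho-wiring-monotone}. The only point that deserves care is the remark that maximal correlation for a binary pair collapses to the absolute Pearson correlation, which I would state explicitly so the reader sees why only the single number $\eta$ enters the calculation.
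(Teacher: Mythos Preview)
Your proposal is correct and follows essentially the same approach as the paper: compute $\rho(\PR_\eta)=\eta$ by observing that for each fixed input pair the conditional output distribution is a binary pair with uniform marginals and correlation $\pm\eta$, then invoke Corollary~\ref{corol:rho-wiring-monotone}. The paper presents the same argument more tersely, simply stating that the conditional distribution is (up to a bit flip) the doubly-symmetric binary source $q_\eta$ and that ``a simple computation verifies'' $\rho(q_\eta)=\eta$; your explicit remark that for binary variables maximal correlation collapses to the absolute Pearson correlation is exactly the content of that computation.
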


\begin{proof} 
Let $q_{\eta}(c, c')$ be the following distribution.
\begin{align*}
q_{\eta}(c, c') := \begin{cases}
\frac{1+\eta}{4} \qquad\qquad \text{if } c = c',\\
\frac{1-\eta}{4} \qquad\qquad \text{otherwise}.\\
\end{cases}
\end{align*} 
If $xy=0$ then the conditional distribution $\PR_{\eta}(a,b| x,y)$ is equal to $q_{\eta}(a,b)$, and if $xy=1$ it
coincides with $q_{\eta}(a\oplus 1, b)$. On the other hand a simple computation verifies that 
$\rho(q_\eta)=\eta$. As a result we have
$$\rho(\PR_{\eta_1}) = \eta_1< \eta_2=\rho(\PR_{\eta_2}).$$
The result then follows from Corollary~\ref{corol:rho-wiring-monotone}.
\end{proof}

We now handle the case where common randomness is also available. For this we use the notion of CHSH value of boxes with binary inputs and outputs defined by
$$\CHSH:= \frac{1}{4}\sum_{a, b, x, y} \delta_{a\oplus b, xy} p(a, b|x,y),$$
where $\delta_{a\oplus b, xy}=1$ if $a\oplus b=xy$, and $\delta_{a\oplus b, xy}=0$ otherwise. Before stating our result, we need the following lemma which  shows that among all no-signaling boxes of the same CHSH value, PR boxes have the smallest maximal correlation.

\begin{lemma}\label{lem:ribbon-rho-CHSH} Let $q(\cdot | \cdot)$ be an arbitrary no-signaling box with binary inputs and outputs. Suppose that $\CHSH(q)\geq (1+\eta)/2$ for some $1/\sqrt{2}\leq \eta\leq 1$. Then we have $\rho(q)\geq \eta=\rho(\PR_\eta)$.
\end{lemma}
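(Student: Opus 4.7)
The plan is to reduce the problem to the usual Cauchy--Schwarz/Tsirelson-style estimate by parametrising the box in terms of correlators of $\pm 1$-valued variables. Set $A' := (-1)^A$ and $B' := (-1)^B$, and by no-signaling define
\[
\alpha_x := \E[A'\mid X=x], \qquad \beta_y := \E[B'\mid Y=y], \qquad c_{xy} := \E[A'B'\mid X=x, Y=y].
\]
A direct computation using $\Pr(A\oplus B = xy \mid x,y) = \tfrac12(1 + (-1)^{xy} c_{xy})$ rewrites the hypothesis $\CHSH(q) \geq (1+\eta)/2$ as
\[
c_{00} + c_{01} + c_{10} - c_{11} \;\geq\; 4\eta. \qquad (\star)
\]
Because $A$ and $B$ are binary, every non-constant function of $A$ (resp.\ $B$) is an affine image of $A'$ (resp.\ $B'$), so the maximum in~\eqref{eq:max-correlatoin-31} is attained at $f_A = A'$, $g_B = B'$. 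Hence in the non-degenerate case
$\rho(A,B\mid X{=}x, Y{=}y) = |c_{xy} - \alpha_x\beta_y|\big/\sqrt{(1-\alpha_x^2)(1-\beta_y^2)}$,
and more generally
\[
|c_{xy} - \alpha_x\beta_y| \;\leq\; \rho(A,B\mid X{=}x, Y{=}y)\cdot s_{xy}, \qquad s_{xy} := \sqrt{(1-\alpha_x^2)(1-\beta_y^2)},
\]
the degenerate case ($\alpha_x^2 = 1$ or $\beta_y^2 = 1$) being covered trivially since then $c_{xy}=\alpha_x\beta_y$ and $s_{xy}=0$.

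Arguing by contradiction, suppose $\rho(q) < \eta$. Summing the four bounds above with the signs in $(\star)$ gives
\[
c_{00}+c_{01}+c_{10}-c_{11} \;\leq\; L + \rho(q)\, S \;<\; L + \eta S,
\]
where $L := \alpha_0\beta_0+\alpha_0\beta_1+\alpha_1\beta_0-\alpha_1\beta_1$ and $S := (\sqrt{1-\alpha_0^2}+\sqrt{1-\alpha_1^2})(\sqrt{1-\beta_0^2}+\sqrt{1-\beta_1^2})$. Combining with $(\star)$ it suffices to establish the inequality $L + \eta S \leq 4\eta$, which would contradict $(\star)$ strictly. Setting $u := \alpha_0^2+\alpha_1^2$, $v := \beta_0^2+\beta_1^2$, three applications of Cauchy--Schwarz give: $L = \alpha_0(\beta_0+\beta_1)+\alpha_1(\beta_0-\beta_1) \leq \sqrt{2uv}$; $\sqrt{1-\alpha_0^2}+\sqrt{1-\alpha_1^2} \leq \sqrt{2(2-u)}$ and the analogous bound in $\beta$, so $S \leq 2\sqrt{(2-u)(2-v)}$; finally, in the form $\sqrt{ab}+\sqrt{cd}\leq\sqrt{(a+c)(b+d)}$,
\[
\sqrt{2uv} + 2\eta\sqrt{(2-u)(2-v)} \;\leq\; \sqrt{\bigl[2u + 4\eta^2(2-u)\bigr]\bigl[v + (2-v)\bigr]} \;=\; 2\sqrt{u(1-2\eta^2) + 4\eta^2}.
\]

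This is the crux: because $\eta \geq 1/\sqrt{2}$ means $1-2\eta^2 \leq 0$, the right-hand side is maximised at $u=0$ and equals $4\eta$, yielding $L+\eta S \leq 4\eta$ and closing the contradiction. The main obstacle is precisely this Cauchy--Schwarz chain: without the hypothesis $\eta \geq 1/\sqrt{2}$ the coefficient $1-2\eta^2$ is positive and the bound worsens to $2\sqrt{2}$, which is why the statement restricts to the Tsirelson regime. (Reassuringly, equality traces through to $u = v = 0$ and $r_{xy}\in\{+\eta,+\eta,+\eta,-\eta\}$, i.e.\ exactly $q = \PR_\eta$, so the bound is tight.)
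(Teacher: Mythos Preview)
Your argument is correct and its setup coincides with the paper's: the same correlator parametrisation, the same formula $\rho(A,B\mid x,y)=|c_{xy}-\alpha_x\beta_y|/s_{xy}$ for binary variables, and the same contradiction reducing everything to the scalar inequality $L+\eta S\le 4\eta$. Where you diverge is in how that last inequality is established. The paper packages $L+\eta S$ as a bilinear form $\tilde v^{\,t}\widetilde M\,\tilde w$ with $\tilde v,\tilde w\in\mathbb R^4$ of norm $\sqrt 2$ and a block-diagonal $4\times 4$ matrix $\widetilde M$, then computes $\|\widetilde M\|=\max\{\sqrt 2,\,2\eta\}$ and invokes $\tilde v^{\,t}\widetilde M\,\tilde w\le \|\tilde v\|\,\|\widetilde M\|\,\|\tilde w\|$. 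Your chain of three Cauchy--Schwarz steps is the same estimate unpacked by hand; it is more elementary and self-contained (no matrix norm computation), at the price of being a few lines longer. Both arguments identify the role of $\eta\ge 1/\sqrt 2$ in exactly the same way.

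One small point to tighten: your strict inequality $L+\rho(q)\,S<L+\eta S$ needs $S>0$, i.e.\ at least one pair $(x,y)$ with $|\alpha_x|<1$ and $|\beta_y|<1$. The paper disposes of the complementary case up front by noting that if every pair is degenerate then $c_{xy}=\alpha_x\beta_y$ for all $x,y$, so $(\star)$ becomes $L\ge 4\eta\ge 2\sqrt 2$, contradicting the local bound $L\le 2$. You should insert the same one-line observation; your sentence about the degenerate case currently only justifies the per-$(x,y)$ bound, not that $S>0$.
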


The proof of the above lemma can be found in Appenix~\ref{app:rho-chsh}. Our result is as follows:

\begin{theorem}\label{thm:isotropic-box-com-rand}
For $1/\sqrt 2\leq \eta_1<\eta_2\leq 1$, using common randomness and an arbitrary number of copies of $\PR_{\eta_1}$, a single copy of $\PR_{\eta_2}$ cannot be generated under wirings. 
\end{theorem}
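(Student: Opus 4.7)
The approach is to reduce the statement with common randomness to the common-randomness-free setting by convexity, and then to exploit the linearity of the CHSH value in the box together with an inequality relating $\CHSH$ to $\rho$. Suppose for contradiction that $\PR_{\eta_2}$ arises from common randomness $\Lambda$ and wirings of copies of $\PR_{\eta_1}$. Conditioning on $\Lambda=\lambda$ removes the common randomness, so writing $q_\lambda=\Pr(\Lambda=\lambda)$ and $p_\lambda$ for the box produced in the event $\Lambda=\lambda$, we obtain a decomposition $\PR_{\eta_2}=\sum_\lambda q_\lambda\, p_\lambda$ in which each $p_\lambda$ is a wiring of copies of $\PR_{\eta_1}$ without common randomness. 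By Corollary~\ref{corol:rho-wiring-monotone} and the computation $\rho(\PR_{\eta_1})=\eta_1$ from the proof of Theorem~\ref{thm:isotropic-box}, each $p_\lambda$ satisfies $\rho(A,B|X,Y)\leq \eta_1$.

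The heart of the proof is the inequality that for any no-signaling box $p$ with binary inputs and outputs, if $\rho(A,B|X,Y)\leq \eta$ with $\eta\geq 1/\sqrt 2$, then $\CHSH(p)\leq (1+\eta)/2$. To prove this, set $c_{xy}:=\E[(-1)^{A+B}\mid X{=}x,Y{=}y]$, so that $\CHSH(p)=\tfrac12+\tfrac18(c_{00}+c_{01}+c_{10}-c_{11})$. By no-signaling, the conditional marginal expectations $a_x:=\E[(-1)^A|X{=}x]$ and $b_y:=\E[(-1)^B|Y{=}y]$ are well defined. Applying the Cauchy-Schwarz characterization of maximal correlation to the centered functions $(-1)^A-a_x$ and $(-1)^B-b_y$ conditioned on each $(x,y)$, one gets $|c_{xy}-a_xb_y|\leq \eta\sqrt{(1-a_x^2)(1-b_y^2)}$. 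Summing the four signed terms, using Cauchy-Schwarz on the ``classical'' part
\[
\bigl|a_0(b_0+b_1)+a_1(b_0-b_1)\bigr|\leq \sqrt{(a_0^2+a_1^2)\cdot 2(b_0^2+b_1^2)},
\]
and the concavity of $\sqrt{\cdot}$ on the ``correlated'' part $\sqrt{1-a_0^2}+\sqrt{1-a_1^2}\leq 2\sqrt{1-(a_0^2+a_1^2)/2}$ (and likewise for $b$), the problem reduces to maximizing $2\sqrt 2\,\alpha\beta + 4\eta\sqrt{(1-\alpha^2)(1-\beta^2)}$ over $\alpha,\beta\in[0,1]$. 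The trigonometric substitution $\alpha=\sin\theta$, $\beta=\sin\phi$ followed by Cauchy-Schwarz shows that this maximum equals $\max(2\sqrt 2,\,4\eta)$, which is exactly $4\eta$ precisely in the regime $\eta\geq 1/\sqrt 2$. This yields $\CHSH(p)\leq (1+\eta)/2$.

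Combining these two ingredients with the linearity of $\CHSH$ in the box,
\[
\CHSH(\PR_{\eta_2})=\sum_\lambda q_\lambda\,\CHSH(p_\lambda)\leq \tfrac{1+\eta_1}{2},
\]
which contradicts $\CHSH(\PR_{\eta_2})=(1+\eta_2)/2>(1+\eta_1)/2$. The main technical obstacle is the $\CHSH$-vs-$\rho$ inequality itself; the assumption $\eta_1\geq 1/\sqrt 2$ is exactly the point at which the ``correlated'' term dominates the ``classical'' term in the above maximization, so that the bound $\CHSH\leq (1+\eta)/2$ becomes sharper than the Tsirelson-type bound $\CHSH\leq (2+\sqrt 2)/4$ and can be used to separate $\PR_{\eta_1}$ from $\PR_{\eta_2}$.
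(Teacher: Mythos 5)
Your proof is correct and takes essentially the same route as the paper: condition on the shared randomness to reduce to randomness-free wirings, invoke the monotonicity of maximal correlation under wirings, and combine the linearity of the CHSH functional with an inequality relating CHSH to $\rho$ to reach a contradiction. The only differences are cosmetic — you state and prove the key CHSH-versus-$\rho$ inequality in contrapositive form ($\rho\leq\eta\Rightarrow\CHSH\leq(1+\eta)/2$) and then bound CHSH for every value of the randomness and average, whereas the paper extracts a single value with high CHSH by pigeonhole and deduces $\rho$ is large; and your Cauchy--Schwarz-plus-concavity proof of that inequality is an equivalent restyling of the operator-norm computation the paper carries out in its appendix on the same quadratic form in $(a_x,\sqrt{1-a_x^2})$ and $(b_y,\sqrt{1-b_y^2})$.
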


\begin{proof} Suppose that $\PR_{\eta_2}$ can be generated with common randomness and with some copies of $\PR_{\eta_1}$ under wirings.  
Let the common randomness shared between the two parties be $R$. Then for each $R=r$, the parties generate some box $q_r(\cdot|\cdot)$ such that 
$$\Pr_{\eta_2}(a, b| x, y) = \sum_r p(R=r) q_r(a, b| x, y).$$
Note that the CHSH value is a linear function. Moreover, we have $\CHSH(\PR_{\eta_2})=(1+\eta_2)/2$. Therefore,
$$\sum_r p(R=r) \CHSH(q_r) = (1+\eta_2)/2.$$
Thus for at least one value of $r$ with $p(R=r)\neq 0$ we have $\CHSH(q_r)\geq (1+\eta_2)/2$. Lemma \ref{lem:ribbon-rho-CHSH} shows that among all no-signaling boxes of the same CHSH value, PR boxes have the smallest maximal correlation. Thus, $\CHSH(q_r)\geq (1+\eta_2)/2$ implies that
\begin{align}\label{eq:frpr-eta}
\rho(q_r) \geq \rho(\Pr_{\eta_2})=\eta_2.
\end{align}
On the other hand, by assumption two parties having access to some copies of $\PR_{\eta_1}$ can generate $q_r$ (without common randomness). Then by Corollary~\ref{corol:rho-wiring-monotone} we have
$$\rho(\eta_1)=\eta_1\geq \rho(q_r).$$
This is in contradiction with~\eqref{eq:frpr-eta} since $\eta_2>\eta_1$.
\end{proof}

\section{Conclusion}\label{sec:conclusion}

In this paper we defined the notion of HC~ribbon for no-signaling boxes, and proved a data processing type monotonicity property for the HC~ribbon of no-signaling boxes under wirings. 

We also defined the notion of MC~ribbon for bipartite distributions and showed that it has the tensorization property and is monotone under local operations. Generalizing its definition for no-signaling boxes, we showed that similar to the HC~ribbon, MC~ribbon is also monotone under wirings of no-signaling boxes. As a consequence of this result, we proved that maximal correlation is also monotone under wirings.  

As an application of these results, we proposed a systematic method to construct sets of no-signaling boxes that are closed under wirings. Moreover, we proved a conjecture about simulating isotropic boxes with each other for certain range of parameters. This provides us with a continuum of sets of non-local boxes that are closed under wirings. The existence of such sets was also conjectured in~\cite{Review14}.

In the problem of simulating isotropic boxes with each other, we used maximal correlation together with HC~ribbon in order to prove Conjecture~\ref{conj1} (in the range of parameters $\eta_1, \eta_2\in [1/\sqrt 2, 1]$) when common randomness is available too. Interestingly the range of parameters for which we could prove this conjecture starts with $1/\sqrt 2$ which is the border point of quantum correlations. 

To prove Conjecture~\ref{conj1} for other values of $\eta_1, \eta_2\in [1/2, 1/\sqrt 2]$,  one approach is to, instead of maximal correlation, use 
the monotonicity of either the MC~ribbon or HC~ribbon under wirings. Another strategy is to use the parameter $s^*$ defined in Corollary~\ref{corol:MC-subset-HC}. We leave the study of this conjecture for the range of parameters $\eta_1, \eta_2\in [1/2, 1/\sqrt 2]$, and investigating the success of the above approaches for future works.

In general HC~ribbon cannot be used alone to study wirings of no-signaling boxes when common randomness is available. We leave a more systematic study of common randomness in wirings for future works too.

In this paper we studied wirings of bipartite boxes only. We however may consider wirings of multipartite boxes. We have defined the HC~ribbon and MC~ribbon in the multipartite case too~\cite{BGfuture} (multipartite HC ribbon is also defined independently in \cite{SKVA}). With this definition, the work of~\cite{Nair} extends to the multipartite case as well. We however do not know whether Theorem~\ref{thm:main-theorem} and Theorem~\ref{thm:main-theorem-MC-ribbon} (on the HC and MC ribbons under wiring of no-signaling boxes) can be extended to the multipartite case or not.

\vspace{.2in}
\noindent\textbf{Acknowledgements.} 
The authors are thankful to T. V\'ertesi, M. Navascu\'es, O. Etesami and O. Ahmadi for their comments on the early drafts of this paper, and to the anonymous reviewers for their helpful comments.

\appendix

\section{Maximal correlation under wiring of two boxes}\label{mc-wiring2}
Here we present a proof of equation~\eqref{eq:AX-BY-maximal-correlation}. 
The first inequality $\rho(A, B)\leq \rho(AX, BY)$ is a consequence of the known fact that maximal correlation is monotone under local stochastic maps (see Corollary~\ref{corol:rho-data-tensor-3}).
Then we need to verify the second inequality, that is summarized in the following lemma.

\begin{lemma}\label{lem:simple-wiring-rho}
Suppose that $q(abxy)=q(xy)p(ab|xy)$ such that $p(a|xy)=p(a|x)$ and $p(b|xy)=p(b|y)$. Then we have
$$\rho(AX, BY)\leq \max\{\rho(X, Y), \rho(A, B|X, Y)\}.$$
\end{lemma}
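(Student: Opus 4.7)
My plan is to use the variational characterization
\[
\rho(AX,BY) \;=\; \max\, \E[f(A,X)\,g(B,Y)],
\]
where the maximum is over functions $f,g$ with $\E[f]=\E[g]=0$ and $\Var[f]=\Var[g]=1$. The strategy will be to split each of $f$ and $g$ into an input-only piece and a zero-mean fluctuation, so that the first piece feels $\rho(X,Y)$ while the second feels $\rho(A,B|X,Y)$.

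First I would fix such $f,g$ and set $h(X) := \E[f(A,X)\mid X]$ and $k(Y) := \E[g(B,Y)\mid Y]$, with residuals $f'(A,X) := f(A,X) - h(X)$ and $g'(B,Y) := g(B,Y) - k(Y)$. The crucial use of the no-signaling hypotheses $p(a|xy)=p(a|x)$ and $p(b|xy)=p(b|y)$ is that $\E[f(A,X)\mid X,Y] = h(X)$ and $\E[g(B,Y)\mid X,Y] = k(Y)$. Consequently, the two cross terms $\E[h(X)\, g'(B,Y)]$ and $\E[f'(A,X)\, k(Y)]$ vanish after conditioning on $(X,Y)$, and expansion gives
\[
\E[f g] \;=\; \E[h(X)\,k(Y)] \;+\; \E_{XY}\!\bigl[\E_{AB|XY}[f'(A,X)\, g'(B,Y)]\bigr].
\]

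Next I would bound each summand separately. Because $\E[h]=\E[k]=0$, the first summand is at most $\rho(X,Y)\sqrt{\Var[h(X)]\,\Var[k(Y)]}$ by the variational definition of $\rho(X,Y)$. For each fixed $(x,y)$, the inner conditional expectation is the covariance of the zero-mean functions $a\mapsto f'(a,x)$ and $b\mapsto g'(b,y)$ under $p_{AB|x,y}$, and so it is at most $\rho(A,B|X{=}x,Y{=}y)\sqrt{\Var_{A|x}[f']\,\Var_{B|y}[g']}$, which in turn is at most $\rho(A,B|X,Y)\sqrt{\Var_{A|x}[f']\,\Var_{B|y}[g']}$. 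Averaging over $(X,Y)$ and applying Cauchy--Schwarz upgrades this to $\rho(A,B|X,Y)\sqrt{\E_X\Var_{A|X}[f']\cdot\E_Y\Var_{B|Y}[g']}$.

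To finish, I would set $\sigma_1^2 := \Var[h(X)]$ and $\tau_1^2 := \E_X\Var_{A|X}[f']$ (and analogously $\sigma_2^2,\tau_2^2$ for $g$); the law of total variance gives $\sigma_1^2+\tau_1^2 = \Var[f] = 1$ and $\sigma_2^2+\tau_2^2 = 1$. Writing $\rho_{\max} := \max\{\rho(X,Y),\rho(A,B|X,Y)\}$, one last Cauchy--Schwarz yields
\[
\E[f g] \;\le\; \rho_{\max}\bigl(\sigma_1\sigma_2+\tau_1\tau_2\bigr) \;\le\; \rho_{\max}\sqrt{(\sigma_1^2+\tau_1^2)(\sigma_2^2+\tau_2^2)} \;=\; \rho_{\max},
\]
which is the desired bound. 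The only step that really needs care is the decoupling at the very beginning: without the no-signaling hypotheses the cross terms $\E[h(X)g'(B,Y)]$ and $\E[f'(A,X)k(Y)]$ would not vanish, the clean split into $\rho(X,Y)$ and $\rho(A,B|X,Y)$ would fail, and $\rho(AX,BY)$ could genuinely exceed both quantities. Everything after that is routine Cauchy--Schwarz plus the law of total variance.
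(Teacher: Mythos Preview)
Your proof is correct and essentially the same as the paper's: both use no-signaling to reduce $\E[f\mid X,Y]$ to $\E[f\mid X]$, bound the input-level and conditional-level pieces by $\rho(X,Y)$ and $\rho(A,B\mid X,Y)$ respectively, and finish with Cauchy--Schwarz plus the law of total variance. The only difference is organizational---you decompose $f=h+f'$ explicitly and show the cross terms vanish, while the paper uses the equivalent inequality $\E[fg]\le \E[f]\,\E[g]+\rho\sqrt{\Var[f]\Var[g]}$ applied iteratively (first to $p_{AB|xy}$, then to $q_{XY}$), which performs the same centering implicitly.
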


Our proof of this lemma borrows ideas from the proof of the tensorization property of maximal correlation given in~\cite{Kumar}.

\begin{proof}
Starting from the definition of maximal correlation, a simple algebra verifies that $\rho(A, B)$ is the smallest number $\rho$ such that for every $f_A, g_B$ we have
\begin{align}\label{eq:rho-inequality-def}
\E[f_Ag_B]\leq \E[f_A]\cdot \E[g_B] + \rho\Var[f_A]^{1/2}\Var[g_B]^{1/2}.
\end{align}
As a result, we need to show that for functions $f_{AX}$ and $g_{BY}$ we have
$$\E[fg]\leq \E_{AX}[f]\E_{BY}[g]+\rho\sqrt{\Var_{AX}[f]\Var_{BY}[g]},$$
where $\rho=\max\{\rho(X, Y), \rho(A, B|X, Y)\}$.
For this we compute
\begin{align*}
\E[fg]&= \E_{XY} \E_{AB|XY}[fg]\\
&\stackrel{\text{\rm (i)}}{\leq} \E_{XY}\left[ \E_{A|XY}[f]\cdot \E_{B|XY}[g] +\rho\sqrt{\Var_{A|XY}[f]\cdot \Var_{B|XY}[g]}   \,    \right]\\
&= \E_{XY}\left[ \E_{A|X}[f]\cdot\E_{B|Y}[g]\Big] 
+\rho\E_{XY}\Big[\sqrt{\Var_{A|X}[f]\cdot\Var_{B|Y}[g]}\,\right]       \\
&\stackrel{\text{\rm (ii)}}{\leq}  \E_{X}\E_{A|X}[f]\cdot\E_Y\E_{B|Y}[g] + \rho\sqrt{\Var_X\E_{A|X}[f]\cdot\Var_Y\E_{B|Y}[g]} +\rho\E_{XY}\Big[\sqrt{\Var_{A|X}[f]\cdot\Var_{B|Y}[g]}\,\Big]  \\
&\stackrel{\text{\rm (iii)}}{\leq} \E_{X}\E_{A|X}[f] \cdot \E_Y\E_{B|Y}[g] + \rho\sqrt{\Var_X\E_{A|X}[f]\cdot \Var_Y\E_{B|Y}[g]} +\rho\sqrt{\E_X\Var_{A|X}[f]\cdot \E_Y\Var_{B|Y}[g]}  \\
&\stackrel{\text{\rm (iv)}}{\leq} \E_{AX}[f]\cdot \E_{BY}[g] 
+\rho\cdot\Big(     
\sqrt{\Var_X\E_{A|X}[f]+\E_X\Var_{A|X}[f]}\sqrt{\Var_Y\E_{B|Y}[g]+ \E_Y\Var_{B|Y}[g]}\Big)                \\
&\stackrel{\text{\rm (v)}}{=}  \E_{AX}[f]\cdot \E_{BY}[g]+\rho\sqrt{\Var_{AX}[f]\Var_{BY}[g]}.
\end{align*}
Here in (i) we use~\eqref{eq:rho-inequality-def} for the conditional distribution $p_{AB|X=x,Y=y}$ for all $(x, y)$ and take average over all those inequalities. In (ii) we use~\eqref{eq:rho-inequality-def} for distribution $q_{XY}$ applied to functions $\E_{A|X}[f]$ and $\E_{B|Y}[g]$.
In (iii) and (iv) we use the Cauchy-Schwarz inequality, and in (v) we use the law of total variance.

\end{proof}

\section{Gray-Wyner problem and the HC~ribbon}\label{app:GW}

In this appendix we observe that the HC~ribbon is connected to the well-studied problem of Gray-Wyner which yields a tangible operational interpretation for the HC~ribbon. 

 The Gray-Wyner problem~\cite{GrayWyner} is a distributed source coding problem, consisting of a transmitter and two receivers. The transmitter has i.i.d.\ repetitions of two correlated sources $A_{[n]}, B_{[n]}$ and aims to send $A_{[n]}$ to the first receiver and $B_{[n]}$ to the second receiver. The transmitter can send a common message of rate $R_0$ over a noiseless channel to both the receivers, and private messages of rates $R_1$ and $R_2$ to the two receivers respectively. This is depicted in Figure~2. Then Gray and Wyner show that this is possible if and only if there exists some $p_{U|AB}$ such that
\begin{align}
R_0&\geq I(U;AB),\label{eq:gw-1}\\
R_1&\geq H(A|U),\\
R_2&\geq H(B|U).\label{eq:gw-3}
\end{align}
In particular, when $R_0=0$, we should have $R_1\geq H(A)$ and $R_2\geq H(B)$, which is consistent with Shannon's source compression theorem. Now suppose that we are allowing for some positive rate for common message $R_0>0$, and we are asking for the amount of reduction in private rates $R_1$ and $R_2$, i.e., how large $\Delta_{R_1}=H(A)-R_1$ and $\Delta_{R_2}=H(B)-R_2$ can be? These are two parameters that we would like to maximize simultaneously, but there is a tradeoff between them. Since by~\eqref{eq:gw-1}-\eqref{eq:gw-3} for some $p_{U|AB}$ we have
$$\Delta_{R_1}\leq I(U;A), \qquad \Delta_{R_2}\leq I(U;B), \qquad R_0\geq I(U;AB),$$
one can see that the non-negative triple $(\Delta_{R_1}, \Delta_{R_2}, R_0)$ is obtainable if and only if\footnote{This can be shown using the duality of linear programs.}
$$\lambda_1\Delta_{R_1}+\lambda_2\Delta_{R_2}\leq R_0, \qquad \forall (\lambda_1, \lambda_2)\in \fR(A, B).$$

As we use the HC~ribbon to study wirings of no-signaling boxes, we also notice that the Gray-Wyner problem is related to the principle of Information Casualty \cite{IC09} when the sources are independent. The authors have conjectured that this connection extends to correlated sources as well~\cite{OurICPaper}.

\begin{figure}
\begin{center}
\includegraphics[width=3.5in]{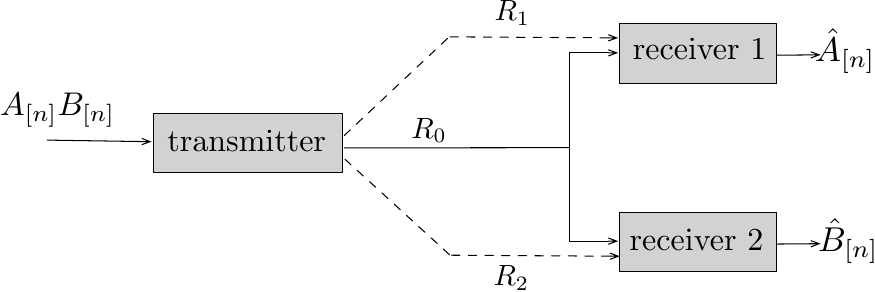}
\caption{The Gray-Wyner problem: A transmitter has two sources which should be sent to two receivers via three noiseless channels one of which is public and the other two are private. }
\end{center}
\end{figure}

\section{Proof of Lemma~\ref{lem:dist-wiring}}\label{app:lemm-proof}

Throughout this section, we need to write down the joint probability distribution of various random variables raised in wirings. First let us give some explanations on the validity of~\eqref{eq:p-a-x-m-n-y}. Suppose for a moment that all the boxes $p_i(a_ib_i|x_iy_i)=p_i(a_i|x_i)p_i(b_i|y_i)$ have the product form. So Alice and Bob are completely decoupled from each other. Then at her $j$-th action, Alice has $\wt_j$, and generates $\wpi_j$ and $\wx_j$. Then she puts $\wx_j$ as in the input of box $\wpi_j$ and observes $\wa_j$. Therefore, the joint distribution of random variables at Alice's side is
$$p(a_{[n]}x_{[n]}\pi_{[n]}|x') = \prod_{j=1}^n q(\wpi_j \wx_j|\wt_j x') p_{\wpi_j}(\wa_j|\wx_j).$$
We notice that $\wpi_1, \dots, \wpi_n$ is a permutation of $[n]$. So we may instead write this product with indices $j=\pi_i$. Then using $\wpi_{\pi_i}=i$ we obtain
\begin{align*}
p(a_{[n]}x_{[n]}\pi_{[n]}|x') & = \prod_{i=1}^n q(\wpi_{\pi_i} \wx_{\pi_i}|\wt_{\pi_i} x') p_{\wpi_{\pi_i}}(\wa_{\pi_i}|\wx_{\pi_i})\\
&= \prod_{i=1}^n q(i x_i|t_i x') p_{i}(a_{i}|x_{i}).
\end{align*}
By the same reasoning for random variables at Bob's side we have
\begin{align*}
p(b_{[n]}y_{[n]}\omega_{[n]}|y') &= \prod_{i=1}^n q(i y_i|s_i y') p_{i}(b_{i}|y_{i}).
\end{align*}
As a result when the boxes $p_i(a_ib_i|x_iy_i)$ have the product form $p_i(a_i|x_i)p_i(b_i|y_i)$, we have
\begin{align}
&p\big(a_{[n]}b_{[n]}x_{[n]}y_{[n]} \pi_{[n]}\omega_{[n]}| x'y'\big)= \prod_{i=1} p_i(a_ib_i|x_iy_i)q(i x_i|t_i x')q(i y_i|s_i y').\label{eq:24asdf}
\end{align}

Now consider general no-signaling boxes $p_i(a_ib_i|x_iy_i)$ that are not necessarily of product form. Note that to generate $a_{[n]}b_{[n]}x_{[n]}y_{[n]} \pi_{[n]}\omega_{[n]}$ Alice and Bob use each box once. Thinking of this box as a channel, the probability $p\big(a_{[n]}b_{[n]}x_{[n]}y_{[n]} \pi_{[n]}\omega_{[n]}| x'y'\big)$ should be linear in terms of box $p_i(a_ib_i|x_iy_i)$ for any $i$. On the other hand we showed that~\eqref{eq:24asdf} is valid for product boxes. Then the same equation holds if $p_i(a_ib_i|x_iy_i)$ is in the linear span of product boxes. On the other hand no-signaling boxes are in the linear span of product boxes.\footnote{Note that to write a general no-signaling box as a linear combination of product boxes we do need negative coefficients. This claim can be verified for example by computing the dimensions of linear span of product boxes and no-signaling boxes individually.} So~\eqref{eq:24asdf} holds for all no-signaling boxes.

Now, let us turn to the proof of  Lemma~\ref{lem:dist-wiring}. Since everything is conditioned on $x',y'$ for simplicity of notation we drop the conditionings on $x',y'$ and keep in mind that they are fixed. We also drop index $i$ in $p_i(\cdot |\cdot)$ as it is clear from other indices. We then have
\begin{align}
&p\big(a_{[n]}b_{[n]}x_{[n]}y_{[n]} \pi_{[n]}\omega_{[n]}\big)=\prod_{i=1}^n\bigg[ p\big(a_ib_i\,\big|\,x_iy_i\big) q\big(ix_i\,\big|\,t_i \big)q\big(iy_i\,\big|\,s_i\big)\bigg].\label{eq:p-a-x-m-n}
\end{align}

Recall that Alice uses the $i$-box in her action $\Pi_i=\pi_i$. Before using this box, Alice has used the boxes $\wpi_1, \dots, \wpi_{\pi_i-1}$, and  has the transcript $t_i=\wpi_{[\pi_i-1]}\wx_{[\pi_i-1]}\wa_{[\pi_i-1]}$. She then uses box $i$ with input $x_i$ with probability $q(ix_i|t_i)$. Similarly Bob uses box $i$ in step $\Omega_i=\omega_i$. Before using this box he uses boxes $\womega_1, \dots, \womega_{\omega_i-1}$, and has transcript $s_i=\womega_{[\omega_i-1]}\wy_{[\omega_i-1]}\wb_{[\omega_i-1]}$. Then he chooses box $i$ with input $y_i$ with probability $q(iy_i|s_i)$. When Alice and Bob both use the $i$-th box with inputs $x_i, y_i$, their outputs are $a_i, b_i$ with probability $p(a_ib_i| x_iy_i)$.

Imagine that Alice and Bob perform their wirings until they get to the $i$-th box, and then they stop.  Then by the above discussion and the no-signaling condition, the joint distribution of the outcomes of Alice and Bob is
\begin{align}
p(t_ia_ix_i \pi_i s_i y_ib_i\omega_i)&=\prod_{j\in M_i} p(a_jb_j|x_jy_j) q(jx_j|t_j)q(jy_j|s_j) \times\prod_{j\in \wpi_{[\pi_i]}\setminus M_i} p(a_j|x_j)q(jx_j|t_j)\nonumber\\&~\quad\times \prod_{j\in \womega_{[\omega_i]\setminus M_i}} p(b_j|y_j)p(j y_j|s_j),\label{eq:p-a-x-m-i}
\end{align}
where  $M_i:=  \wpi_{[\pi_i]}\cap \womega_{[\omega_i]}$. Note that $\wpi_{\pi_i}= \womega_{\omega_i}=i$, so $i\in M_i$.

Now imagine that Alice performs wirings until she gets to the $i$-th box and then she stops, but Bob uses all the boxes. Then again by the no-signaling condition the joint distribution of their outcomes is 
\begin{align}
&p\big(  t_{i} a_{i} x_i \pi_i  b_{[n]}y_{[n]}\omega_{[n]}   \big) =\prod_{j\in \wpi_{[\pi_i]}}p(a_j b_j|x_jy_j)q(jx_j|t_j)q(jy_j|s_j)\times \prod_{j\notin \wpi_{[\pi_i]}} p(b_j|y_j)q(jy_j|s_j).\label{eq:p-a-x-k-n-ell-i}
\end{align}

To obtain the next required marginal, imagine that Alice performs $i$ wirings (until the $i$-th action)  and then stops, i.e., Alice stops when she uses box $\wPi_i$. Assume that Bob uses all the boxes. Then Alice observes $\wt_i, \wa_i, \wx_i, \wpi_i$ and Bob observes $ b_{[n]}y_{[n]}\omega_{[n]}$, and the joint distribution of these outcomes is 
\begin{align}
&p\big(  \wt_{i} \wa_{i} \wx_i \wpi_i  b_{[n]}y_{[n]}\omega_{[n]}   \big) =\prod_{j\in \wpi_{[i]}}p(a_j b_j|x_jy_j)q(jx_j|t_j)q(jy_j|s_j)\times\prod_{j\notin \wpi_{[i]}} p(b_j|y_j)q(jy_j|s_j).\label{eq:p-a-x-k-n-ell-i-wpi}
\end{align}

We can now prove  Lemma~\ref{lem:dist-wiring}.

\vspace{.2in}
\noindent\emph{Proof of} (i): Observe that in~\eqref{eq:p-a-x-m-i} nothing is conditioned on $a_i, b_i$ as $i\in M_i$. Then we have
\begin{align}
p\big(t_ix_i \pi_is_iy_i\omega_i\big) & =\sum_{a_i, b_{i}}  p\big(t_ia_ix_i s_iy_ib_i\big)\nonumber\\
&=\prod_{j\in M_i\setminus\{i\}} p(a_jb_j|x_jy_j) q(jx_j|t_j)q(jy_j|s_j)  \times\prod_{j\in \wpi_{[\pi_i]}\setminus M_i} p(a_j|x_j)q(jx_j|t_j)\nonumber
\\&\quad~\times \prod_{j\in \womega_{[\omega_i]\setminus M_i}} p(b_j|y_j)p(j y_j|s_j).\label{eq:p-t-i-x-i}
\end{align}
Therefore,
\begin{align}\label{eq:abxsty89}
p\big(a_ib_{i}\big|\,t_ix_i\pi_i s_iy_i\omega_i\big) & =\frac{p\big(a_ib_{i}t_ix_i\pi_i s_iy_i\omega_i\big)}{p\big(t_ix_i\pi_i s_iy_i\omega_i\big) }
 = p\big(a_ib_i\big|\,x_iy_i\big).
\end{align}
As a result $H(A_iB_{i} | T_i\Pi_i S_{i} \Omega_iX_iY_i) = H(A_iB_{i} | X_iY_i)$,
 which gives the desired result.

\vspace{.2in}
\noindent\emph{Proof of} (ii): From~\eqref{eq:abxsty89} we find that
\begin{align*}
p\big(a_i\big|\,t_ix_i\pi_i s_iy_i\omega_i\big) & 
 = p\big(a_i\big|\,x_iy_i\big) = p(a_i\big|x_i).
\end{align*}
Therefore $H(A_i\big|\,T_iX_i\Pi_i S_iY_i\Omega_i\big)=H(A_i\big| X_i)$, or equivalently $I(A_i; T_i\Pi_i S_iY_i\Omega_i\big|X_i) = 0$
This gives $$I\big(A_i; S_i Y_i \Omega_i\big|\, T_i X_i \Pi_i\big)=0.$$ The other equality is proved similarly.

\vspace{.2in}
\noindent\emph{Proof of} (iii): Again using the fact that $i\in  \wpi_{[\pi_i]}$ and nothing in~\eqref{eq:p-a-x-k-n-ell-i} is conditioned on $a_i$ we have
\begin{align}
p\big(   t_{i}  x_i \pi_i  b_{[n]}y_{[n]}\omega_{[n]}   \big)&= \sum_{a_{i}} p\big(  t_{i} a_{i} x_i \pi_i  b_{[n]}y_{[n]}\omega_{[n]}   \big)\nonumber\\
& = p(b_i|x_iy_i)q(ix_i|t_j)q(iy_i|s_i) \times \prod_{j\in \wpi_{[\pi_i]}\setminus\{i\}}p(a_j b_j|x_jy_j)q(jx_j|t_j)q(jy_j|s_j)\nonumber\\ &~\quad \times \prod_{j\notin \wpi_{[\pi_i]}} p(b_j|y_j)q(jy_j|s_j).
\label{eq:ptxasetminusi}
\end{align}
We therefore have
\begin{align*}
p\big(  a_i| t_{i}  x_i \pi_i  b_{[n]}y_{[n]}\omega_{[n]}   \big) & =\frac{p\big(   t_{i} a_i x_i \pi_i  b_{[n]}y_{[n]}\omega_{[n]}   \big)}{p\big(   t_{i}  x_i \pi_i  b_{[n]}y_{[n]}\omega_{[n]}   \big)} \\
& = \frac{p\big(a_{i}b_i|x_iy_i\big)}{p\big(b_i|y_i\big)}\\
& = p(a_i| b_ix_iy_i).
\end{align*}
This means that  $$H(A_i|X_iB_iY_i) = H(A_i| T_iX_i\Pi_iB_{[n]}Y_{[n]}\Omega_{[n]}),$$ or equivalently
$$I\big(A_{i};  T_{i}\Pi_iB_{[n]}Y_{[n]}\Omega_{[n]}\big|\, X_iB_i Y_{i} \big)=0.$$
This gives $I\big(A_i;  B_{[n]}Y_{[n]}\Omega_{[n]}  \big|\, T_iX_i \Pi_i B_i Y_i\Omega_i\big)=0$. The other equality is proved similarly.

\vspace{.2in}
\noindent\emph{Proof of} (iv): In~\eqref{eq:p-a-x-k-n-ell-i-wpi} nothing is conditioned on $\tilde a_i$. Then we have
\begin{align*}
p\big(  \wt_{i} \wx_i \wpi_i  b_{[n]}y_{[n]}\omega_{[n]}   \big) = q(\wpi_i\wx_i|\wt_i)\times\prod_{j\in \wpi_{[i-1]}}p(a_j b_j|x_jy_j)q(jx_j|t_j)q(jy_j|s_j)\times \prod_{j\notin \wpi_{[i-1]}} p(b_j|y_j)q(jy_j|s_j),
\end{align*}
where we use $\wx_i=x_{\wpi_i}$ and $\wt_i=t_{\wt_i}$. Thus since in the above equation nothing is conditioned on $\wpi_i, \wx_i$ we obtain
$$p(\wpi_i\wx_i|  \wt_i  b_{[n]}y_{[n]}\omega_{[n]} ) = q(\wpi_i\wx_i|\wt_i).$$
This gives $H\big(\wPi_i\wX_i\big|\, \wT_{i}   B_{[n]}Y_{[n]}\Omega_{[n]} \big) = H\big(\wPi_i\wX_i|\wT_i\big)$ which is equivalent to 
$I\big(\wPi_i\wX_i; B_{[n]}Y_{[n]}\Omega_{[n]} \big|\, \wT_i\big)=0$. The other equation is proved similarly.

\section{Proof of Lemma~\ref{lemma:amnm}}\label{app:lemma-amnnproof}
First note that \eqref{eq:tyab5new} is implied by \eqref{eq:tyab5} by setting $U=A_{[n]}X_{[n]}\Pi_{[n]}$. Thus, we only need to prove \eqref{eq:tyab5}. By the chain rule we have
\begin{align}
I\big(U;A_{[n]}X_{[n]}\Pi_{[n]}|V\big)&=I\big(U;\wA_{[n]}\wX_{[n]}\wPi_{[n]}|V\big)\nonumber\\
&=\sum_{i=1}^n I\big(U;\wA_{i}\wX_{i}\wPi_{i}|\wA_{[i-1]}\wX_{[i-1]}\wPi_{[i-1]}V\big)
\nonumber\\&=\sum_{i=1}^n \bigg[I\big(U;\wX_{i}\wPi_i|\wA_{[i-1]}\wX_{[i-1]}\wPi_{[i-1]}V\big)+I\big(U;\wA_{i}|\wA_{[i-1]}\wX_{[i-1]}\wPi_{[i-1]} \wPi_{i}\wX_{i}V\big)\bigg]\nonumber\\
&=\sum_{i=1}^n \bigg[I\big(U;\wX_{i}\wPi_i|\wT_iV\big)+I\big(U;\wA_{i}|\wT_i \wPi_{i}\wX_{i}V\big)\bigg],\label{eq:cr345}
\end{align}
where in the last line we use $\wT_{i}=\wA_{[i-1]}\wX_{[i-1]}\wPi_{[i-1]}$.
Now note that 
\begin{align*}
\sum_{i=1}^nI\big(U;\wA_{i}|\wT_i \wPi_{i}\wX_{i}V\big) & = \sum_{i=1}^n \sum_{j=1}^n I\big(U;\wA_{i}|\wT_i \wX_{i}V, \wPi_{i}=j\big) p(\wPi_i=j) \\
 & = \sum_{i=1}^n \sum_{j=1}^n I\big(U;A_{j}|T_j X_{j}V, \wPi_{i}=j\big) p(\wPi_i=j) \\
& = \sum_{j=1}^n \sum_{i=1}^n  I\big(U;A_{j}|T_j X_{j}V, \Pi_{j}=i\big) p(\Pi_j=i) \\
& = \sum_{j=1}^n  I\big(U;A_{j}|T_j X_{j} \Pi_{j}V\big), 
\end{align*}
where in the second line we use $\wA_i=A_{\wPi_i}$, $\wX_i=X_{\wPi_i}$ and $\wT_i=T_{\wPi_i}$, and in the third line we use $\Pi_{\wPi_i}=i$. Since 
$T_i^{e}=T_i X_{i} \Pi_{i}$, we can write
$\sum_{i=1}^nI\big(U;\wA_{i}|\wT_i \wPi_{i}\wX_{i}V\big)= \sum_{i=1}^n  I\big(U;A_{i}|T^e_iV\big)$. Putting this in~\eqref{eq:cr345} we get equation~\eqref{eq:tyab5}.

\section{Proof of Theorem~\ref{thm:main-theorem}}\label{app:proof-main-thm}

To complete the proof of Theorem~\ref{thm:main-theorem} we need to show that $\chi(\lambda_1, \lambda_2)\geq 0$ for $\lambda_1, \lambda_2\in [0,1]$ with $\lambda_1+\lambda_2\geq 1$ where
\begin{align*}
\chi(\lambda_1, \lambda_2)&:=-\sum_{i=1}^n \bigg[\lambda_{1}I\big(U;\wX_{i}\wPi_i|\wT_i\big)+\lambda_2I\big(U;\wY_{i}\wOmega_i|\wS_i\big)\nonumber+\lambda_1I\big(U;A_{i};S^e_{i}\big|T_i^e\big) + \lambda_2 I\big(U;B_{i};T^e_{i}\big|\,S_i^e\big) \nonumber\\
&\qquad\qquad~ +I\big(U;A_iB_{i}\big|\,T^e_{i} S^e_i\big)\bigg]+I\big(U;A_{[n]}X_{[n]}\Pi_{[n]}B_{[n]}Y_{[n]}\Omega_{[n]}\big).
\end{align*}

$\chi(\lambda_1, \lambda_2)$ is an affine function of $(\lambda_1, \lambda_2)$. Moreover, extreme points of the convex set 
$$\{(\lambda_1, \lambda_2)\,|\, \lambda_1, \lambda_2\in[0,1]\, \&\, \lambda_1+\lambda_2\geq 1\},$$
are $(1, 0), (0,1)$ and $(1, 1)$. So it suffices to prove our claim for $(\lambda_1, \lambda_2)\in \{(1, 0), (0,1), (1, 1)\}$. The proof for $(\lambda_1, \lambda_2)=(0,1)$ is similar to that of $(\lambda_1, \lambda_2)=(1,0)$. So it suffices to prove $\chi(\lambda_1, \lambda_2)\geq 0$ when
$$\lambda_1=1, \quad \qquad \lambda_2\in\{0,1\}.$$

Using the chain rule we may write $\chi(1, \lambda_2)= \chi_A(1) + \chi_B(\lambda_2)$ where 
\begin{align*}
&\chi_A(1):=-\sum_{i=1}^n\bigg[ I\big(U;\wX_{i}\wPi_i|\wT_i\big)+I\big(U;A_{i};S^e_{i}\big|T_i^e\big) +I\big(U;A_i\big|\,T^e_{i} S^e_i\big) \bigg]+ I\big(U;A_{[n]}X_{[n]}\Pi_{[n]}\big),
\\
&
\\
&\chi_B(\lambda_2):=-\sum_{i=1}^n\bigg[ \lambda_2I\big(U;\wY_{i}\wOmega_i|\wS_i\big)+\lambda_2 I\big(U;B_{i};T^e_{i}\big|\,S_i^e\big) + I\big(U;B_{i}\big|\,T^e_{i} A_i S^e_i\big)\bigg]  
+I\big(U;B_{[n]}Y_{[n]}\Omega_{[n]} \big|\, A_{[n]}X_{[n]}\Pi_{[n]}\big).
\end{align*}
We start with $\chi_{A}(1)$. 
\begin{align*}
\chi_A(1)=&-\sum_{i=1}^n\bigg[ H(\wX_i\wPi_i\big|\, \wT_i) - H(\wX_i \wPi_i\big|\, \wT_i U) + I(A_i; S_i^e\big|\, T_i^e) - I(A_i ; S_i^e\big|\, T_i^eU)     +H(A_i\big|\, T_i^eS_i^e) - H(A_i\big|\, T_i^eS_i^e U)
    \bigg]
\\& ~+ H(A_{[n]}X_{[n]}\Pi_{[n]}) - H(A_{[n]}X_{[n]}\Pi_{[n]}\big|\,U). 
\end{align*}
Then we can write $\chi_A(1)=\phi_A(1) + \psi_A(1)$ where 
\begin{align*}
\phi_A(1):=& -\sum_{i=1}^n \bigg[H(\wX_i\wPi_i\big|\, \wT_i)  
+ I(A_i; S_i^e\big|\, T_i^e) +H(A_i\big|\, T_i^eS_i^e)  \bigg] +H(A_{[n]}X_{[n]}\Pi_{[n]}), \\
&\\
\psi_A(1):=& -\sum_{i=1}^n\bigg[  - H(\wX_i \wPi_i\big|\, \wT_i U)   - I(A_i ; S_i^e\big|\, T_i^eU)   - H(A_i\big|\, T_i^eS_i^e U)\bigg]- H(A_{[n]}X_{[n]}\Pi_{[n]}\big|\,U).   
\end{align*}

Using the second equation of Lemma \ref{lemma:amnm} with $V=\emptyset$, we get that
\begin{align}H(A_{[n]}X_{[n]}\Pi_{[n]})=\sum_{i=1}^n \bigg[H(\wX_i \wPi_i| \wT_i) + H(A_i| T_i^e)  \bigg].\label{eq:tyab5sym}\end{align}
Putting in $\phi_A(1)$ we find that 
\begin{align*}
\phi_A(1)= \sum_{i=1}^n \bigg[-H(\wX_i\wPi_i\big|\, \wT_i)  - I(A_i; S_i^e\big|\, T_i^e)
-H(A_i\big|\, T_i^eS_i^e) +H(\wX_i \wPi_i| \wT_i) + H(A_i| T_i^e) \bigg]=0.
\end{align*}
We can similarly use the second equation of Lemma \ref{lemma:amnm} with $V=U$ to write
\begin{align*}
\psi_A(1)&= \sum_{i=1}^n\bigg[   H(\wX_i \wPi_i\big|\, \wT_i U)   + I(A_i ; S_i^e\big|\, T_i^eU)   + H(A_i\big|\, T_i^eS_i^e U)
  - H(\wX_i \wPi_i\big|\, \wT_i U) -   H(A_i\big|\, T_i^e U)\bigg]    =0.  
\end{align*}
Therefore $\chi_A(1)=0$, and we need to show that $\chi(1, \lambda_2)=\chi_B(\lambda_2)\geq 0$, where we had
 \begin{align*}
\chi_B(\lambda_2)=&-\sum_{i=1}^n\bigg[ \lambda_2I\big(U;\wY_{i}\wOmega_i|\wS_i\big)+\lambda_2 I\big(U;B_{i};T^e_{i}\big|\,S_i^e\big) + I\big(U;B_{i}\big|\,T^e_{i} A_i S^e_i\big)\bigg]  \\
&~+I\big(U;B_{[n]}Y_{[n]}\Omega_{[n]} \big|\, A_{[n]}X_{[n]}\Pi_{[n]}\big).
\end{align*}
Again by the chain rule we may write $\chi_B(\lambda_2)=\phi_B(\lambda_2) + \psi_B(\lambda_2)$, where
\begin{align*}
\phi_B(\lambda_2):=&-\sum_{i=1}^n\bigg[ \lambda_2H\big(\wY_{i}\wOmega_i|\wS_i\big)+\lambda_2 I\big(B_{i};T^e_{i}\big|\,S_i^e\big) 
+ H\big(B_{i}\big|\,T^e_{i} A_i S^e_i\big)\bigg]  \\
&~+H\big(B_{[n]}Y_{[n]}\Omega_{[n]} \big|\, A_{[n]}X_{[n]}\Pi_{[n]}\big).
&\\
\psi_B(\lambda_2):=&-\sum_{i=1}^n\bigg[-\lambda_2H\big(\wY_{i}\wOmega_i|\wS_i U \big)-\lambda_2 I\big(B_{i};T^e_{i}\big|\,S_i^e U\big) 
- H\big(B_{i}\big|\,T^e_{i} A_i S^e_iU\big)\bigg]  \\
&~-H\big(B_{[n]}Y_{[n]}\Omega_{[n]} \big|\, A_{[n]}X_{[n]}\Pi_{[n]}U\big).
\end{align*}
Now note that by Lemma~\ref{lem:dist-wiring} part (ii) we have $I\big(B_i; T_i^e\big|\, S_i^e\big)=0$. Moreover, we can similarly use the second equation of Lemma \ref{lemma:amnm} with $V= A_{[n]}X_{[n]}\Pi_{[n]}$ to write
\begin{align*}H\big(B_{[n]}Y_{[n]}\Omega_{[n]} \big|\, A_{[n]}X_{[n]}\Pi_{[n]}\big) = \sum_{i=1}^n\bigg[  H\big(\wY_i\wOmega_i\big|\, A_{[n]}X_{[n]}\Pi_{[n]} \wS_i\big) + H\big(B_i\big|\,A_{[n]}X_{[n]}\Pi_{[n]} S_i^e \big)    \bigg].
\end{align*}
Therefore,
\begin{align}
\phi_B(\lambda_2)&=\sum_{i=1}^n\bigg[ -\lambda_2H\big(\wY_{i}\wOmega_i|\wS_i\big)- H\big(B_{i}\big|\,T^e_{i} A_i S^e_i\big)  
  + H\big(\wY_i\wOmega_i\big|\, A_{[n]}X_{[n]}\Pi_{[n]} \wS_i\big) 
+ H\big(B_i\big|\,A_{[n]}X_{[n]}\Pi_{[n]} S_i^e \big) \bigg]\nonumber\\& \geq
\sum_{i=1}^n\bigg[ -\lambda_2H\big(\wY_{i}\wOmega_i|\wS_i\big)- H\big(B_{i}\big|\,T^e_{i} A_i S^e_i\big)  
 + H\big(\wY_i\wOmega_i\big|\, A_{[n]}X_{[n]}\Pi_{[n]} \wS_i\big) 
 + H\big(B_i\big|\,A_{[n]}X_{[n]}\Pi_{[n]} T^e_{i} A_i S^e_i \big) \bigg]\nonumber\\
&= \sum_{i=1}^n (1-\lambda_2) H\big( \wY_i\wOmega_i\big|\, \wS_i    \big)\nonumber\\
&\geq  \sum_{i=1}^n (1-\lambda_2) H\big( \wY_i\wOmega_i\big|\, \wS_i   U \big),\label{eq:diadoa}
\end{align}
where in the  third  line we use Lemma~\ref{lem:dist-wiring} parts (iii) and (iv).
We continue
\begin{align}
\psi_B(&\lambda_2)\geq \sum_{i=1}^n\bigg[\lambda_2H\big(\wY_{i}\wOmega_i|\wS_i U \big) + H\big(B_{i}\big|\,T^e_{i} A_i S^e_iU\big)\bigg]   
 -H\big(B_{[n]}Y_{[n]}\Omega_{[n]} \big|\, A_{[n]}X_{[n]}\Pi_{[n]}U\big)\nonumber\\
&= \sum_{i=1}^n\bigg[\lambda_2H\big(\wY_{i}\wOmega_i|\wS_i U \big) + H\big(B_{i}\big|\,T^e_{i} A_i S^e_iU\big)  
 -H\big(\wY_i\wOmega_i\big|\, A_{[n]}X_{[n]}\Pi_{[n]} \wS_iU\big)- H\big(B_i \big|\, A_{[n]}X_{[n]}\Pi_{[n]}S_i^eU  \big)  \bigg] \label{eqn:useoflemma} \\
&\geq  \sum_{i=1}^n\bigg[\lambda_2H\big(\wY_{i}\wOmega_i|\wS_i U \big)  -H\big(\wY_i\wOmega_i\big|\, A_{[n]}X_{[n]}\Pi_{[n]} \wS_iU\big)   \bigg]\nonumber \\
&\geq  \sum_{i=1}^n(\lambda_2-1)H\big(\wY_{i}\wOmega_i|\wS_i U \big),\label{eq:diadoa222}
\end{align}
where in \eqref{eqn:useoflemma} we use the second equation of Lemma \ref{lemma:amnm} with $V= A_{[n]}X_{[n]}\Pi_{[n]}U$.
Comparing~\eqref{eq:diadoa222} and~\eqref{eq:diadoa} we conclude that $\chi(1, \lambda_2)=\chi_B(\lambda_2) = \phi_B(\lambda_2)+ \psi_B(\lambda_2)\geq 0$. We are done.

\section{Direct proof of Theorem~\ref{thm:MC-ribbon-tensor-data}}\label{app:MC-ribbon-tensor-data}

We will use the following lemma in the proof.

\begin{lemma}\label{lem:abc-e-var}
Suppose that $f_{ABC}$ is a function and $p_{ABC}=p_{AB}\cdot p_{C|A}$, i.e., $B$ and $C$ are independent conditioned on $A$. Then we have
$$\E_{AB}\Var_{C|AB}[f] \geq \E_A \Var_{C|A} \E_{B|AC}[f].$$
\end{lemma}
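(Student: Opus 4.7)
The plan is to condition on $A = a$ first. The conditional independence $B \perp C \mid A$ (built into the hypothesis $p_{ABC} = p_{AB}\cdot p_{C|A}$) means $p_{C|A=a, B} = p_{C|a}$, so on the left-hand side the inner variance $\Var_{C|AB}[f]$ evaluated at $(A, B) = (a, b)$ equals the variance of $f(a, b, C)$ with $C \sim p_{C|a}$ and $(a, b)$ frozen inside $f$. Likewise $\E_{B|AC}[f] = \E_{B|A}[f]$ is a function $g(A, C)$ depending only on $A$ and $C$. Since the outer $\E_A$ is the same on both sides, it suffices, for each fixed $a$, to show that whenever $B, C$ are independent and $h(b, c) := f(a, b, c)$,
\begin{align*}
\E_B \Var_C[h(B, C)] \;\geq\; \Var_C\bigl[\E_B h(B, C)\bigr].
\end{align*}

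For this one-shot inequality, I would use the orthogonal decomposition $h(B, C) = g(C) + r(B, C)$ with $g(c) := \E_B h(B, c)$ and $r(b, c) := h(b, c) - g(c)$; by construction $\E_B[r(B, c)] = 0$ for every $c$. For each fixed $b$,
\begin{align*}
\Var_C[h(b, C)] = \Var_C[g(C)] + \Var_C[r(b, C)] + 2\,\mathrm{Cov}_C\bigl[g(C),\, r(b, C)\bigr].
\end{align*}
Averaging over $B$ and using $\E_B[r(B, c)] = 0$ to kill both $\E_C[g(C)\,\E_B r(B, C)]$ and the product-of-means part of the covariance yields
\begin{align*}
\E_B \Var_C[h(B, C)] = \Var_C[g(C)] + \E_B \Var_C[r(B, C)] \;\geq\; \Var_C[g(C)],
\end{align*}
which is exactly the desired bound. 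An equivalent route would be to introduce an independent copy $C'$ of $C$, use $\Var[X] = \tfrac12 \E[(X-X')^2]$, and apply Jensen's inequality in the form $\E_B[(h(B,c) - h(B,c'))^2] \geq (g(c) - g(c'))^2$ before averaging over $(C, C')$.

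The only subtle point — and the main thing to keep straight — is parsing $\Var_{C|AB}[f]$: although under the independence hypothesis the conditional law $p_{C|A, B}$ collapses to $p_{C|A}$, the function $f$ still carries $b$ as a constant, so the inner variance remains a genuine function of the pair $(A, B)$ and not only of $A$. Once this is understood the inequality decouples over $a$ and the one-variable computation above finishes the proof. I do not expect a real technical obstacle; the statement is the natural ``law of total variance with one side conditional-free'' that the preceding independent-case discussion implicitly invokes.
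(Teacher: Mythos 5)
Your proof is correct and takes essentially the same approach as the paper: both condition on $A$ (using $B\perp C\mid A$ to identify $p_{C|AB}=p_{C|A}$ and $\E_{B|AC}=\E_{B|A}$), reducing to the one-shot bound $\E_B\Var_C[h]\ge\Var_C[\E_B h]$ for independent $B,C$. The paper proves this by applying Jensen's inequality for $t\mapsto t^2$ after a Fubini swap of $\E_B$ and $\E_C$, while you use the equivalent orthogonal decomposition $h=g+r$ with $\E_B[r(B,c)]=0$ — a cosmetic rather than substantive difference.
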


\begin{proof}
we compute
\begin{align*}
\E_{AB}\Var_{C|AB}[f] & = \E_{AB} \E_{C|AB}[(f-\E_{C|AB}[f])^2]\\
&=\E_{A}\E_{B|A} \E_{C|A}[(f-\E_{C|A}[f])^2]\\
& \geq \E_{A} \E_{C|A}[(\E_{B|A}f-\E_{BC|A}[f])^2] \\
& = \E_A \Var_{C|A} \E_{B|AC}[f],
\end{align*}
where in the third line we use the convexity of $t\mapsto t^2$.
\end{proof}

\noindent\emph{Proof of} (i): Let $(\lambda_1, \lambda_2)\in \mathfrak S(A_1,B_1)\cap\mathfrak S(A_2, B_2)$. Let $f_{A_1A_2, B_1B_2}$ be an arbitrary function. By the law of total variance we have
\begin{align}
\Var[f]&= \Var_{A_1B_1} \E_{A_2B_2|A_1B_1}[f] + \E_{A_1B_1}\Var_{A_2B_2|A_1B_1}[f]\nonumber\\
& \geq \lambda_1 \Var_{A_1} \E_{B_1|A_1}\E_{A_2B_2|A_1B_1}[f] + \lambda_2 \Var_{B_1}\E_{A_1|B_1}\E_{A_2B_2|A_1B_1}[f]\nonumber\\
&\quad~ +\E_{A_1B_1}\bigg(   \lambda_1 \Var_{A_2|A_1B_1}\E_{B_2|A_1A_2B_1}[f]    +\lambda_2 \Var_{B_2|A_1B_1}\E_{A_2|A_1B_1B_2}[f]    \bigg)\label{eqn:azxc1}\\
& \geq \lambda_1\bigg(   \Var_{A_1}\E_{A_2|A_1} \E_{B_1B_2|A_1A_2}[f] + \E_{A_1}\Var_{A_2|A_1} \E_{B_1|A_1} \E_{B_2|A_1A_2B_1}[f]     \bigg)\nonumber\\
& \quad+ \lambda_2\bigg(   \Var_{B_1}\E_{B_2|B_1} \E_{A_1A_2|B_1B_2}[f]  + \E_{B_1}\Var_{B_2|B_1} \E_{A_1|B_1} \E_{A_2|A_1B_1B_2}[f]     \bigg)\label{eqn:azxc2}\\
& = \lambda_1 \Var_{A_1A_2}\E_{B_1B_2|A_1A_2}[f] + \lambda_2\Var_{B_1B_2}\E_{A_1A_2|B_1B_2}[f].\label{eqn:azxc3}
\end{align}
Here \eqref{eqn:azxc1} follows from $(\lambda_1, \lambda_2)\in \mathfrak S(A_1,B_1)$ used for function $\E_{A_2B_2|A_1B_1}[f]$ on $\mathcal{A}_1\times\mathcal{B}_1$, and from $(\lambda_1, \lambda_2)\in\mathfrak S(A_2, B_2)$ used for function $f$ restricted on $\{(a_1, b_1)\}\times \mathcal{A}_2\times\mathcal{B}_2$ for any $(a_1, b_1)\in \mathcal A_1\times \mathcal B_1$. Note that in the latter case we use the fact that the conditional distribution $p_{A_2B_2|a_1b_1}$ is independent of $(a_1, b_1)$.
For~\eqref{eqn:azxc2} we use Lemma~\ref{lem:abc-e-var}. Finally,~\eqref{eqn:azxc3}  follows from the law of total variance.  We conclude that $\mathfrak S(A_1,B_1)\cap\mathfrak S(A_2, B_2)\subseteq \fS(A_1A_2, B_1B_2)$. For the inclusion in the other direction it suffices to consider those $f$ that are only a function of $A_iB_i$, $i=1, 2$.

\vspace{.15in}
\noindent\emph{Proof of} (ii):  Let $(\lambda_1, \lambda_2)\in \fS(A_1, B_1)$. Let $f_{A_2B_2}$ be an arbitrary function. Since $A_2$ and $B_2$ are independent conditioned on $A_1B_1$ and the MC~ribbon of independent random variables is the whole $[0,1]^2$, when we condition on 
$(A_1, B_1)=(a_1, b_1)$, the MC~ribbon of $(A_2, B_2)$ will include the pair $(\lambda_1, \lambda_2)$. Hence, for every $(A_1, B_1)=(a_1, b_1)$ we have 
\begin{align*}
\Var_{A_2B_2|a_1b_1}[f]\geq &\lambda_1  \Var_{A_2|a_1b_1}\E_{B_2|A_2, a_1b_1}[f] 
+ \lambda_2 \Var_{B_2|a_1b_1}\E_{A_2|B_2, a_1b_1}[f].
\end{align*}
Then by taking average over $A_1, B_1$ we have
\begin{align}
&\E_{A_1B_1}\Var_{A_2B_2|A_1B_1}[f]\geq \lambda_1 \E_{A_1B_1} \Var_{A_2|A_1B_1}\E_{B_2|A_1B_1A_2}[f] + \lambda_2 \E_{A_1B_1} \Var_{B_2|A_1B_1}\E_{A_2|A_1B_1B_2}[f].\label{eq:ineq-mc-r-1}
\end{align}

Define $\tilde f_{A_1B_1}:=\E_{A_2B_2|A_1B_1}[f]$. Then since $(\lambda_1, \lambda_2)\in \fS(A_1, B_1)$ we have
$$\Var[\tilde f] \geq \lambda_1 \Var_{A_1}\E_{B_1|A_1}[\tilde f] + \lambda_2 \Var_{B_1}\E_{A_1|B_1}[\tilde f],$$
which is equivalent to
\begin{align}
&\Var_{A_1B_1}\E_{A_2B_2|A_1B_1}[f] \geq \lambda_1 \Var_{A_1}\E_{B_1A_2B_2|A_1}[ f] + \lambda_2 \Var_{B_1}\E_{A_1A_2B_2|B_1}[ f].\label{eq:ineq-mc-r-2}
\end{align}
Summing~\eqref{eq:ineq-mc-r-1} and~\eqref{eq:ineq-mc-r-2} and using the law of total variance we obtain
\begin{align*}
\Var[f] & \geq \lambda_1\bigg(  \E_{A_1B_1} \Var_{A_2|A_1B_1}\E_{B_2|A_1B_1A_2}[f] 
+   \Var_{A_1}\E_{B_1A_2B_2|A_1}[ f]\bigg)\\
&\quad~+\lambda_2\bigg(  \E_{A_1B_1} \Var_{B_2|A_1B_1}\E_{A_2|A_1B_1B_2}[f]+   \Var_{B_1}\E_{A_1A_2B_2|B_1}[ f]     \bigg)\\
& \geq \lambda_1\bigg(  \E_{A_1} \Var_{A_2|A_1} \E_{B_1|A_1}\E_{B_2|A_1B_1A_2}[f] +   \Var_{A_1}\E_{A_2|A_1}\E_{B_1B_2|A_1A_2}[ f]\bigg)\\
& \quad~ +\lambda_2\bigg(  \E_{B_1} \Var_{B_2|B_1}\E_{A_1|B_1}\E_{A_2|A_1B_1B_2}[f]+   \Var_{B_1}\E_{B_2|B_1}\E_{A_1A_2|B_1B_2}[ f]     \bigg)\\
& = \lambda_1 \Var_{A_1A_2} \E_{B_1B_2|A_1A_2}[f] + \lambda_2 \Var_{B_1B_2}\E_{A_1A_2|B_1B_2}[f]\\
& \geq \lambda_1\Var_{A_2}\E_{A_1|A_2}\E_{B_1B_2|A_1A_2}[f] 
+\lambda_2 \Var_{B_2}\E_{B_1|B_2}\E_{A_1A_2|B_1B_2}[f]\\
& = \lambda_1\Var_{A_2}\E_{B_2|A_2}[f] +\lambda_2 \Var_{B_2}\E_{A_2|B_2}[f],
\end{align*}
where in the second line we use Lemma~\ref{lem:abc-e-var}, and in the last line we use the fact that $f$ is a function of $A_2, B_2$ only. Therefore, $(\lambda_1, \lambda_2)\in \fS(A_2, B_2)$.

\section{Proof of Theorem~\ref{thm:rho-MC-ribbon}}\label{app:max-correlation}
Let us first give a proof of~\eqref{eq:rho2-fa-4}.
In definition~\eqref{eq:max-correlatoin} of maximal correlation we may drop one of the two constraints $\E[f_A]=0$ and $\E[g_B]=0$, i.e., for instance we can write
\begin{align}\label{eq:max-correlatoin-def3}
\rho(A, B)=& \max \quad \E[f_Ag_B],
\\&\text{subject to: } \E[f_A]=0, \nonumber\\
& \qquad \qquad \quad \E[f_A^2]=\E[g_B^2]=1.\nonumber
\end{align}
This is because if $\E[f_A]=0$, for an arbitrary $g_B$ if we let $\tilde g_B:= g_B-\E[g_B]$, then $\E[\tilde g_B]=0$ and $\E[f_Ag_B]= \E[f_A\tilde g_B]$ as well as 
$\E[\tilde g_B^2] =\Var[g_B]\leq \E[g_B^2]$; Then we can scale $\tilde g_B$ to make $\E[\tilde g_B^2]$ to be one, while increasing $\E[f_A\tilde g_B]$.

Let us fix $f_A$ and try to maximize $\E[f_Ag_B]$ over all $g_B$ with $\E[g_B^2]=1$. By the Cauchy-Schwarz inequality we have 
\begin{align*}
\E[f_Ag_B]& = \E_B\E_{A|B}[f_Ag_B] \\
&=  \E_B[\E_{A|B}[f_A]g_B] \\
&\leq \E_{B}[(\E_{A|B}[f_A])^2]^{1/2} \cdot \E_B[g_B^2]^{1/2}\\
& =\E_{B}[(\E_{A|B}[f_A])^2]^{1/2}. 
\end{align*}
Moreover, letting $g_B=\alpha \E_{A|B}[f_A]$, for the appropriate choice of constant $\alpha$, the above upper bound is attained. As a result we have
\begin{align}
\rho^2(A, B)  = &\max  \quad\E_{B}[\E_{A|B}[f_A]^2]\label{eq:def-max-corr-5}\\
& \text{subject to: } \E[f_A]=0,\nonumber\\
&  \qquad \qquad \quad  \E[f_A^2]=1.\nonumber
\end{align}
We may rewrite the above optimization in terms of variance to remove the constraints.
\begin{align}\label{eq:def-max-corr-6}
\rho^2(A, B)  = &\max  \frac{\Var_{B}\E_{A|B}[f]}{\Var[f]}, 
\end{align}
where maximum is taken over all non-constant functions $f_A$.

We now give the proof of Theorem~\ref{thm:rho-MC-ribbon}.\\

\noindent \emph{Proof of Theorem~\ref{thm:rho-MC-ribbon}.} 
Let $(\lambda_1, \lambda_2)\in \fS(A, B)$ where $\lambda_2\neq 0$. By definition we have 
$$\Var[f] \geq \lambda_1 \Var_{A}[\E_{B|A}[f]] + \lambda_2 \Var_{B}[\E_{A|B}[f]].$$
Assuming that $f=f_A$ is a function of $A$ only, we find that $\Var[f]=\Var_{A}\E_{B|A}[f]$. Therefore,
$$\frac{1-\lambda_1}{\lambda_2} \Var[f]\geq \Var_{B}\E_{A|B}[f].$$
Comparing to~\eqref{eq:def-max-corr-6} we find that 
$$(1-\lambda_1)/\lambda_2\geq \rho^2(A, B)=:\rho^2.$$

For the other direction, let $\epsilon>0$ be a constant, and let $n$ be some integer. Define 
$$\lambda_1^{(n)} = 1-\frac{\rho^2+\epsilon}{n},\qquad\lambda_2^{(n)}=\frac{1}{n}.$$
We claim that for sufficiently large $n$, $(\lambda_1^{(n)}, \lambda_2^{(n)})$ is in $\fS(A, B)$. Otherwise there is a function $f_{AB}$ such that  
\begin{align}\label{eq:g-j-ineq}
\Var[f] <\lambda_1^{(n)} \Var_{A}\E_{B|A}[f] + \lambda_2^{(n)} \Var_{B}\E_{A|B}[f].
\end{align}
Note that $\Var[f]\neq 0$ because otherwise the right hand side would have been zero too which is in contradiction with the strict inequality.
Thus with no loss of generality we may assume that 
$$\Var[f]=1.$$
Using the law of total variance
$$\Var[f]=\Var_A\E_{B|A}[f] + \E_A\Var_{B|A}[f],$$
equation~\eqref{eq:g-j-ineq} can equivalently be written as
\begin{align}\label{eq:g-j-ineq-2}
\frac{1-\lambda_1^{(n)}}{\lambda_2^{(n)}} \Var_{A}\E_{B|A}[f] + \frac{1}{\lambda_2^{(n)}}\E_{A}\Var_{B|A}[f]< \Var_{B}\E_{A|B}[f].
\end{align}
We have $\Var_{B}\E_{A|B}[f]\leq \Var[f]=1$, and $1/\lambda_2^{(n)}=n$. Therefore,
\begin{align}\label{eq:x2gj-3}
\E_{A}\Var_{B|A}[f] <1/n.
\end{align}
Let us define
$\tilde f=\E_{B|A}[f]$. Observe that $\tilde f$ is a function of $A$ only, $\E[\tilde f]=\E[f]$, and $ \Var_{A}\E_{B|A}[f] =\Var[\tilde f]$. Moreover,~\eqref{eq:x2gj-3} is equivalent to 
$$\E[(f-\tilde f)^2]<1/n.$$
Thus from~\eqref{eq:g-j-ineq-2} and using the fact that $(1-\lambda_1^{(n)})/\lambda_2^{(n)}=\rho^2+\epsilon$ we have
\begin{align*}
(\rho^2+\epsilon)\Var[\tilde f] & < \Var_{B}\E_{A|B}[f]\\
& =  \Var_{B}\E_{A|B}[(f-\tilde f)+ \tilde f]\\
& = \E_{B}\big[   \big(   \E_{A|B}[f-\tilde f] + \E_{A|B}[\tilde f]  -\E[\tilde f]     \big)^2   \big]\\
& = \E_{B}\bigg[   \big(   \E_{A|B}[f-\tilde f]\big)^2 + \big(\E_{A|B}[\tilde f]  -\E[\tilde f] \big)^2 +   2 \big(\E_{A|B}[f-\tilde f]\big)\big(\E_{A|B}[\tilde f]  -\E[\tilde f] \big) \bigg]\\
& \leq \E[(f-\tilde f)^2] + \Var_{B}\E_{A|B}[\tilde f] + 2 \bigg( \E_{B}\big[\big(\E_{A|B}[f-\tilde f]\big)^2\big]\cdot \E_{B}\big[\big(\E_{A|B}[\tilde f]  -\E[\tilde f] \big)^2\big]     \bigg)^{1/2}\\&
\leq \frac{1}{n} +\Var_{B}\E_{A|B}[\tilde f]  + 2 \bigg( \E_{B}\E_{A|B}\big[\big([f-\tilde f]\big)^2\big]\cdot \E_{B}\big[\big(\E_{A|B}[\tilde f]  -\E[\tilde f] \big)^2\big]     \bigg)^{1/2}\\
& \leq \frac{1}{n} +\Var_{B}\E_{A|B}[\tilde f]  + 2\bigg(\frac{1}{n} \Var_{B}\E_{A|B}[\tilde f] \bigg)^{1/2}\\
& \leq \Var_{B}\E_{A|B}[\tilde f]  + \frac{3}{\sqrt{n}},
\end{align*}
where in the last line we use 
$$\Var_{B}\E_{A|B}[\tilde f]\leq \Var[\tilde f] = \Var_{A}\E_{B|A}[f] \leq \Var[f]=1.$$
We also notice that $\tilde f$ is not constant because using~\eqref{eq:x2gj-3} we have
$$\Var[\tilde f] =  \Var_{A}\E_{B|A}[f] = \Var[f] - \E_{A}\Var_{B|A}[f]  >1-\frac{1}{n}.$$
Therefore, using~\eqref{eq:def-max-corr-6} we have
$$\rho^2+\epsilon \leq \frac{\Var_{B}\E_{A|B}[\tilde f]}{\Var[\tilde f]} + \frac{3}{\Var[\tilde f]\sqrt{n}}\leq \rho^2 + \frac{3}{(1-1/n)\sqrt{n}},$$
which does not hold for sufficiently large $n$. We conclude that for sufficiently large $n$ the point $(\lambda_1^{(n)}, \lambda_2^{(n)})$  belongs to $\fS(A, B)$. As a result, we have
$$\inf \frac{1-\lambda_1}{\lambda_2}\leq \rho^2+\epsilon,$$
for every $\epsilon>0$. Then 
$$\inf \frac{1-\lambda_1}{\lambda_2}\leq \rho^2.$$
We are done.
\hfill$\Box$


\section{Proof of Theorem~\ref{thm:inf-rho-mc-ribbon-NS}}\label{app:inf-rho-mc-ribbon-NS}

As shown in the proof of Theorem~\ref{thm:rho-MC-ribbon} for every $x, y$ and $(\lambda_1, \lambda_2)\in \fS(A, B|X=x, Y=y)$ we have
$(1-\lambda_1)/\lambda_2\geq \rho^2(A, B|X=x, Y=y).$
Therefore, for  $(\lambda_1, \lambda_2)\in \fS(A, B|X,Y)$ we have
$$\frac{1-\lambda_1}{\lambda_2}\geq \rho^2(A, B|X=x, Y=y),$$
for every $x, y$. Taking the maximum of the right hand side over all $x, y$ we obtain
$$\inf \frac{1-\lambda_1}{\lambda_2} \geq \rho^2(A, B|XY),$$
where the infimum is taken over $(\lambda_1, \lambda_2)\in \fS(A, B|X, Y)$ with $\lambda_2\neq 0$.

For the other direction, recall that in the proof of Theorem~\ref{thm:rho-MC-ribbon} we show that $(\lambda_1^{(n)}, \lambda_2^{(n)})$  defined by
$$\lambda_1^{(n)}=1-\frac{\rho^2(A, B|X=x, Y=y) +\epsilon}{n} , \qquad \lambda_2^{(n)}=\frac{1}{n},$$
for a given $\epsilon>0$, belongs to $\fS(A, B|X=x, Y=y)$ for sufficiently large $n$. Since 
$\rho^2(A, B|X, Y)\geq \rho^2(A, B|X=x, Y=y)$ we find that $(\tilde \lambda_1^{(n)}, \tilde \lambda_2^{(n)})$ defined by
$$\tilde \lambda_1^{(n)}=1-\frac{\rho^2(A, B|X, Y) +\epsilon}{n} , \qquad \tilde \lambda_2^{(n)}=\frac{1}{n},$$
belongs to $\fS(A, B|X=x, Y=y)$ for sufficiently large $n$ too. As a result, $(\tilde \lambda_1^{(n)}, \tilde \lambda_2^{(n)})$ belongs to 
$\fS(A, B|X, Y)$ for sufficiently large $n$. We conclude that 
$$\inf \frac{1-\lambda_1}{\lambda_2} \leq \frac{1-\tilde \lambda_1^{(n)}}{\tilde \lambda_2^{(n)}}=\rho^2(A, B|XY)+\epsilon,$$
for every $\epsilon>0$. We are done.

\section{Proof of Theorem~\ref{thm:MC-HC-ribbon}}\label{app:MC-HC-ribbon-proof}
Given (ii) the proof of (iii) is immediate; We only need to put $K=0$.
 
Let us denote the set of pairs $(\lambda_1, \lambda_2)$ described in parts (i) and (ii) of the theorem by $\fS_{\text{\rm i}}(A, B)$ and $\fS_{\text{\rm ii}}(A, B)$ respectively. 
We need to show $\fS(A,B) = \fS_{\text{\rm i}}(A, B)=\fS_{\text{\rm ii}}(A, B)$.

For a bipartite distribution $p_{AB}$ we let $\supp(p_{AB})\subseteq \mathcal A\times \mathcal B$ be the set of pairs $(a, b)$ such that $p(ab)\neq 0$. We also let $\mathcal W(p_{AB})$ be the set of distributions $q_{AB}$ with $\supp(q_{AB})=\supp(p_{AB})$. Note that
perturbations of the form~\eqref{sweepsupport} sweep the neighborhood of $p_{AB}$ in $\mathcal W(p_{AB})$.
In the following we also use the notation $q_{AB|U}\in \mathcal W(p_{AB})$ for some distribution $q_{ABU}$ by which we mean that for every $u\in \mathcal U$ the conditional distribution $q_{AB|U=u}$ is in $\mathcal W(p_{AB})$.  

Observe that $\Upsilon: \mathcal W(p_{AB})\rightarrow \mathbb R$ is a smooth function. So for $q_{AB}\in \mathcal W(p_{AB})$ letting $v_{AB}=q_{AB}-p_{AB}$ we may write 
$$\Upsilon(q_{AB}) = \Upsilon(p_{AB}) +  D^{(1)}_v(p_{AB}) + \frac{1}{2}D^{(2)}_v(p_{AB})  + O(\|v\|_1^3),$$
where $D^{(1)}_v(p_{AB})$ and $D^{(2)}_v(p_{AB})$ are respectively the first and second derivatives of $\Upsilon$ at $p_{AB}$ in the direction of $v$.  Observe that $D^{(1)}_v(p_{AB})$ and $D^{(2)}_v(p_{AB})$ are not infinity\footnote{The derivative of entropy function is infinity only when we change the distribution by making a non-zero probability equal to zero, or vice versa.} since $q_{AB}\in \mathcal W(p_{AB})$.  

 In the following we will show that $\fS_{\text{\rm i}}(A, B)\subseteq \fS(A, B)\subseteq \fS_{\text{\rm ii}}(A, B)\subseteq \fS_{\text{\rm i}}(A, B)$ which finishes the proof.  

\vspace{.15in}
\noindent
\emph{Proof of $\fS_{\text{\rm ii}}(A, B)\subseteq \fS_{\text{\rm i}}(A, B)$}:
From the definitions it is clear that $\fS_{\text{\rm ii}}(A, B)$ is more restrictive than $\fS_{\text{\rm i}}(A, B)$. We only need to note that for $f_{AB}$ with $\E[f]=0$ and $\Var[f]=\E[f^2]=1$, and for every $U_\epsilon=u$ we have
$$\|p_{AB|u}  - p_{AB} \|_1^3= \epsilon^3\|p_{AB}\cdot f_{AB}\|_1^3 = O(\epsilon^3),$$
where in the last step we use $f(ab)=O(1)$ for every $a, b$, which is implied by $\E[f^2]=1$.

\vspace{.15in}
\noindent
\emph{Proof of $\fS(A, B)\subseteq \fS_{\text{\rm ii}}(A, B)$}:
Observe that in the definition of $\fS_{\text{\rm ii}}(A, B)$, without loss of generality, we can restrict ourselves to $p_{AB|U}\in \mathcal W(p_{AB})$. In other words, if we have the inequality for $p_{AB|U}\in \mathcal W(p_{AB})$, we will have it for all $p_{AB|U}$ by using a continuity argument and approaching $p_{AB|U}$ with elements of $\mathcal W(p_{AB})$.

Take some $p_{U|AB}$ with $p_{AB|U}\in \mathcal W(p_{AB})$. For every $U=u$, let
$$v_{AB|U=u}=p_{AB|U=u}-p_{AB}.$$ 
Further, let $v_{AB|U}$ be the random vector that is a function of $U$ and takes the vector $v_{AB|U=u}$ when $U=u$. 
we have
\begin{align*}
\E_U[\Upsilon(p_{AB|U})]\nonumber & = \Upsilon(p_{AB}) + \E_U\big[D^{(1)}_{v_{AB|U}}(p_{AB})\big] + \frac{1}{2}\E_U\big[D^{(2)}_{v_{AB|U}}(p_{AB})\big] + O(\E_U[\|v_{AB|U}\|_1^3])\\
& = \Upsilon(p_{AB}) + \frac{1}{2}\E_U\big[D^{(2)}_{v_{AB|U}}(p_{AB})\big] + O(\E_U[\|v_{AB|U}\|_1^3]),
\end{align*}
where in the second line we use $\E_U[v_{AB|U}] = \E_U[p_{AB|U}]- p_{AB}=0$, and that $D^{(1)}_{v}(p_{AB})$ is linear in $v$.
Thus we have
\begin{align}
I(U; AB) -\lambda_1I(U; A) - \lambda_2 I(U; B) & = \E_U[\Upsilon(p_{AB|U})] - \Upsilon(p_{AB})\nonumber\\
&= \frac{1}{2}\E_U\big[D^{(2)}_{v_{AB|U}}(p_{AB})\big] + O(\E_U[\|v_{AB|U}\|_1^3]).\label{eq:mutual-inf-hessian}
\end{align}

Now suppose that $(\lambda_1, \lambda_2)\in \fS(A, B)$. This implies that $D^{(2)}_v(p_{AB})\geq 0$. Then using~\eqref{eq:mutual-inf-hessian}, for every $p_{U|AB}$ we have
\begin{align*}I(U; AB) -\lambda_1I(U; A) - \lambda_2 I(U; B) + O(\E_U[\|p_{AB|U}-p_{AB}\|_1^3])\geq 0.\end{align*}

\vspace{.15in}
\noindent
\emph{Proof of $\fS_{\text{\rm i}}(A, B)\subseteq \fS(A, B)$}:
Now suppose that $(\lambda_1, \lambda_2)\in \fS_{\text{\rm i}}(A, B)$. 
We may write~\eqref{eq:mutual-inf-hessian} for the particular distribution $p_{ABU_\epsilon}$ defined in the theorem.
For this distribution we have 
$$v_{AB|U_\epsilon=u}=p_{AB|U_\epsilon=u}-p_{AB} = \epsilon u p_{AB} \cdot f_{AB}.$$ 
Therefore,  
$$\frac{\epsilon^2}{2}D^{(2)}_{p_{AB}\cdot f_{AB}}(p_{AB}) + O(\epsilon^3\|f\|_1^3)\geq 0.$$
Since this inequality should hold in a neighborhood of $\epsilon=0$, we must have $D^{(2)}_{p_{AB}\cdot f_{AB}}(p_{AB})\geq 0$. As mentioned in Section~\ref{subsec:MC-HC-ribbons} we have
\begin{align}
D^{(2)}_{p_{AB}\cdot f_{AB}}(p_{AB}) = \E[f^2] -\lambda_1 \E_A[(\E_{B|A}[f])^2)] -\lambda_2 \E_B[(\E_{A|B}[f])^2\geq 0.\label{eq:hessian-1}
\end{align}
Thus using Lemma~\ref{lem:def-MC-ribbon-zero} we conclude that $(\lambda_1, \lambda_2)\in \fS(A, B)$. Therefore, $\fS_{\text{\rm i}}(A, B)\subseteq \fS(A, B)$.

\section{Proof of Lemma~\ref{lem:ribbon-rho-CHSH}}\label{app:rho-chsh}

Any no-signaling box with binary inputs and outputs is determined by eight parameters. Indeed, we may write
\begin{align*}
q(a, b| x,y)=\frac{1}{4}(1+(-1)^a\alpha_x+(-1)^b\beta_y + (-1)^{a+ b}\zeta_{xy})
\end{align*}
Then 
$$q(a|x,y)=\frac{1}{2}(1+(-1)^a\alpha_x), \qquad q(b|x,y)=\frac{1}{2}(1+(-1)^b\beta_y),$$
and $q(a, b|x,y)$ is no-signaling. The fact that $q(a, b| x,y)$'s are non-negative is equivalent to
$$1-|\alpha_x - \beta_y| \geq \zeta_{xy} \geq |\alpha_x+ \beta_y|-1,$$
for all $x, y$. In particular we have $|\alpha_x|, |\beta_y|\leq 1$.

The maximal correlation of a bipartite random variable with binary parts can be found in~\cite[Proof of Lemma 7]{Beigi12}. Indeed for every $x, y$ we have
\begin{align*}
\rho(q(a, b| x,y)) = \frac{|\zeta_{xy}-\alpha_x\beta_y|}{\sqrt{(1-\alpha_x^2)(1-\beta_y^2)}}\,,
\end{align*}
where we put $\frac{0}{0}=0$.
Then
\begin{align}\label{eq:rho-f23}
\rho(q)=\max_{x, y} \frac{|\zeta_{xy}-\alpha_x\beta_y|}{\sqrt{(1-\alpha_x^2)(1-\beta_y^2)}}.
\end{align}

On the other hand it is not hard to see that
\begin{align}\label{eq:chsh-q-zeta}
\CHSH(q)= \frac{1}{4}\sum_{x,y} \frac{1+(-1)^{xy}\zeta_{x,y}}{2}\geq \frac{1+\eta}{2}.
\end{align}
Next, we show that for every $\alpha_x, \beta_y, \zeta_{xy}$ with the above conditions, equation~\eqref{eq:chsh-q-zeta} implies that $\rho(q)$ given by~\eqref{eq:rho-f23} is at least $\eta$.

We need to show that if $\eta\geq 1/\sqrt 2$,
\begin{align}\label{eq:range-alpha-beta}
1-|\alpha_x - \beta_y| \geq \zeta_{xy} \geq |\alpha_x+ \beta_y|-1,\qquad \qquad \forall x, y,
\end{align}
and 
\begin{align}\label{eq:sum-zeta-eta}
\sum_{x, y} (-1)^{xy}\zeta_{xy}\geq 4\eta,
\end{align}
then 
\begin{align}\max_{x, y} \frac{|\zeta_{xy}-\alpha_x\beta_y|}{\sqrt{(1-\alpha_x^2)(1-\beta_y^2)}}\geq \eta.\label{eq:maxrho-1}\end{align}

Observe that if $|\alpha_x|=1$, for some $x$, then from~\eqref{eq:range-alpha-beta} we have $\zeta_{xy}=\alpha_x\beta_y$ for all $y$. This holds because $|\alpha_x|=1$ implies that  $1-|\alpha_x - \beta_y|=|\alpha_x+ \beta_y|-1=\alpha_x\beta_y$ in this case.  Similarly if $|\beta_y|=1$ for some $y$, then $\zeta_{xy}=\alpha_x\beta_y$ for all $x$. As a result, if for all pairs $(x, y)$ we have either $|\alpha_x|=1$ or $|\beta_y|=1$, then $\zeta_{xy}=\alpha_x\beta_y$ for all $x,y$. In this case
 by~\eqref{eq:sum-zeta-eta} we have
$$2{\sqrt 2}\leq 4\eta\leq \sum_{x, y}(-1)^{xy}\alpha_x \beta_y,$$
which is a contradiction since by Bell's inequality we know that left hand side is at most $2$ (note that $|\alpha_x|, |\beta_y|\leq 1$).
Thus in the following we assume that for at least one pair of $(x, y)$ we have $|\alpha_x|\neq 1\neq |\beta_y|$.

To get a contradiction suppose that 
$$\frac{|\zeta_{xy}-\alpha_x\beta_y|}{\sqrt{(1-\alpha_x^2)(1-\beta_y^2)}}< \eta, \qquad \quad \forall x,y.$$
Then $(-1)^{xy}(\zeta_{xy}-\alpha_x\beta_y) \leq \eta\sqrt{(1-\alpha_x^2)(1-\beta_y^2)}$ for all $x, y$; further this inequality is strict for the pairs $(x, y)$ with $|\alpha_x|\neq 1\neq |\beta_y|$. Therefore by the above discussion we have
$$\sum_{x, y} (-1)^{xy}\zeta_{xy}< \sum_{x, y} \bigg[(-1)^{xy}\alpha_x\beta_y + \eta\sqrt{(1-\alpha_x^2)(1-\beta_y^2)}\bigg].$$
Comparing with~\eqref{eq:sum-zeta-eta}, we conclude that
\begin{align}\label{eq:4eta-inner1}
4\eta< \sum_{x, y} \bigg[(-1)^{xy}\alpha_x\beta_y + \eta\sqrt{(1-\alpha_x^2)(1-\beta_y^2)}\bigg].
\end{align}
Let us define 
$$v_{x} = \begin{bmatrix}
\alpha_x\\\sqrt{1-\alpha_x^2}
\end{bmatrix},  w_{y} = \begin{bmatrix}
\beta_y\\ \sqrt{1-\beta_y^2}
\end{bmatrix},   M_{xy}=\begin{bmatrix}
(-1)^{xy} & 0\\ 0 & \eta
\end{bmatrix}.$$
Also define
$$\tilde v=\begin{bmatrix}
v_0\\ v_1
\end{bmatrix}, \qquad  \tilde w=\begin{bmatrix}
w_0\\ w_1
\end{bmatrix},\qquad   \widetilde M=\begin{bmatrix}
M_{00}& M_{01}\\
M_{10} & M_{11}
\end{bmatrix}.$$
Then \eqref{eq:4eta-inner1} is equivalent to 
$$4\eta < \tilde v^t \widetilde M\tilde w.$$
Using the fact that $\|\tilde v\| = \|\tilde w\|=\sqrt 2$, this means that $\|\widetilde M\|>2\eta$. We however have
$$\|\widetilde M\|=\max\{\sqrt 2, 2\eta\}.$$
Then we should have $\sqrt 2>2\eta$ which is in contradiction with $\eta\geq 1/\sqrt 2$. We are done.



\end{document}